\crefname{observation}{Observation}{Observations}
\Crefname{observation}{Observation}{Observations}
\DeclareMathOperator{\vimw}{vimw}
\DeclareMathOperator{\poly}{poly}
\DeclareMathOperator{\inc}{inc}
\DeclareMathOperator{\appears}{appears}
\DeclareMathOperator{\equal}{equal}
\DeclareMathOperator{\lessthan}{lessthan}
\DeclareMathOperator{\ispath}{ispath}
\DeclareMathOperator{\degone}{degone}
\DeclareMathOperator{\degtwo}{degtwo}
\DeclareMathOperator{\conn}{conn}
\DeclareMathOperator{\paths}{paths}
\DeclareMathOperator{\completions}{completions}
\DeclareMathOperator{\sample}{Sample}
\DeclareMathOperator{\sol}{Sol}
\newcommand\sigmageneric[1]{\sigma^{(\star)}_{#1}}
\newcommand{\Paths}{\textsc{\#Path}\xspace}
\newcommand{\TemporalPaths}{\textsc{\#Temporal Path}\xspace}
\newcommand{\STemporalPaths}{\textsc{\#Short Temporal Path}\xspace}
\newcommand{\MTemporalPaths}{\textsc{\#Multicoloured Temporal Path}\xspace}
\newcommand{\problemdef}[3]{
	\begin{center}\fbox{
	\begin{minipage}{0.95\textwidth}
		\noindent
		#1
		\vspace{5pt}\\
		\setlength{\tabcolsep}{3pt}
		\begin{tabularx}{\textwidth}{@{}lX@{}}
			\textrm{Input:}     & #2 \\
			\textrm{Task:}  & #3
		\end{tabularx}
	\end{minipage}}
	\end{center}
}
\newcommand{\commentout}[1]{}
\pgfplotsset{compat=1.10}
\tikzset{
	vert/.style={circle,inner sep=1.5,fill=white,draw,minimum size=.3cm},
	edge/.style={color=black, thick},
	diredge/.style={->,>={Stealth[width=8pt,length=8pt]},color=black, thick},
	timelabel/.style={fill=white,font=\footnotesize, text centered}
}
\title{Counting Temporal Paths}
\author{Jessica Enright}{School of Computing Science, University of Glasgow, UK}{jessica.enright@glasgow.ac.uk}{}{Supported by EPSRC grant EP/T004878/1.}%TODO mandatory, please use full name; only 1 author per \author macro; first two parameters are mandatory, other parameters can be empty. Please provide at least the name of the affiliation and the country. The full address is optional. Use additional curly braces to indicate the correct name splitting when the last name consists of multiple name parts.
\author{Kitty Meeks}{School of Computing Science, University of Glasgow, UK}{kitty.meeks@glasgow.ac.uk}{https://orcid.org/0000-0001-5299-3073}{Supported by EPSRC grants EP/T004878/1 and EP/V032305/1.}
\author{Hendrik~Molter}{Department of Computer Science and Department of Industrial Engineering and Management, Ben-Gurion~University~of~the~Negev, 
Beer-Sheva, 
Israel}{molterh@post.bgu.ac.il}{https://orcid.org/0000-0002-4590-798X}{Supported by the ISF, grants No.~1456/18 and No.~1070/20, and European Research Council, grant number 949707.}
\authorrunning{Jessica Enright, Kitty Meeks, and Hendrik Molter} %TODO mandatory. First: Use abbreviated first/middle names. Second (only in severe cases): Use first author plus 'et al.'
\keywords{Temporal Paths, Temporal Graphs, Parameterised Counting, Approximate Counting, \#P-hard Counting Problems, Temporal Betweenness Centrality} %TODO mandatory; please add comma-separated list of keywords
\begin{document}

\maketitle

\begin{abstract}
The betweenness centrality of a vertex $v$ is an important centrality measure that quantifies how many optimal paths between pairs of other vertices visit $v$.  Computing betweenness centrality in a temporal graph, in which the edge set may change over discrete timesteps, requires us to count temporal paths that are optimal with respect to some criterion.  For several natural notions of optimality, including \emph{foremost} or \emph{fastest} temporal paths, this counting problem reduces to \TemporalPaths, the problem of counting \emph{all} temporal paths between a fixed pair of vertices; like the problems of counting foremost and fastest temporal paths, \TemporalPaths is \#P-hard in general.  Motivated by the many applications of this intractable problem, we initiate a systematic study of the parameterised and approximation complexity of \TemporalPaths.  We show that the problem presumably does not admit an FPT-algorithm for the feedback vertex number of the static underlying graph, and that it is hard to approximate in general.  On the positive side, we prove several exact and approximate FPT-algorithms for special cases.
\end{abstract}

\section{Introduction}
Computing a (shortest) path between two vertices in a graph is one of the most important tasks in algorithmic graph theory and serves as a subroutine in a wide variety of algorithms for connectivity-related graph problems. 
The \emph{betweenness centrality} measure for vertices in a graph was introduced by \citet{freeman_set_1977} and motivates the task of 
 \emph{counting} shortest paths in a graph. Intuitively, betweenness centrality measures the importance of a vertex for information flow under the assumption that information travels along optimal (i.e.\ shortest) paths. More formally, the betweenness of a vertex~$v$ is based on the ratio of the number of shortest paths between vertex pairs that visit~$v$ as an intermediate vertex and the total number of shortest paths, thus its computation is closely related to shortest path counting.
The betweenness centrality is a commonly used tool in network analysis and it can be computed in polynomial time; e.g.\ Brandes' algorithm~\cite{brandes_faster_2001} serves as a blueprint for all modern betweenness computation algorithms and implicitly also counts shortest paths.

In contrast to the tractability of counting shortest paths, the problem of counting \emph{all} paths between two vertices in a graph is one of the classic problems discussed in the seminal paper by~\citet{valiant_complexity_1979} that is complete for the complexity class \#P (the counting analogue of NP) and hence is presumably not doable in polynomial time.

\emph{Temporal} graphs are a natural generalisation of graphs that capture dynamic changes over time in the edge set. They have a fixed vertex set and a set of \emph{time-edges} which have integer time labels indicating at which time(s) they are active.
In recent years, the research field of studying algorithmic problems on temporal graphs has steadily grown~\cite{Hol15,HS19,LVM18,Mic16}. In particular, an additional layer of complexity is added to connectivity related problems in the temporal setting. Paths in temporal graphs have to respect time, that is, a \emph{temporal path} has to traverse time-edges with non-decreasing time labels~\cite{KKK02}\footnote{Temporal paths that traverse time-edge with non-decreasing time labels are often referred to as ``non-strict'', in contrast to \emph{strict} temporal paths, which traverse time-edges with increasing time labels. In this work, we focus on non-strict temporal paths.}. This implies that temporal connectivity is generally not symmetric and not transitive, a major difference from the non-temporal case. Furthermore, there are several natural optimality concepts for temporal paths, the most important being \emph{shortest}, \emph{foremost}, and \emph{fastest} temporal paths~\cite{xuan_computing_2003}. Intuitively speaking, shortest temporal paths use a minimum number of time-edges, foremost temporal paths arrive as early as possible, and fastest temporal paths have a minimum difference between start and arrival times. We remark that an optimal path with respect to any of these three criteria can be found in polynomial time~\cite{xuan_computing_2003,wu_efficient_2016}.
%We remark that all three mentioned optimal temporal paths can be computed in polynomial time~\cite{xuan_computing_2003,wu_efficient_2016}. 
The existence of multiple natural optimality concepts for temporal paths implies several natural definitions of temporal betweenness, one for each path optimality concept~\cite{RymarMNN21,BMNR20,LVM18}.

Similar to the non-temporal case, the ability to \emph{count} optimal temporal paths is a key ingredient for the corresponding temporal betweenness computation. However, the picture is more complex in the temporal setting. 
Shortest temporal paths can be counted in polynomial time and the corresponding temporal betweenness can be computed in polynomial time~\cite{BMNR20,kim_temporal_2012,habiba2007betweenness,RymarMNN21}.
In contrast, counting foremost or fastest temporal paths is \#P-hard~\cite{rad2017computation,BMNR20,MO18}, which implies that computing the corresponding temporal betweenness is \#P-hard as well~\cite{BMNR20}. Indeed, \citet{BMNR20} show that there is a polynomial time reduction from the problem of counting foremost or fastest temporal paths to the problem of the corresponding temporal betweenness computation. Note that a reduction in the other direction is straightforward.

In this work, we study the (parameterised) computational complexity of (approximately) counting foremost or fastest temporal paths. In fact, we study the simpler and arguably more natural problem of counting all temporal paths from a start vertex $s$ to a destination vertex $z$ in a temporal graph. 

Let $\mathcal{G}=(V,\mathcal{E},T)$ denote a temporal graph with vertex set $V$, time-edge set $\mathcal{E}$, and maximum time label (or lifetime)~$T$ (formal definitions are given in \cref{sec:prelims}).  We are then concerned with the following computational problem:

\problemdef{\TemporalPaths}{A temporal graph $\mathcal{G}=(V,\mathcal{E},T)$ and two vertices $s,z\in V$.}{Count the temporal $(s,z)$-paths in $\mathcal{G}$.}

It is easy to see that \TemporalPaths generalises the problem of counting paths in a non-temporal graph (all time-edges have the same time label), hence we deduce that \TemporalPaths is \#P-hard.
Furthermore, observe that using an algorithm for \TemporalPaths, it is possible to count foremost or fastest temporal paths with only polynomial overhead in the running time; we discuss this reduction in more detail in \cref{sec:countvsbetweenness}. Hence, all exact algorithms we develop for \TemporalPaths can be used to compute the temporal betweenness based on foremost or fastest temporal paths with polynomial overhead in the running time. To the best of our knowledge, this is the first attempt to systematically study the parameterised complexity and approximability of \TemporalPaths.

\subsection{Related Work}
As discussed above, the temporal setting adds a new dimension to connectivity-related problems. The problems of computing shortest, foremost, and fastest temporal paths have been studied thoroughly~\cite{xuan_computing_2003,wu_efficient_2016,BHNN20}. The temporal setting also offers room for new natural temporal path variants that do not have an analogue in the non-temporal setting. \citet{CHMZ21} study the problem of finding \emph{restless temporal paths} that dwell an upper-bounded number of time steps in each vertex, while \citet{FMNR22} study the problem of finding  \emph{delay-robust routes} in a temporal graph (intuitively, temporal paths that are robust with respect to edge delays); both problems turn out to be NP-hard.
%and show among other things that finding such paths is NP-hard. \citet{FMNR22} study the problem of finding so-called \emph{delay-robust routes} in a temporal graph. Intuitively, these are temporal paths that are robust with respect to edge delays. They show among other things that finding such routes in NP-hard. 
%This illustrates how much complexity is added to some connectivity related problems in the temporal setting.

The problem of \emph{counting} (optimal) temporal paths has mostly been studied indirectly in the context of temporal betweenness computation.
However, the computation of temporal betweenness has received much attention~\cite{BMNR20,rad2017computation,RymarMNN21,kim_temporal_2012,tang_analysing_2010,habiba2007betweenness,tang2013applications,alsayed2015betweenness,TBBLS20,nicosia_graph_2013,SML21}. Most of the mentioned work considers temporal betweenness variants that are polynomial-time computable. The corresponding optimal temporal paths are mostly shortest temporal paths or variations thereof. There are at least three notable exceptions: \citet{BMNR20} also consider \emph{prefix-foremost} temporal paths and the corresponding temporal betweenness and show that the latter is computable in polynomial time. Furthermore they show \#P-hardness for several temporal betweenness variants based on \emph{strict} optimal temporal paths. \citet{rad2017computation} consider temporal betweenness based on foremost temporal paths and show that its computation is \#P-hard. They further give an FPT-algorithm to compute temporal betweenness based on foremost temporal paths for the number of vertices as a parameter (note that the size of a temporal graph generally cannot be bounded by a function of the number of its vertices). 
\citet{RymarMNN21} give a quite general sufficient condition called \emph{prefix-compatibility} for optimality concepts for temporal paths that makes it possible to compute the corresponding temporal betweenness in polynomial time. 

Generally, connectivity related problems have received a lot of attention in the temporal setting, ranging from the mentioned temporal path and betweenness computation to finding temporally connected subgraphs~\cite{AF16,CasteigtsPS21}, temporal separation~\cite{Flu+20,KKK02,MaackMNR21,Zsc+20,Molter22}, temporal graph modification to decrease or increase its connectivity~\cite{DeligkasP20,EnrightMMZ21,MolterRZ21,enright2021assigning}, temporal graph exploration~\cite{Erlebach0K21,AkridaMSR21,ErlebachKLSS19,ErlebachS20,BodlaenderZ19,BumpusM21}, temporal network design~\cite{KlobasMMS22,MertziosMS19,Akr+17}, and others~\cite{HaagMNR22,FMNR222,KlobasMMNZ21}.

% Static path counting -- feel free to edit!
In the static setting the general problem of counting $(s,z)$-paths in static graphs is known to be \#P-complete~\cite{valiant_complexity_1979}.  In the parameterised setting, the problem of counting length-$k$ paths (with parameter $k$) was one of the first problems shown to be \#W[1]-complete \cite{flum2004parameterized}, but the problem does admit an efficient parameterised approximation algorithm \cite{arvindraman02}.  It is also generally considered folklore that the problem of counting paths (of any length) admits an FPT-algorithm parameterised by the treewidth of the input graph.

\subsection{Our Contribution}
Our goal is to initiate the systematic study of the parameterised and approximation complexity of \TemporalPaths.  
We provide an argument that \TemporalPaths is essentially equivalent to counting foremost or fastest paths or computing the respective temporal betweenness centrality in \cref{sec:countvsbetweenness}.

\subparagraph{Hardness results (\cref{sec:intractability}).} The main technical contribution of this paper is a reduction showing that \TemporalPaths is intractable even when very strong restrictions are placed on the underlying graph; specifically the problem is hard for $\oplus$W[1] when parameterised by the feedback vertex number of the underlying graph, which rules out the existence of FPT algorithms with respect to several common parameters.  We also show that it is NP-hard even to approximate the number of temporal $(s,z)$-paths in general, motivating the study of approximate counting in more restricted settings.

\subparagraph{Exact algorithms for special cases (\cref{sec:exactalgs}).}  We show that the problem is polynomial-time solvable if the underlying graph is a forest, and then use a wide range of algorithmic techniques to generalise this result in different ways. We show that the problem is
fixed-parameter tractable with respect to two ``distance to forest'' parameterisations that are larger than the feedback vertex number of the underlying graph (timed feedback vertex number and underlying feedback edge number).  We further show that \TemporalPaths is in FPT parameterised by the treewidth of the underlying graph and the lifetime combined, or parameterised by the recently introduced parameter ``vertex-interval-membership-width''.

\subparagraph{Approximation algorithms (\cref{sec:approx}).}  We show that there is an FPTRAS for \TemporalPaths parameterised by the maximum permitted length of a temporal $(s,z)$-path.  We then turn our attention to the problem of approximating betweenness centrality, as the relationship between path counting and computing betweenness is not so straightforward in the approximate setting: we demonstrate that, whenever there exists an FPRAS (respectively FPTRAS) for \TemporalPaths, we can efficiently approximate the maximum betweenness centrality of any vertex in the temporal graph.  These two results together give an FPTRAS to estimate the maximum betweenness centrality of any vertex in a temporal graph (with respect to either foremost or fastest temporal paths) parameterised by the vertex cover number or treedepth of the underlying input graph.

\section{Preliminaries and Basic Observations}\label{sec:prelims}
In this section we provide all basic notations, definitions, and terminology used in this work.
We discuss the relation between temporal path counting and temporal betweenness computation in more detail in \cref{sec:countvsbetweenness}.
Additional background on parameterised and approximate counting complexity are given in \cref{sec:paramcompl,sec:approxcompl}, respectively.
Given a static graph $G=(V,E)$, we say that a sequence $P=\left(\{v_{i-1},v_i\}\right)_{i=1}^k$ of edges in $E$ forms a \emph{path} in $G$ if $v_{i}\neq v_j$ for all $0\le i<j\le k$.

\subsection{Temporal Graphs and Paths}
There are several different definitions and notations used in the context of temporal graphs~\cite{Hol15,HS19,LVM18,Mic16} which are mostly equivalent. Here, we use the following definitions and notations:

An (undirected, simple) \emph{temporal graph} with lifetime $T\in\mathbb{N}$ is a tuple~$\mathcal{G}=(V,\mathcal{E},T)$, with time-edge set $\mathcal{E}\subseteq\binom{V}{2}\times [T]$.
We assume all temporal graphs in this paper to be undirected and
simple.
The \emph{underlying graph} of $\mathcal{G}$ is defined as the static graph~$G=(V, \{\{u,v\}\mid \exists t\in [T] \text{ s.t.\ } (\{u,v\},t)\in\mathcal{E}\})$.
We denote by $E_t$ the set of edges of $G$ that are active at time $t$, that is, $E_t=\{\{u,v\}\mid (\{u,v\},t)\in\mathcal{E}\}$.

For every $v\in V$ and every time step $t\in [T]$, we denote the \emph{appearance
of vertex} $v$ \emph{at time}~$t$ by the pair $(v,t)$. 
For a time-edge $(\{v,w\},t)$ we call the vertex appearances $(v,t)$ and $(w,t)$ its \emph{endpoints} and we call $\{v,w\}$ its \emph{underlying edge}.

We assume that every number in $[T]$ appears at least once as a label for an edge in $\mathcal{E}$. In other words, we ignore labels that are not used for any edges since they are irrelevant for the problems we consider in this work. It follows that we assume $T\le |\mathcal{E}|$ and hence $T\in \mathcal{O}(|\mathcal{G}|)=\mathcal{O}(|V|+|\mathcal{E}|)$.

A \emph{temporal ($s,z$)-path} (or \emph{temporal path}) of length~$k$ from vertex $s=v_0$ to vertex $z=v_k$ in a temporal graph~$\mathcal{G}=(V,\mathcal{E},T)$ is a sequence 
$P = \left((\{v_{i-1},v_i\},t_i)\right)_{i=1}^k$ of time-edges in $\mathcal{E}$
such that the corresponding sequence of underlying edges forms a path in the underlying graph of $\mathcal{G}$ and, for
all $i\in [k-1]$, we have that $t_i \leq t_{i+1}$.
Given a temporal path $P = \left((\{v_{i-1},v_i\},t_i)\right)_{i=1}^k$, we denote the set of vertices of $P$ by $V(P)=\{v_0,v_1,\ldots,v_k\}$ and we say that $P$ \emph{visits} the vertex $v_i$ if $v_i\in V(P)$.
Moreover, we call vertex appearances $(v_{i-1},t_i)$ \emph{outgoing} for $P$ and we call the vertex appearances $(v_i,t_i)$ \emph{incoming} for $P$. Note that, if $t_i = t_{i+1}$, then $(v_i,t_i)$ is both incoming and outgoing for $P$.  We define $(v_0,1)$ to be incoming for $P$ and $(v_k,T)$ to be outgoing for $P$. We say that a vertex appearance is \emph{visited} by $P$ if it is outgoing or incoming for $P$ (so a vertex is visited by $P$ if and only if at least one of its appearances is visited by~$P$). We say that $P$ \emph{starts} at $v_0$ at time $t_1$ and \emph{arrives} at $v_k$ at time $t_k$.
We say that $P'$ is a \emph{temporal subpath} of $P$ if $P'$ is a subsequence of $P$.
Furthermore, we define the following optimality concepts for temporal $(s,z)$-paths $P$.
\begin{itemize}
\item $P$ is a \emph{shortest} temporal $(s,z)$-path if there is no temporal path~$P'$ from $s$ to $z$ such that the length of $P'$ is strictly less than the length of $P$.
\item $P$ is a \emph{foremost} temporal $(s,z)$-path if there is no temporal path~$P'$ from $s$ to $z$ such that $P'$ arrives at $z$ at a strictly smaller time than $P$.
\item $P$ is a \emph{fastest} temporal $(s,z)$-path if there is no temporal path $P'$ from $s$ to $z$ such that the difference between the time at which $P'$ starts at $s$ and the time at which $P'$ arrives at $z$ is strictly smaller than the analogous difference of times for $P$.
\end{itemize}

\subsection{Temporal Betweenness Centrality}
%\hendrik{To do: check what we actually need and remove what we dont need.}
We follow the notation and definition for temporal betweenness given by \citet{BMNR20}.
Let $\mathcal{G}=(V,\mathcal{E},T)$ be a temporal graph. For any $s,z \in V$, $\sigmageneric{sz}$ is the number of $\star$-optimal temporal paths from~$s$ to~$z$.
We define $\sigmageneric{vv}:=1$.  For any vertex $v \in V$, we write $\sigmageneric{sz}(v)$ for the number of $\star$-optimal paths that pass through $v$.
We set $\sigmageneric{sz}(s):=\sigmageneric{sz}$ and
$\sigmageneric{sz}(z):=\sigmageneric{sz}$.
We do not assume that there is a temporal path from any vertex
to any other vertex in the graph. 
To determine between which (ordered) pairs of vertices a temporal path exists, we use a \emph{connectivity matrix} $A$ of the temporal
graph: let $A$ be a
$|V|\times |V|$ matrix, where for every $v,w\in V$ we have that $A_{v,w}=1$ if
there is a temporal path from $v$ to $w$, and $A_{v,w}=0$ otherwise.
Note that $A_{s,z}=1$ if and only if $\sigmageneric{sz}\neq 0$. 
Formally, temporal betweenness based on $\star$-optimal temporal paths is defined as follows.
\begin{definition}[Temporal Betweenness]
The \emph{temporal betweenness} of any vertex~$v\in V$ is given by:
\[
	C^{(\star)}_B(v) := \sum_{s \neq v \neq z \text{ and } A_{s,z}=1}
	\frac{\sigmageneric{sz}(v)}{\sigmageneric{sz}}. 
\]
\end{definition}

Crucially for our work, it turns out that we can adapt any algorithm for \TemporalPaths into one that computes temporal betweenness with only polynomial overhead; we explain this reduction in \cref{sec:countvsbetweenness}.

In the reverse direction, a reduction by \citet{BMNR20} from \TemporalPaths to the problem of computing temporal betweenness centrality (based on foremost or fastest temporal paths) implies that our parameterised hardness result (\cref{thm:parameterizedhardness}) also holds for temporal betweenness computation, as the reduction increases the feedback vertex number of the underlying graph by at most three.

If we can only count temporal paths \emph{approximately}, however, the relationship between temporal path counting and temporal betweenness computation is not so straightforward: approximating the number of temporal $(s,z)$-paths that use a specific vertex can in general be much harder than approximating the total number of temporal $(s,z)$-paths.  This issue is discussed in more detail in \cref{sec:countvsbetweenness}.

\subsection{Temporal Betweenness vs.\ Temporal Path Counting}\label{sec:countvsbetweenness}
In this subsection we discuss the relationship between the problems of computing temporal betweenness and counting temporal paths. We show that we can compute temporal betweenness based on foremost and fastest temporal paths using an algorithm for \TemporalPaths with only polynomial overhead in the running time.
Let $\mathcal{G}=(V,\mathcal{E},T)$ be a temporal graph. We start with the following easy observation.
\begin{observation}\label{obs:basic1}
Given an algorithm to count all $\star$-optimal temporal $(s,z)$-paths in $\mathcal{G}$ in time $t(\mathcal{G})$, we can compute the temporal betweenness based on $\star$-optimal temporal paths of any vertex of $\mathcal{G}$ in $t(\mathcal{G})\cdot |\mathcal{G}|^{\mathcal{O}(1)}$ time.
\end{observation}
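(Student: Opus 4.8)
The plan is to compute the betweenness $C^{(\star)}_B(v)$ of the given vertex $v$ directly from its definition by evaluating $\sigmageneric{sz}$ and $\sigmageneric{sz}(v)$ for every ordered pair $(s,z)$ with $s\neq v\neq z$. There are $\mathcal{O}(|V|^2)$ such pairs, so it suffices to show that each of the two quantities can be obtained with $\mathcal{O}(1)$ calls to the assumed counting algorithm (each on a temporal graph of size at most $|\mathcal{G}|$, hence running in time at most $t(\mathcal{G})$) together with $|\mathcal{G}|^{\mathcal{O}(1)}$ additional work; the factor $\mathcal{O}(|V|^2)$ and the polynomial-time rational arithmetic on the (poly-bit) path counts are then absorbed into the $|\mathcal{G}|^{\mathcal{O}(1)}$ overhead. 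Note that the condition $A_{s,z}=1$ can be checked for free, since $A_{s,z}=1$ iff $\sigmageneric{sz}\neq 0$, and in that case the denominator of the summand is guaranteed to be nonzero.

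The value $\sigmageneric{sz}$ is obtained from a single call of the counting algorithm on $\mathcal{G}$. The crux is $\sigmageneric{sz}(v)$, the number of $\star$-optimal $(s,z)$-paths passing through the intermediate vertex $v$. A naive attempt to count these as (paths $s\to v$) $\times$ (paths $v\to z$) fails, because the two sub-paths must be internally vertex-disjoint and their concatenation must remain time-respecting and globally optimal. I would instead use complementation: writing $N_{sz}^{\neg v}$ for the number of $\star$-optimal $(s,z)$-paths that \emph{avoid} $v$ (with optimality measured in $\mathcal{G}$), every $\star$-optimal $(s,z)$-path either visits $v$ or avoids it, so $\sigmageneric{sz}(v)=\sigmageneric{sz}-N_{sz}^{\neg v}$, and vertex-disjointness is handled automatically.

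The remaining step, which I expect to be the main obstacle, is to compute $N_{sz}^{\neg v}$ via the black box applied to $\mathcal{G}-v$ (delete $v$ and its incident time-edges). The subtlety is that the algorithm run on $\mathcal{G}-v$ returns paths that are optimal \emph{with respect to} $\mathcal{G}-v$, whereas $N_{sz}^{\neg v}$ counts paths optimal with respect to $\mathcal{G}$. For the foremost and fastest concepts that are the focus here, this is reconciled using the fact that the optimal value is polynomial-time computable: let $\mu^*$ be the foremost arrival time (respectively the fastest duration) of an $(s,z)$-path in $\mathcal{G}$, and let $\mu'$ be the analogous value in $\mathcal{G}-v$, both computable in polynomial time by standard temporal shortest/foremost/fastest path routines. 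Since $\mathcal{G}-v$ is a subgraph of $\mathcal{G}$ we always have $\mu'\geq\mu^*$. If $\mu'>\mu^*$ (including the case where $z$ becomes unreachable from $s$), then no $(s,z)$-path avoiding $v$ attains the global optimum and $N_{sz}^{\neg v}=0$; if $\mu'=\mu^*$, then the $\star$-optimal $(s,z)$-paths of $\mathcal{G}-v$ are exactly the $(s,z)$-paths avoiding $v$ that are optimal in $\mathcal{G}$, so $N_{sz}^{\neg v}$ equals the count returned by the algorithm on $\mathcal{G}-v$.

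Putting this together, each admissible pair $(s,z)$ costs two algorithm calls and polynomial overhead, and summing $\sigmageneric{sz}(v)/\sigmageneric{sz}$ over all such pairs yields $C^{(\star)}_B(v)$ in time $t(\mathcal{G})\cdot|\mathcal{G}|^{\mathcal{O}(1)}$, as claimed. The only genuinely delicate point is the benchmark-matching argument for $N_{sz}^{\neg v}$, which rests on two properties of the optimality measure: its monotonicity under vertex deletion and its polynomial-time computability. Both hold for the foremost and fastest notions, so the reduction applies exactly in the setting announced before the statement.
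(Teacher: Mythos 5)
Your proposal is correct and follows essentially the same route as the paper: the paper's (one-line) justification is exactly the complementation $\sigmageneric{sz}(v)=\sigmageneric{sz}-N_{sz}^{\neg v}$ with $N_{sz}^{\neg v}$ obtained by running the counting algorithm on $\mathcal{G}-\{v\}$. You additionally spell out the benchmark-matching step (comparing the optimal value $\mu^\star$ in $\mathcal{G}$ with $\mu'$ in $\mathcal{G}-\{v\}$, and returning $0$ when $\mu'>\mu^\star$), a genuine subtlety for $\star$-optimal rather than unrestricted paths that the paper's terse argument glosses over; your treatment of it is the right one for the foremost and fastest notions at issue.
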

This follows by observing that we can count the number of temporal $(s,z)$-paths in $\mathcal{G}$ that visit a vertex $v$ by first counting all temporal $(s,z)$-paths in $\mathcal{G}$ and then subtracting the number of temporal $(s,z)$-paths in $\mathcal{G}-\{v\}$.

Next we observe that we can count foremost and fastest temporal $(s,z)$-paths using an algorithm for \TemporalPaths, with only polynomial overhead.
\begin{observation}\label{obs:basic2}
Given an algorithm for \TemporalPaths that runs in time $t(\mathcal{G})$, we can compute all foremost temporal $(s,z)$-paths and all fastest temporal $(s,z)$-paths in $t(\mathcal{G})\cdot |\mathcal{G}|^{\mathcal{O}(1)}$ time.
\end{observation}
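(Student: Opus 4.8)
The plan is to reduce both counting problems to a small number of calls to the \TemporalPaths oracle, each on a temporal graph obtained from $\mathcal{G}$ only by \emph{deleting} time-edges. Throughout I use the fact, recalled earlier, that a single optimal temporal $(s,z)$-path can be found in polynomial time; in particular the foremost arrival time $t^\ast$ and the minimum duration $d^\ast$ are computable in $|\mathcal{G}|^{\mathcal{O}(1)}$ time (and if no temporal $(s,z)$-path exists we simply return $0$). The guiding principle is that deleting time-edges can never turn a temporal path into a non-path: it preserves both the non-decreasing-label condition and the underlying simple-path structure, so every $(s,z)$-path of a subgraph is also an $(s,z)$-path of $\mathcal{G}$.

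For foremost paths I would first observe that, since $z=v_k$ is the endpoint of every temporal $(s,z)$-path, each such path uses exactly one time-edge incident to $z$, namely its last one, whose label is the arrival time. Let $\mathcal{G}'$ be obtained from $\mathcal{G}$ by deleting every time-edge incident to $z$ whose label differs from $t^\ast$. Then the $(s,z)$-paths of $\mathcal{G}'$ are exactly the $(s,z)$-paths of $\mathcal{G}$ with arrival time $t^\ast$: a surviving path arrives at $t^\ast$ and is therefore foremost, while any foremost path has its (unique) $z$-incident edge labelled $t^\ast$ and hence survives. A single oracle call on $\mathcal{G}'$ returns the number of foremost paths in $t(\mathcal{G})\cdot|\mathcal{G}|^{\mathcal{O}(1)}$ time, using $|\mathcal{G}'|\le|\mathcal{G}|$ and the assumption that $t$ is non-decreasing in the input size.

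For fastest paths I would partition the fastest paths by their start time. For each candidate start time $\tau\in[T]$, let $\mathcal{G}_\tau$ be obtained from $\mathcal{G}$ by deleting every time-edge with label outside $[\tau,\tau+d^\ast]$ and, in addition, every time-edge incident to $s$ whose label differs from $\tau$. As above, each $(s,z)$-path uses exactly one $s$-incident edge (its first), so in $\mathcal{G}_\tau$ every $(s,z)$-path starts at exactly $\tau$; since all labels are at most $\tau+d^\ast$, its duration is at most $d^\ast$, and by minimality of $d^\ast$ it is exactly $d^\ast$, making the path fastest. Conversely every fastest path with start time $\tau$ has all labels in $[\tau,\tau+d^\ast]$ and first label $\tau$, so it survives precisely in $\mathcal{G}_\tau$. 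Because the forced start times are pairwise distinct across the graphs $\mathcal{G}_\tau$, summing the oracle's output over $\tau\in[T]$ counts each fastest path exactly once. This uses $\mathcal{O}(T)=\mathcal{O}(|\mathcal{G}|)$ oracle calls (recall $T\le|\mathcal{E}|$), giving the claimed $t(\mathcal{G})\cdot|\mathcal{G}|^{\mathcal{O}(1)}$ bound.

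The construction is routine, and I do not expect a genuine obstacle; the only points needing care are the bookkeeping that ensures each optimal path is counted exactly once — namely the single-incident-edge observation at $s$ and $z$, and the distinct-start-time partition for fastest paths — together with the (easy) verification that forcing the window $[\tau,\tau+d^\ast]$ combined with minimality of $d^\ast$ pins the arrival time down without a separate restriction at $z$. The main work is therefore stating these invariants cleanly rather than overcoming any real difficulty.
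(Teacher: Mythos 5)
Your proposal is correct and follows essentially the same strategy as the paper: compute the optimal arrival time (resp.\ duration) in polynomial time, prune the temporal graph so that every surviving temporal $(s,z)$-path is optimal, and invoke the \TemporalPaths oracle once for foremost paths and once per candidate start window for fastest paths, partitioning fastest paths by start time. The only cosmetic differences are that the paper removes all time-edges later than $t^\ast$ globally rather than only those incident to $z$, and for fastest paths it relies solely on the window $[t_0,t_0+t_f]$ (which, as your own argument shows, already pins the start time to $t_0$), making your extra restriction at $s$ redundant but harmless.
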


First, note that we can compute a foremost temporal $(s,z)$-path and a fastest temporal $(s,z)$-path in polynomial time~\cite{xuan_computing_2003,wu_efficient_2016}. In the case of foremost temporal $(s,z)$-paths, we can in this way obtain the time at which a foremost temporal $(s,z)$-path arrives at $z$ and remove all time-edges with later time labels from $\mathcal{G}$. After this modification, every temporal $(s,z)$-path is foremost hence we can count them using an algorithm for \TemporalPaths.

In the case of fastest temporal $(s,z)$-paths, we can in the same way obtain the time difference $t_f$ between starting at $s$ and arriving at $z$ for any fastest temporal $(s,z)$-path. We can now iterate over all intervals $[t_0,t_0+t_f]$ with $1\le t_0\le T-t_f$ and, for each one, create an instance of \TemporalPaths by removing all time-edges from $\mathcal{G}$ that are either earlier than $t_0$ or later than $t_0+t_f$. After this modification, every temporal $(s,z)$-path in the instance corresponding to any interval is fastest, and every fastest temporal path survives in exactly one instance; hence we can count fastest temporal paths by calling an algorithm for \TemporalPaths on each instance and summing the results.

Using \cref{obs:basic1,obs:basic2} we obtain the following lemma, which implies that our polynomial-time and FPT-algorithms for special cases of \TemporalPaths yield polynomial-time solvability and fixed-parameter tractability results respectively for temporal betweenness based on foremost temporal paths or fastest temporal paths, under the same restrictions.
\begin{lemma}\label{cor:count-to-betweenness}
Given an algorithm for \TemporalPaths that runs in time $t(\mathcal{G})$, we can compute the temporal betweenness based on foremost temporal paths or fastest temporal paths of any vertex of $\mathcal{G}$ in $t(\mathcal{G})\cdot |\mathcal{G}|^{\mathcal{O}(1)}$ time.
\end{lemma}

If we can only count temporal paths \emph{approximately}, however, the relationship between temporal path counting and temporal betweenness computation is not so straightforward.  In the exact setting, we were able to determine the number of temporal $(s,z)$-paths visiting $v$ by calculating the difference between the number of temporal $(s,z)$-paths in $\mathcal{G}$ and $\mathcal{G} - \{v\}$ respectively.  However, in the approximate setting, we cannot use the same strategy: if there are $N$ temporal paths in total and $N_{-v}$ is an $\varepsilon$-approximation to the number of temporal paths that do not contain $v$, it does not follow that $N - N_{-v}$ is an $\varepsilon$-approximation to the number of temporal paths that do contain $v$, as the relative error will potentially be much higher if the proportion of temporal paths containing $v$ is very small.  A similar issue arises if we aim to estimate the number of temporal paths through $v$ by sampling a collection of temporal paths (from an approximately uniform distribution) and using the proportion that contain $v$ as an estimate for the total proportion of temporal paths containing $v$: if the proportion that contain $v$ is exponentially small, we would need exponentially many samples to have a non-trivial probability of finding at least one temporal path which does contain $v$; otherwise we deduce incorrectly that there are no temporal paths through $v$ and output $0$, which cannot be an $\varepsilon$-approximation of a non-zero number of temporal paths for any $\varepsilon < 1$.

Lastly, we briefly shift our attention to computational hardness. \citet{BMNR20} provide a reduction from \TemporalPaths to the computation of temporal betweenness based on foremost temporal paths and to the computation of temporal betweenness based on fastest temporal paths. In both cases, three new vertices are added to the temporal graph and all newly added time-edges are incident with at least one of the newly added vertices. This implies that our parameterised hardness result in the next section (\cref{thm:parameterizedhardness}) also holds for temporal betweenness computation based on foremost temporal paths or fastest temporal paths, since the reductions by \citet{BMNR20} increase the feedback vertex number of the underlying graph by at most three.

\subsection{Parameterised Counting Complexity}\label{sec:paramcompl}
We use standard definitions and terminology from parameterised complexity theory~\cite{DF13,FG06,Cyg+15}.
A parameterised counting problem $F,\kappa$ is in FPT (or \emph{fixed-parameter tractable}) if there is an algorithm that solves any instance~$(I,k)$ of $F,\kappa$ in $f(k)\cdot |I|^{\mathcal{O}(1)}$ time for some computable function $f$~\cite{flum2004parameterized,FG06}.
A parameterised counting Turing reduction from $F,\kappa$ to $F',\kappa'$ is an algorithm with oracle access to $F',\kappa'$ that solves any instance $(I,k)$ of $F,\kappa$ in $f(k)\cdot |I|^{\mathcal{O}(1)}$ time, where for all instances $(I',k')$ of $F',\kappa'$ queried to the oracle of $F',\kappa'$ we have that $k'\le g(k)$, for some computable functions $f,g$.
A parameterised counting problem $F,\kappa$ is hard for \#W[1] if there is a parameterised counting Turing reduction from \textsc{\#Multicoloured Clique} parameterised by the number of colours to~$F,\kappa$~\cite{flum2004parameterized,FG06}; in \textsc{\#Multicoloured Clique} we are given a $k$-partite graph and are asked to count the number of $k$-cliques.
A parameterised counting problem $F,\kappa$ is hard for $\oplus$W[1] (``parity-W[1]'') it there is a parameterised counting Turing reduction from \textsc{$\oplus$Multicoloured Clique} parameterised by the number of colours to~$F,\kappa$~\cite{bjorklund2015parity};
 in \textsc{$\oplus$Multicoloured Clique} we are given a $k$-partite graph and are asked to count the number of $k$-cliques modulo two, that is, decide whether the number of $k$-cliques is odd.

If a \#W[1]-hard (resp.\ $\oplus$W[1]-hard) parameterised counting problem $F,\kappa$ admits an FPT-algorithm, then \#W[1]$=$FPT (resp.\ $\oplus$W[1]$=$FPT), which is generally not believed to be the case~\cite{flum2004parameterized,bjorklund2015parity,FG06}. 
We remark that \#W[1]$=$FPT clearly implies W[1]$=$FPT (and also $\oplus$W[1]$=$FPT), whereas at the time of writing it is unknown whether $\oplus$W[1]$=$FPT implies W[1]$=$FPT.
%\hendrik{this is true, right?} 

\subsection{Approximate Counting and Sampling}\label{sec:approxcompl}

Many computational problems can be associated with a relation $R \subseteq \Sigma^* \times \Sigma^*$, where $\Sigma$ is some finite alphabet and $R$ can be seen as assigning to each problem instance $x \in \Sigma^*$ a set of solutions (namely the set $\sol_R(x) := \{y \in \Sigma^* \colon xRy\}$).  For a given relation $R \subseteq \Sigma^* \times \Sigma^*$ and an instance $x \in \Sigma^*$, we might be interested in the associated decision problem (``is $\sol_R(x)$ non-empty?''), counting problem (``determine $|\sol_R(x)|$'') or the uniform generation problem (``return a uniformly random element of $\sol_R(x)$'').  

We begin by defining our notion of efficient approximation for counting problems (see, for example, \cite[Chapter~11]{ProbComp}).

\begin{definition}
Let $F: \Sigma^* \rightarrow \mathbb{N} \cup \{0\}$ be a counting problem.  A fully polynomial randomised approximation scheme (FPRAS) for $F$ is a randomised approximation scheme that takes an instance $I$ of $F$ (with $|I| = n$), and real numbers $\varepsilon > 0$ and $0 < \delta < 1$, and in time $\poly(n,1/\varepsilon,\log(1/\delta))$ outputs a rational number $z$ such that
\[
\mathbb{P}[(1-\varepsilon)F(I) \le z \le (1 + \varepsilon)F(I)] \geq 1 - \delta.
\]
\end{definition}

We are also interested in the parameterised analogue of an FPRAS, a \emph{fixed parameter tractable randomised approximation scheme (FPTRAS)} \cite{arvindraman02}. 

\begin{definition}
Let $F: \Sigma^* \rightarrow \mathbb{N} \cup \{0\}$ be a counting problem with parameterisation $\kappa: \Sigma^* \rightarrow \mathbb{N}$.  An FPTRAS for $(F,\kappa)$ is a randomised approximation scheme that takes an instance $I$ of $F$ (where $|I| = n$), and real numbers $\varepsilon > 0$ and $0 < \delta < 1$, and in time $f(\kappa(I)) \cdot \poly(n,1/\varepsilon,\log(1/\delta))$ (where $f$ is any computable function) outputs a rational number $z$ such that
\[
\mathbb{P}[(1-\varepsilon)F(I) \leq z \leq (1 + \varepsilon)F(I)] \geq 1 - \delta.
\]
\end{definition}

For convenience, we often refer to a number $z$ satisfying $(1 - \varepsilon)F(I) \le z \le (1 + \varepsilon)F(I)$ as an \emph{$\varepsilon$-approximation} to $|F(I)|$.

It is well-known that there is a close relationship between the algorithmic problems of approximately counting solutions and generating solutions almost uniformly \cite{jerrum86generation}.  We will make use of this relationship when considering approximating the temporal betweenness of a vertex in Section \ref{sec:approx-between}, and to do so need a notion of efficient almost uniform sampling \cite[Chapter~11]{ProbComp}.

\begin{definition}
Let $S$ be the uniform generation problem associated with the relation $R \subseteq \Sigma^* \times \Sigma^*$.  A \emph{fully polynomial almost uniform sampler (FPAUS)} for $S$ is a randomised algorithm which takes as input an instance $x$ of $S$ (with $|x| = n$) together with an error parameter $0 \le \delta \le 1$, and in time $\poly(n, \log(1/\delta))$ returns an element of $\sol_R(x) = \{y \in \Sigma^* \colon xRy\}$; for any fixed $y$ with $xRy$, the probability $p_y$ that the algorithm returns $y$ satisfies $(1 - \delta)/|\sol_R(x)| \le p_y \le (1 + \delta)/|\sol_R(x)|$.  
\end{definition}

One can naturally define the parameterised analogue of an FPAUS.

\begin{definition}
Let $S$ be the uniform generation problem associated with the relation $R \subseteq \Sigma^* \times \Sigma^*$, and let $\kappa: \Sigma^* \rightarrow \mathbb{N}$ be a parameterisation of $S$.  A \emph{fixed parameter tractable almost uniform sampler (FPTAUS)} for $S$ is a randomised algorithm which takes as input an instance $x$ of $S$ (with $|x| = n$) together with an error parameter $0 \le \delta \le 1$, and in time $f(\kappa(x)) \cdot \poly(n, \log(1/\delta))$ (where $f$ is any computable function) returns an element of $\sol_R(x) = \{y \in \Sigma^* \colon xRy\}$; for any fixed $y$ with $xRy$, the probability $p_y$ that the algorithm returns $y$ satisfies $(1 - \delta)/|\sol_R(x)| \le p_y \le (1 + \delta)/|\sol_R(x)|$.  
\end{definition}

\section{Intractability Results for Temporal Path Counting}\label{sec:intractability}

In this section we prove two hardness results for \TemporalPaths.  In \cref{sec:param-hard} we demonstrate parameterised intractability with respect to the feedback vertex number of the underlying graph.  We follow this in \cref{sec:approx-hard} with an easy reduction demonstrating that the classical \#P-complete \Paths problem~\cite{valiant_complexity_1979} (definition as below) is unlikely to admit an FPRAS in general, which straightforwardly implies the same result for \TemporalPaths.
\problemdef{\Paths}{A graph $G = (V,E)$ and two vertices $s,z \in V$.}{Compute the number of paths from $s$ to $z$ in $G$.}

\subsection{Parameterised Hardness}\label{sec:param-hard}
%\TemporalPaths is clearly \#P-hard since it generalizes \Paths, the problem of counting $(s,z)$-paths in static graphs, which is one of the classic \#P-complete problems studies in the seminal work of \citet{valiant_complexity_1979}. 
%We remark that \citet{rad2017computation} already observed that \TemporalPaths is \#P-hard.
%\begin{observation}
%\TemporalPaths is \#P-hard.
%\end{observation} 

In this section we present our main parameterised hardness result, which provides strong evidence that \TemporalPaths does not admit an FPT algorithm when parameterised by the feedback vertex number of the underlying graph. Note that this also rules out FPT algorithms for many other parameterizations, including the treewidth of the underlying graph. However, it is folklore that \Paths admits an FPT algorithm parameterised by treewidth as a parameter (this is also implied by our result \cref{thm:treewidth}). The result here, therefore, means that \TemporalPaths is strictly harder than \Paths in terms of parameterised complexity for the parameterisations that are at most the feedback vertex number of the underlying graph and at least the treewidth of the underlying graph.

\begin{theorem}
\label{thm:parameterizedhardness}
\TemporalPaths is $\oplus$W[1]-hard when parameterised by the feedback vertex number of the underlying graph.
\end{theorem}

\begin{proof}
We present a parameterised counting Turing reduction from \textsc{$\oplus$Multicoloured Independent Set on 2-Track Interval Graphs} parameterised by the number of colours $k$. 
In \textsc{$\oplus$Multicoloured Independent Set on 2-Track Interval Graphs} we are given a set $I$ of interval pairs and a colouring function $c:I\rightarrow [k]$ and asked whether there is an odd number of $k$-sized sets of interval pairs in $I$ such that in each set, every two interval pairs have different colours and are non-intersecting.
Two interval pairs $([x_a,x_b],[x_{a'},x_{b'}]),([y_a,y_b],[y_{a'},y_{b'}])$ are considered non-intersecting if $[x_a,x_b]\cap[y_a,y_b]=\emptyset$ and $[x_{a'},x_{b'}]\cap[y_{a'},y_{b'}]=\emptyset$.

Inspecting the W[1]-hardness proof by~\citet{jiang2010parameterized} for \textsc{Independent Set on 2-Track Interval Graphs} shows that the reduction used from \textsc{Multicoloured Clique} parameterised by the number of colours $k$ is parsimonious\footnote{Informally speaking, parsimonious reductions do not change the number of solutions.} and
the reduction also shows W[1]-hardness for the multicoloured version of the problem. Since \textsc{$\oplus$Multicoloured Clique} is $\oplus$W[1]-hard when parameterised by the number of colours $k$~\cite{bjorklund2015parity}, we can conclude that \textsc{$\oplus$Multicoloured Independent Set on 2-Track Interval Graphs} is $\oplus$W[1]-hard when parameterised by the number of colours $k$.

Given an instance $(I,c)$ of \textsc{$\oplus$Multicoloured Independent Set on 2-Track Interval Graphs}, where $I$ is a set of interval pairs and $c:I\rightarrow [k]$ is a colouring function, we create $\mathcal{O}(2^k)$ temporal graphs. We assume w.l.o.g.\ that for all $([x_a,x_b],[x_{a'},x_{b'}]),([y_a,y_b],[y_{a'},y_{b'}])\in I$ that
$|\{x_a,x_b,y_a,y_b\}|=4$ and $|\{x_{a'},x_{b'},y_{a'},y_{b'}\}|=4$ or $([x_a,x_b],[x_{a'},x_{b'}])=([y_a,y_b],[y_{a'},y_{b'}])$, that is, if two interval pairs are different, we assume that all endpoints on each track are pairwise different.
Furthermore, we assume w.l.o.g.\ that all intervals contained in pairs in $I$ are integer subsets of $[2|I|]$. The main intuition of our construction follows:
\begin{itemize}
    \item We model track one with a path in the underlying graph and track two with time.
    \item Through the feedback vertices of the underlying graph, a temporal path can ``enter'' and ``leave'' the path that models track one. 
    \item The number of feedback vertices corresponds to the number of colours. 
    \item We have to make sure that we can determine the parity of the number of temporal paths visiting all feedback vertices.
    \item The number of temporal paths that do not correspond to independent sets should not be considered. It seems difficult to get an exact handle on the number of such paths, however we will show that this number
    is even. Note that, intuitively, this is the main reason we show hardness for $\oplus$W[1] and not \#W[1].
\end{itemize}

We construct a family of \emph{directed} temporal graphs $(\mathcal{G}^{\mathcal{C}}=(V,\mathcal{A}^{\mathcal{C}},2|I|+1))_{\mathcal{C}\subseteq [k]}$ with rational time labels (such that the maximum time label is at most $2|I|+1$), where $\mathcal{A}^{\mathcal{C}}\subseteq V\times V\times \mathbb{Q}$ for all $\mathcal{C}\subseteq [k]$. Towards the end of the proof we explain how to remove the need for directed edges which will also have the consequence that the temporal graphs only contain strict temporal paths. Note that we can scale up the lifetime to remove the need for rational time labels, however using rational time labels will be convenient in the construction and the correctness proof.

\begin{itemize}
    \item We set $V:=V_I\cup\{s,z',z\}\cup \{w_{1},\ldots,w_{k}\}\cup\{u_x\mid x\in I\}$, where $V_I:=\{v_1, \ldots, v_{2|I|}\}$.
    \item We set $\mathcal{A}^{\mathcal{C}}:=\bigcup_{x\in I\wedge c(x)\in\mathcal{C}} \mathcal{A}_x\cup \{(s,w_{i},1)\mid i\in[k]\}\cup\{(z',z,2|I|+1)\}$, where 
    \begin{align*}
        \mathcal{A}_x:= & \{(w_{c(x)}, u_x,a'), (u_x,v_a,b'), (u_x,v_a,b'+1-a\varepsilon),(v_b,z',b')\}\\
        &\cup \{(v_b,w_{i},b')\mid i\in[k]\wedge i\neq c(x)\}\\
        &\cup\{(v_j,v_{j+1},b'),(v_j,v_{j+1},b'+1-(j+1)\varepsilon)\mid j\in\{a, \ldots, b-1\}\}
        \end{align*} 
    for $x=([a,b],[a',b'])\in I$ and $\varepsilon = \frac{1}{2|I|}$.
\end{itemize}
For all $\mathcal{G}^{\mathcal{C}}$ we use $s$ as the starting vertex and $z$ as the end vertex of the temporal paths we want to count. 
The temporal graphs $\mathcal{G}^\mathcal{C}$ can each clearly be constructed in polynomial time and it is easy to see that the vertex set $\{s,z',z,w_1,\ldots,w_k\}$ constitutes a feedback vertex set of size $\mathcal{O}(k)$ for each of them (even if edge directions are removed). The construction is illustrated in \cref{fig:hardness1}. 

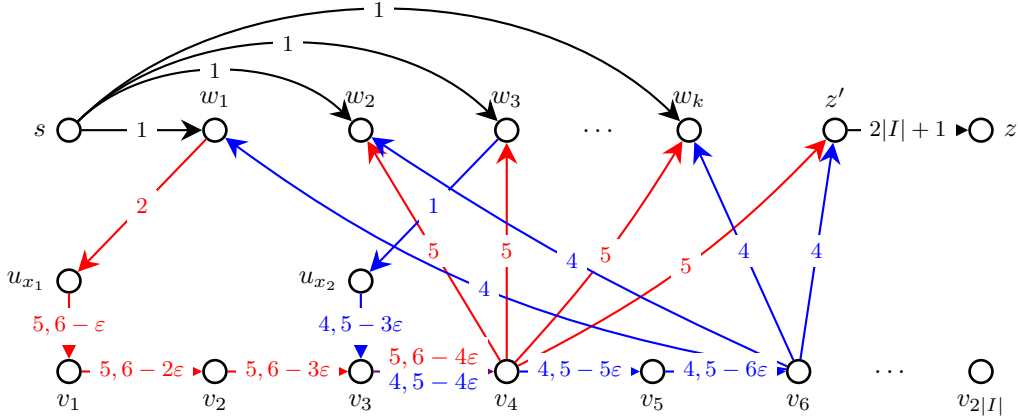
\begin{figure}[t]
\begin{center}
\begin{tikzpicture}[line width=1pt, scale=.8, xscale=1.2]

    \node[vert,label=below:$v_1$] (V1) at (0,0) {}; 
    \node[vert,label=below:$v_2$] (V2) at (2,0) {}; 
    \node[vert,label=below:$v_3$] (V3) at (4,0) {}; 
    \node[vert,label=below:$v_4$] (V4) at (6,0) {}; 
    \node[vert,label=below:$v_5$] (V5) at (8,0) {}; 
    \node[vert,label=below:$v_6$] (V6) at (10,0) {};
    \node (A) at (11.25,0) {$\ldots$};
    \node[vert,label=below:$v_{2|I|}$] (Vi) at (12.5,0) {};
    
    \node[vert,label=left:$u_{x_1}$] (U1) at (0,1.5) {}; 
    \node[vert,label=left:$u_{x_2}$] (U2) at (4,1.5) {}; 
    
    \node[vert,label=left:$s$] (S) at (0,4) {}; 
    \node[vert,label=above:$z'$] (Z1) at (10.5,4) {};
    \node[vert,label=right:$z$] (Z2) at (12.5,4) {};
    \node[vert,label=above:$w_1$] (W1) at (2,4) {};
    \node[vert,label=above:$w_2$] (W2) at (4,4) {};
    \node[vert,label=above:$w_3$] (W3) at (6,4) {};
    \node (B) at (7.25,4) {$\ldots$};
    \node[vert,label=above:$w_k$] (Wk) at (8.5,4) {};

	\draw[diredge] (S) --node[timelabel] {$1$} (W1);
	\draw[diredge] (S) edge[bend left=50] node[timelabel] {$1$} (W2);
	\draw[diredge] (S) edge[bend left=50] node[timelabel] {$1$} (W3);
	\draw[diredge] (S) edge[bend left=50] node[timelabel] {$1$} (Wk);
	\draw[diredge] (Z1) --node[timelabel] {$2|I|+1$} (Z2);

	\draw[diredge,red] (W1) --node[timelabel] {$2$} (U1);
	\draw[diredge,red] (U1) --node[timelabel] {$5,6-\varepsilon$} (V1);
	\draw[diredge,red] (V4) edge[bend right=10] node[timelabel] {$5$} (Z1);
	\draw[diredge,red] (V4) --node[timelabel] {$5$} (W2);
	\draw[diredge,red] (V4) --node[timelabel] {$5$} (W3);
	\draw[diredge,red] (V4) edge[bend right=5] node[timelabel] {$5$} (Wk);
	\draw[diredge,red] (V1) --node[timelabel] {$5,6-2\varepsilon$} (V2);
	\draw[diredge,red] (V2) --node[timelabel] {$5,6-3\varepsilon$} (V3);
	\draw[diredge,violet] (V3) --node[timelabel] {\begin{tabular}{@{}c@{}}\textcolor{red}{$5,6-4\varepsilon$}\\ \textcolor{blue}{$4,5-4\varepsilon$}\end{tabular}} (V4);
	
	\draw[diredge,blue] (W3) --node[timelabel] {$1$} (U2);
	\draw[diredge,blue] (U2) --node[timelabel] {$4,5-3\varepsilon$} (V3);
	\draw[diredge,blue] (V6) --node[timelabel] {$4$} (Z1);
	\draw[diredge,blue] (V6) edge[bend left=15] node[timelabel] {$4$} (W1);
	\draw[diredge,blue] (V6) edge[bend left=5] node[timelabel] {$4$} (W2);
	\draw[diredge,blue] (V6) --node[timelabel] {$4$} (Wk);
	%\draw[diredge] (V3) --node[timelabel] {$4,5-4\varepsilon$} (V4);
	\draw[diredge,blue] (V4) --node[timelabel] {$4,5-5\varepsilon$} (V5);
	\draw[diredge,blue] (V5) --node[timelabel] {$4,5-6\varepsilon$} (V6);
\end{tikzpicture}
    \end{center}
    \caption{Illustration of $\mathcal{G}^\mathcal{C}$ with $\mathcal{C}=\{1,3\}$ and two interval pairs $x_1,x_2\in I$ where $x_1=([1,4],[2,5])$ and $x_2=([3,6],[1,4])$, and the corresponding colours are $c(x_1)=1$ and $c(x_2)=3$. The arcs added for $x_1$ are depicted in red and the arcs added for $x_2$ are depicted in blue.}\label{fig:hardness1}
\end{figure}

We now show some properties of the construction that will help to prove correctness of the reduction.
\begin{claim}\label{claim:unique}
Let $x=([a,b],[a',b'])\in I$, $i\in[k]\wedge i\neq c(x)$, and $\mathcal{C}\subseteq[k]$ with $c(x)\in \mathcal{C}$. Then there is exactly one temporal $(w_{c(x)},w_i)$-path $P$ in $\mathcal{G}^\mathcal{C}$ such that $V(P)\cap V_I=\{v_a,v_{a+1},\ldots,v_b\}$. Furthermore, there is exactly one temporal $(w_{c(x)},z')$-path $P$ in $\mathcal{G}^\mathcal{C}$ such that $V(P)\cap V_I=\{v_a,v_{a+1},\ldots,v_b\}$.
\end{claim}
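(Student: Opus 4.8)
The plan is to show that any temporal path of the required kind is forced, arc by arc, to be a single forward traversal of $v_a,v_{a+1},\ldots,v_b$ using only the arcs of $\mathcal{A}_x$, and then to check that this one candidate really is a valid temporal path. I would begin by recording the basic structural facts about the construction. Inside $V_I$ the only arcs are the forward arcs $(v_j,v_{j+1},\cdot)$, so every maximal portion of a path lying in $V_I$ is a contiguous increasing run $v_p,\ldots,v_q$; a path can enter $V_I$ only through an arc $(u_y,v_{a_y},\cdot)$ (hence at a left endpoint $v_{a_y}$) and leave $V_I$ only through an arc out of some $v_{b_y}$ into $z'$ or a $w_i$ (hence at a right endpoint). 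Since $P$ is simple, its visited set $\{v_a,\ldots,v_b\}$ is partitioned into such runs.

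Next I would pin down the run containing $v_a$. As $v_{a-1}$ is not visited, $v_a$ has no incoming forward arc in $P$, so its run is entered via some $(u_y,v_a,\cdot)$ with $a_y=a$; by the general-position assumption that distinct interval pairs have pairwise distinct first-track endpoints, this forces $y=x$. Crucially, entering via $x$ uses $(w_{c(x)},u_x,a')$, the \emph{unique} arc into $u_x$, and since $P$ starts at $w_{c(x)}$ this must be the very first arc of $P$; hence the run containing $v_a$ is the first run traversed, and its entry arc $(u_x,v_a,\cdot)$ carries a label $\ge b'$ (both available labels $b'$ and $b'+1-a\varepsilon$ are $\ge b'$). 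Symmetrically, $v_b$ is a right endpoint that cannot continue to the unvisited $v_{b+1}$, so $P$ must leave $V_I$ at $v_b$; as $b=b_x$, distinctness forces this departure to use an arc of $\mathcal{A}_x$, at time exactly $b'$.

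The heart of the argument, which I expect to be the main obstacle, is ruling out that $P$ enters and leaves $V_I$ more than once while still visiting exactly $\{v_a,\ldots,v_b\}$. Suppose there were at least two runs. Then the first run (containing $v_a$) and the run containing $v_b$ are distinct, the latter strictly later, and the first run's block is $[a,b_{y'}]$ with $b_{y'}<b$, where $y'$ is the pair whose exit arc it uses. The first run enters at time $\ge b'$, so by monotonicity it exits at $v_{b_{y'}}$ at time $b'_{y'}\ge b'$; but this exit precedes the final departure from $v_b$ at time $b'$, so monotonicity of the whole path also gives $b'_{y'}\le b'$. Hence $b'_{y'}=b'$, and distinctness of the second-track right endpoints forces $y'=x$, whence $b_{y'}=b$, contradicting $b_{y'}<b$. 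Therefore $P$ is a single run from $v_a$ to $v_b$.

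Finally I would fix the labels of this single run. Since $P$ leaves $v_b$ at time $b'$ and is time-monotone, the entry arc $(u_x,v_a,\cdot)$ and every internal arc $(v_j,v_{j+1},\cdot)$ must carry a label $\le b'$; among the available labels this holds exactly for the ``first copies'' belonging to $\mathcal{A}_x$, since any other candidate is either a first copy $b'_y$ with $y\neq x$ (so $b'_y\neq b'$ by distinctness) or a second copy $b'_y+1-(j+1)\varepsilon$, which for $\varepsilon=\tfrac1{2|I|}$ and integer endpoints cannot equal the integer $b'$ unless it coincides with $x$'s own first copy. This singles out a unique arc at each step, so $P$ is exactly $w_{c(x)} \xrightarrow{a'} u_x \xrightarrow{b'} v_a \xrightarrow{b'} \cdots \xrightarrow{b'} v_b \xrightarrow{b'} w_i$, whose labels are non-decreasing (as $a'<b'$) and which visits precisely $\{v_a,\ldots,v_b\}$ in $V_I$; this gives both existence and uniqueness. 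The statement for $(w_{c(x)},z')$-paths is identical, replacing the final arc by $(v_b,z',b')$.
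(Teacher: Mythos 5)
Your proof is correct and follows essentially the same route as the paper's: force the vertex sequence of $P$ to coincide with the canonical one using the distinct-endpoints assumption, then sandwich every time label between the earliest possible arrival at $v_a$ (namely $b'$) and the forced departure from $v_b$ at time $b'$. You are in fact more careful than the paper about why the vertex sequence is forced (your run decomposition and multi-run elimination make explicit what the paper asserts in a single sentence); the only blemish is that in your final paragraph the labels must equal $b'$ rather than merely be at most $b'$ (otherwise first copies $b'_y < b'$ of other pairs would survive), but this follows from the lower bound $\ge b'$ you already established for the entry arc together with monotonicity, so the argument is sound.
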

\begin{claimproof}[Proof of claim]
%\begin{proof}[Proof of claim]
Let $x=([a,b],[a',b'])\in I$, $i\in[k]\wedge i\neq c(x)$, and $\mathcal{C}\subseteq[k]$ with $c(x)\in \mathcal{C}$. We first show that there is a temporal $(w_{c(x)},w_i)$-path $P$ in $\mathcal{G}^\mathcal{C}$ such that $V(P)\cap V_I=\{v_a,v_{a+1},\ldots,v_b\}$. Then we show that the path is unique. The case for a temporal $(w_{c(x)},z')$-path works analogously.

Consider the path $P=((w_{c(x)},u_x,a'),(u_x,v_a,b'),(v_a,v_{a+1},b'),\ldots, (v_{b-1},v_b,b'),(v_b,w_{i},b'))$. It is easy to verify that this path is contained in $\mathcal{G}^\mathcal{C}$ and that we have $V(P)\cap V_I=\{v_a,v_{a+1},\ldots,v_b\}$. Now assume for contradiction that there is a temporal $(w_{c(x)},w_i)$-path $P'$ in $\mathcal{G}^\mathcal{C}$ such that $V(P')\cap V_I=\{v_a,v_{a+1},\ldots,v_b\}$ and $P'\neq P$. Recall that we assume that if two interval pairs $x,y\in I$ are different, then all endpoints on each track are pairwise different. This means that, by construction of $\mathcal{G}^\mathcal{C}$, the temporal path $P'$ has to visit the same vertices as $P$ in the same order. This implies that $P'$ contains a transition $(u,v,t)$ such that $P$ contains the transition $(u,v,t')$ with some $t'\neq t$. However, note that the only time step when we have a transition from $v_b$ to $w_i$ is $b'$ and also the earliest time step when we can arrive at $v_a$ (via $u_x$) is $b'$. Hence we must have that $t=b'$ unless $(u,v,t)=(w_{c(x)},u_x,a')$. In both cases we obtain a contradiction to the assumption that $P'\neq P$.
%\renewcommand{\qedsymbol}{$\diamond$}
%\end{proof}
\end{claimproof}

\begin{claim}\label{claim:cheating}
Let $V^\star=\{v_a,v_{a+1},\ldots,v_b\}$ for some $[a,b]$ such that for all $a',b'$ we have that $([a,b],[a',b'])\notin I$. Then for all $i,j\in [k]$ and for all $\mathcal{C}\subseteq[k]$ we have that $\mathcal{G}^\mathcal{C}$ contains an even number of temporal $(w_i,w_j)$-paths $P$ such that $V(P)\cap V_I=V^\star$.
Furthermore, there is an even number of temporal $(w_i,z')$-paths $P$ in $\mathcal{G}^\mathcal{C}$ such that $V(P)\cap V_I=V^\star$.
\end{claim}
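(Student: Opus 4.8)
The plan is to exhibit a fixed-point-free involution on the set of temporal paths $P$ with $V(P)\cap V_I=V^\star$, so that this set has even cardinality. First I would record two structural facts. Since distinct interval pairs have pairwise distinct endpoints on the second track, all the integer time labels $b'$ appearing in the construction are distinct; consequently the interval pair responsible for any arc that $P$ uses inside $V_I$ is determined by the integer part of that arc's time label. Moreover, because every $V_I$-arc points from $v_j$ to $v_{j+1}$, a path with $V(P)\cap V_I=\{v_a,\dots,v_b\}$ must enter $V_I$ at $v_a$ (via some $u_x\to v_a$, so $x$ has first interval starting at $a$) and leave it at $v_b$ (via $v_b\to w_\ell$ or $v_b\to z'$, so the exit pair has first interval ending at $b$), traversing $v_a\to v_{a+1}\to\cdots\to v_b$ in order.

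Next I would analyse how the time labels constrain the traversal once the source pair of each $V_I$-arc is fixed. Switching from a pair $x$ to a pair $x'$ on consecutive arcs forces $b'_{x'}>b'_{x}$: any $x$-arc has time in $[b'_x,b'_x+1)$ and any $x'$-arc in $[b'_{x'},b'_{x'}+1)$, so if $b'_{x'}<b'_x$ the second time would be strictly smaller, contradicting non-decreasing times. Hence the traversal splits into maximal \emph{blocks} of arcs sharing a source pair, and the blocks occur in strictly increasing order of $b'$. Within one block the fractional labels $b'+1-(j+1)\varepsilon$ strictly \emph{decrease} in $j$ while the integer label $b'$ is constant, so a non-decreasing sequence inside a block must use the integer label on every arc except possibly the last; and the exit arc carries only the integer label $b'$ of the exit pair, which forces the last arc of the \emph{final} block to be integer too.

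The upshot is that, with the block pattern fixed, a valid path is specified exactly by one binary choice (integer vs.\ fractional) on the last arc of each block other than the final one; each such choice is realisable, because a fractional last label of a block equals $b'_x+1-(\cdot)\varepsilon<b'_x+1\le b'_{x'}$, still below the strictly larger integer label opening the next block. Thus a block pattern with $r$ blocks yields exactly $2^{r-1}$ paths. Since $[a,b]$ is not the first interval of any pair, the entry pair and the exit pair are distinct, so every admissible pattern has $r\ge 2$ and contributes the even number $2^{r-1}$; summing over patterns gives an even total. The promised involution simply toggles the integer/fractional label on the last arc of the first block, which is well defined (as $r\ge 2$ this block is not the final one), validity-preserving, and fixed-point-free. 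The statement for $(w_i,z')$-paths follows identically, since $v_b\to z'$ is likewise integer-only.

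The step I expect to be the main obstacle is correctly accounting for paths that leave $V_I$ at an intermediate right endpoint and re-enter at the next left endpoint, so that $V(P)\cap V_I$ is still the contiguous block $V^\star$ but is covered in several passes through distinct gadgets. For a single pass the block decomposition above applies immediately, and each pass contributes a power of two; the product over passes is even as soon as \emph{some} pass is a genuine multi-pair (``cheating'') sub-traversal. The truly delicate case is when $V^\star$ decomposes as a disjoint union of first intervals and every pass is a single legitimate gadget traversal (each unique by \cref{claim:unique}, hence contributing the factor $1$): here the within-pass toggle vanishes, so evenness cannot come from the label freedom and must instead be extracted from the routing through the intermediate $w$-vertices and the re-entry time constraints $b'_{x}\le a'_{x'}$. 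Making this last pairing precise—and showing it never fixes a path in which $V^\star$ is tiled by first intervals—is the crux of the argument, and where I would expect the construction's colour and second-track conditions to do the real work.
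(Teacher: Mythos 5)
Your single-pass analysis is essentially the paper's own proof: what you call the first block is exactly the paper's ``early lane'' (the arcs of the entry gadget $x$ with labels $\hat b'$ and $\hat b'+1-(\cdot)\varepsilon$), and your involution---toggling the integer/fractional label on the last arc of the first block---is precisely the bijection $f:\mathcal{Q}_{i,j}^{\mathcal{C}}(v_\ell)\rightarrow\mathcal{R}_{i,j}^{\mathcal{C}}(v_\ell)$ that the paper constructs after partitioning the paths by the vertex $v_\ell$ at which they leave the early lane. Up to that point the two arguments coincide.

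However, the gap you flag at the end is real and, in the form you hope to close it, cannot be closed. If intermediate $w$-vertices are allowed, take $V^\star$ to be the disjoint union of the first intervals of two pairs $x$ and $y$ whose second intervals are compatible ($\hat b'_x\le a'_y$): then there is \emph{exactly one} temporal $(w_{c(x)},w_j)$-path with $V(P)\cap V_I=V^\star$, namely the concatenation of the two unique gadget traversals from \cref{claim:unique} through the intermediate vertex $w_{c(y)}$, and one is odd. No pairing via ``routing through the intermediate $w$-vertices'' exists, because that intermediate vertex is forced to be $w_{c(y)}$ and every time label along the way is forced. The resolution is that the claim is only needed---and the paper only proves it---for $(w_i,w_j)$-segments that visit no other vertex of $\{w_1,\dots,w_k\}$: this is how it is invoked in \cref{claim:count}, where the cheating test is applied to subpaths between consecutive $w$-visits, and re-entering $V_I$ is impossible without passing through some $w_{c(y)}$ (the only arcs into $u_y$ come from $w_{c(y)}$). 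Under that restriction only your single-pass case arises and your argument is complete; note that the paper's own step asserting that the arc following the early lane has label at least $\hat b'+1$ silently uses the same restriction, since it fails exactly for the exit arcs $(v_{\hat b},w_p,\hat b')$ that a multi-pass path would take. So your proof of the case the statement actually requires matches the paper's, and the case you could not handle is not a missing idea on your part but a correctly diagnosed imprecision in the statement itself.
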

\begin{claimproof}[Proof of claim]
%\begin{proof}[Proof of claim]
Let $V^\star=\{v_a,v_{a+1},\ldots,v_b\}$ for some $[a,b]$ such that for all $a',b'$ we have that $([a,b],[a',b'])\notin I$. We show that for all $i,j\in [k]$ and for all $\mathcal{C}\subseteq[k]$ we have that $\mathcal{G}^\mathcal{C}$ contains an even number of temporal $(w_i,w_j)$-paths $P$ such that $V(P)\cap V_I=V^\star$. The case for temporal $(w_i,z')$-paths works analogously. 

If there is no temporal path satisfying the conditions, then we are done. Hence, assume that there is a $\mathcal{C}\subseteq[k]$ and $i,j\in [k]$ such that $\mathcal{G}^\mathcal{C}$ contains a temporal $(w_i,w_j)$-path $P$ with $V(P)\cap V_I=V^\star$. Let $\mathcal{P}_{i,j}^\mathcal{C}$ denote the set of all temporal $(w_i,w_j)$-paths $P$ in $\mathcal{G}^\mathcal{C}$ with $V(P)\cap V_I=V^\star$. We will argue that $|\mathcal{P}_{i,j}^\mathcal{C}|\equiv 0\mod 2$.

Since $\mathcal{P}_{i,j}^\mathcal{C}\neq\emptyset$ we have an interval pair $x=([a,\hat{b}],[\hat{a}',\hat{b}'])\in I$ with $c(x)=i$, otherwise the first vertex from $V_I$ that is visited by each temporal path $P\in\mathcal{P}_{i,j}^\mathcal{C}$ cannot be $v_a$. Now consider the vertex set $\hat{V}=\{v_a, v_{a+1},\ldots,v_{\min(\hat{b},b)}\}\subseteq V^\star$. By construction of $\mathcal{G}^\mathcal{C}$, we have for any two vertices $v_\ell, v_{\ell+1}\in \hat{V}$ that $(v_\ell,v_{\ell+1},\hat{b}')\in \mathcal{A}^{\mathcal{C}}$ and $(v_\ell,v_{\ell+1},\hat{b}'+1-(\ell+1)\varepsilon)\in \mathcal{A}^{\mathcal{C}}$. Furthermore, we have that the first time-arc of every temporal path in $\mathcal{P}_{i,j}^\mathcal{C}$ is $(w_i,u_x,\hat{a}')$ and we have $(u_x,v_a,\hat{b}')\in \mathcal{A}^{\mathcal{C}}$ and $(u_x,v_a,\hat{b}'+1-a\varepsilon)\in \mathcal{A}^{\mathcal{C}}$. We call the collection of these time-arcs the \emph{early lane} $\hat{\mathcal{A}}_{i,j}^\mathcal{C}$ of $\mathcal{P}_{i,j}^\mathcal{C}$, formally $\hat{\mathcal{A}}_{i,j}^\mathcal{C}=\{(u_x,v_a,\hat{b}'),(u_x,v_a,\hat{b}'+1-a\varepsilon)\}\cup \{(v_\ell,v_{\ell+1},\hat{b}'),(v_\ell,v_{\ell+1},\hat{b}'+1-(\ell+1)\varepsilon)\mid v_\ell, v_{\ell+1}\in \hat{V}\}$. Notice that every static arc that appears as a time-arc in $\hat{\mathcal{A}}_{i,j}^\mathcal{C}$ appears exactly twice (i.e.\ with two different time labels). 

We now distinguish temporal $(w_i,w_j)$-paths $P$ with $V(P)\cap V_I=V^\star$ by the last vertex they visit using the early lane. Formally, given a temporal $(w_i,w_j)$-path $P$ with $V(P)\cap V_I=V^\star$, let $\hat{v}_P\in \hat{V}$ be the last vertex of the longest prefix $\hat{P}$ of $P$ such that $\hat{P}$ only consists of time-arcs in the early lane  $\hat{\mathcal{A}}_{i,j}^\mathcal{C}$. For $v_\ell\in V_I$, let $\mathcal{P}_{i,j}^\mathcal{C}(v_\ell)$ be the sets of all temporal $(w_i,w_j)$-paths $P$ with $V(P)\cap V_I=V^\star$ and $\hat{v}_P=v_\ell$; intuitively these are the temporal paths that `leave' the early lane at $v_\ell$. Note that we have $\mathcal{P}_{i,j}^\mathcal{C}=\bigcup_{v_\ell\in V_I}\mathcal{P}_{i,j}^\mathcal{C}(v_\ell)$, and since the sets $\mathcal{P}_{i,j}^\mathcal{C}(v_\ell)$ with $v_\ell\in V_I$ are by definition pairwise disjoint, we also have $|\mathcal{P}_{i,j}^\mathcal{C}|=\sum_{v_\ell\in V_I}|\mathcal{P}_{i,j}^\mathcal{C}(v_\ell)|$.

Finally, we show that for all $v_\ell\in V_I$ we have $|\mathcal{P}_{i,j}^\mathcal{C}(v_\ell)|\equiv 0\mod 2$, from which the claim then follows. Let $v_\ell\in V_I$. If $\mathcal{P}_{i,j}^\mathcal{C}(v_\ell)=\emptyset$ we are done, hence assume that $\mathcal{P}_{i,j}^\mathcal{C}(v_\ell)\neq\emptyset$. 
Recall that all temporal paths in $\mathcal{P}_{i,j}^\mathcal{C}(v_\ell)$ use time-arcs from the early lane $\hat{\mathcal{A}}_{i,j}^\mathcal{C}$ of $\mathcal{P}_{i,j}^\mathcal{C}$ until they reach $v_\ell$. We now partition $\mathcal{P}_{i,j}^\mathcal{C}(v_\ell)$ into $\mathcal{Q}_{i,j}^\mathcal{C}(v_\ell)$ and $\mathcal{R}_{i,j}^\mathcal{C}(v_\ell)$. Assume that $\ell>a$; the case that $\ell=a$ works analogously. The set $\mathcal{Q}_{i,j}^\mathcal{C}(v_\ell)$ contains all temporal paths from $\mathcal{P}_{i,j}^\mathcal{C}(v_\ell)$ that use the time-arc $(v_{\ell-1},v_{\ell},\hat{b}')$ and the set $\mathcal{R}_{i,j}^\mathcal{C}(v_\ell)$ contains all temporal paths from $\mathcal{P}_{i,j}^\mathcal{C}(v_\ell)$ that use the time-arc $(v_{\ell-1},v_{\ell},\hat{b}'+1-\ell\varepsilon)$. In both cases, the mentioned time-arc is the last time-arc from the early lane used by the temporal paths. 

We show that $|\mathcal{Q}_{i,j}^\mathcal{C}(v_\ell)|=|\mathcal{R}_{i,j}^\mathcal{C}(v_\ell)|$ by giving a \emph{bijection} $f:\mathcal{Q}_{i,j}^\mathcal{C}(v_\ell)\rightarrow \mathcal{R}_{i,j}^\mathcal{C}(v_\ell)$ between the two sets. This then implies that $|\mathcal{P}_{i,j}^\mathcal{C}(v_\ell)|\equiv 0\mod 2$. Given a temporal path $P\in \mathcal{Q}_{i,j}^\mathcal{C}(v_\ell)$, the function $f$ maps $P$ to $f(P)=P'$, where $P'$ is obtained from $P$ by replacing time-arc $(v_{\ell-1},v_{\ell},\hat{b}')$ with $(v_{\ell-1},v_{\ell},\hat{b}'+1-\ell\varepsilon)$. We first show that $P'$ is indeed a temporal path which implies that it is contained in the set $\mathcal{R}_{i,j}^\mathcal{C}(v_\ell)$. Since the new time-arc $(v_{\ell-1},v_{\ell},\hat{b}'+1-\ell\varepsilon)$ has a larger time label than the original one, the following time-arc in $P'$ cannot have a smaller time label in order for $P'$ to be a temporal path. However, note that the time-arc in $P'$ that directly follows $(v_{\ell-1},v_{\ell},\hat{b}'+1-\ell\varepsilon)$ is \emph{not} a time-arc from the early lane, hence its time label is at least $b'+1$. It follows that $P'$ is a temporal path. We now have shown that $f$ maps each element in $\mathcal{Q}_{i,j}^\mathcal{C}(v_\ell)$ to exactly one element in $\mathcal{R}_{i,j}^\mathcal{C}(v_\ell)$, meaning that $f$ is injective. To finish the proof, we consider the inverse function $f^{-1}:\mathcal{R}_{i,j}^\mathcal{C}(v_\ell)\rightarrow \mathcal{Q}_{i,j}^\mathcal{C}(v_\ell)$ of $f$, which maps temporal paths $Q\in \mathcal{R}_{i,j}^\mathcal{C}(v_\ell)$ to $f^{-1}(Q)=Q'$, where $Q'$ is obtained from $Q$ by replacing time-arc $(v_{\ell-1},v_{\ell},\hat{b}'+1-\ell\varepsilon)$ with $(v_{\ell-1},v_{\ell},\hat{b}')$. We show that $f^{-1}$ is also injective by showing that $Q'$ is a temporal path and hence element of $\mathcal{Q}_{i,j}^\mathcal{C}(v_\ell)$. Now the new time-arc in $Q'$ has a smaller time label than the original one, so the preceding time-arc cannot have a larger time label. Let $e$ be the time-arc in $Q'$ that directly precedes $(v_{\ell-1},v_{\ell},\hat{b}')$. We know that $e$ is also a time-arc from the early lane, hence its label is either also $\hat{b}'$ or $\hat{b}'+1-(\ell-1)\varepsilon$. However, if the label of $e$ was $\hat{b}'+1-(\ell-1)\varepsilon$, then it would be larger than the label of the replaced original time-arc, which had label $\hat{b}'+1-\ell\varepsilon$, a contradiction to the assumption that $Q$ is a temporal path. It follows that the label of $e$ is $\hat{b}'$ which implies that $Q'$ is a temporal path. This shows that also $f^{-1}$ is injective which means that $f$ is indeed a bijection.
%\renewcommand{\qedsymbol}{$\diamond$}
%\end{proof}
\end{claimproof}

We call an independent set \emph{colourful} if every vertex of the independent set has a different colour.
\begin{claim}\label{claim:count}
Let $\mathcal{C}\subseteq [k]$. The number of temporal $(s,z)$-paths in $\mathcal{G}^\mathcal{C}$ is even if and only if the number of colourful
independent sets $X$ in $(I,c)$ with 
% $\{i\mid \exists x\in X \text{ with } c(x)=i\}
$\{c(x) \mid x \in X\}
\subseteq\mathcal{C}$ is even.
\end{claim}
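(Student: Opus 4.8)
The plan is to show that, modulo $2$, the only temporal $(s,z)$-paths that matter are the ``clean'' ones, which correspond bijectively to nonempty colourful independent sets, while all remaining paths cancel in pairs via \cref{claim:cheating}. First I would establish a canonical \emph{segment decomposition} of any temporal $(s,z)$-path $P$. Since the only arcs leaving $s$ go to the vertices $w_i$ and the only arcs entering $z$ come from $z'$, the path has the form $s\to w_{i_1}, S_1, S_2,\ldots,S_m, z'\to z$, where $S_t$ runs from $w_{i_t}$ to $w_{i_{t+1}}$ for $t<m$ and from $w_{i_m}$ to $z'$ for $t=m$. Because each $w_i$ is visited only once, the colours $i_1,\ldots,i_m$ are pairwise distinct, and since $\mathcal{G}^{\mathcal{C}}$ only contains arcs from $\mathcal{A}_x$ with $c(x)\in\mathcal{C}$, every such colour lies in $\mathcal{C}$. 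Writing $V^\star_t:=V(S_t)\cap V_I$ for the \emph{footprint} of $S_t$, the footprints are pairwise disjoint (simplicity of $P$), and each is a contiguous block $\{v_a,\ldots,v_b\}$ traversed left-to-right via the arcs $v_j\to v_{j+1}$; crucially, the blocks themselves need \emph{not} be ordered along $V_I$, since $P$ may jump to an arbitrary left endpoint $v_a$ through the vertices $u_x$.

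The key structural observation I would prove next is that the \emph{entry time} and \emph{exit time} of a segment depend only on its footprint. To reach the leftmost vertex $v_a$ of $V^\star_t$ one must use the unique pair $y$ whose first interval has left endpoint $a$, so the first arc of $S_t$ is forced to be $(w_{i_t},u_y,a'_y)$ with $i_t=c(y)$; symmetrically, the last arc is forced to be an arc out of $v_b$ at time $b'_{y'}$ for the unique pair $y'$ whose first interval has right endpoint $b$. Consequently the time-compatibility conditions linking consecutive segments (namely $b'_{y'_t}\le a'_{y_{t+1}}$, together with the always-satisfiable initial and final conditions coming from times $1$ and $2|I|+1$) depend only on the sequence of footprints, and not on how each segment is realised internally. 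This yields a product formula: for a fixed ``skeleton'' (the sequence of starting $w$-vertices together with the footprints $V^\star_1,\ldots,V^\star_m$) with pairwise disjoint footprints satisfying the linking conditions, the number of temporal $(s,z)$-paths realising it equals the product over $t$ of the number of internal realisations of $S_t$, and for any other skeleton it is $0$.

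Now I would read off the parity. By \cref{claim:unique}, a segment whose footprint is the first interval of some $x\in I$ has exactly one realisation, while by \cref{claim:cheating} a segment whose footprint is \emph{not} the first interval of any pair has an even number of realisations. Hence, modulo $2$, a skeleton contributes $1$ if every footprint is a genuine first interval and the skeleton is realisable, and contributes $0$ otherwise. It then remains to biject realisable ``all-genuine'' skeletons with nonempty colourful independent sets $X$ satisfying $\{c(x):x\in X\}\subseteq\mathcal{C}$: such a skeleton is an ordered sequence $x_1,\ldots,x_m$ of pairs with distinct colours in $\mathcal{C}$, pairwise disjoint first intervals (disjoint footprints), and strictly increasing second intervals (the linking condition); forgetting the order gives a colourful independent set, and conversely each such independent set admits exactly one realisable order, namely the order induced by its pairwise-disjoint second intervals. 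Summing over skeletons shows that the number of temporal $(s,z)$-paths is congruent mod $2$ to the number of nonempty colourful independent sets with colours in $\mathcal{C}$, which establishes the claim.

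I expect the main obstacle to be the structural lemma that entry and exit times are determined by the footprint, since this is exactly what decouples the segments and legitimises both the product formula and the mod-$2$ cancellation of cheating paths. A second delicate point is recognising that the correct linear order on a valid independent set is induced by the \emph{second} (time) intervals rather than the track-one intervals, which is precisely why disjointness on \emph{both} tracks is the right independence condition: a path may revisit the track-one vertices in any order by re-entering through the $u_x$ gadgets. Finally, I would flag the harmless bookkeeping that the bijection is with \emph{nonempty} colourful independent sets; the empty set contributes the same fixed parity to every $\mathcal{G}^{\mathcal{C}}$ and is absorbed when the $2^{k}$ graphs $\mathcal{G}^{\mathcal{C}}$ are later combined by inclusion--exclusion.
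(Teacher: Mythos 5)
Your proposal is correct and follows essentially the same route as the paper: the same decomposition of a temporal $(s,z)$-path into segments between consecutive $w$-vertices, the same use of \cref{claim:unique} and \cref{claim:cheating} to handle genuine and cheating footprints respectively, and the same bijection between clean paths and colourful independent sets ordered by their second (time) intervals. Your skeleton/product-formula framing is a somewhat more explicit justification of the step the paper compresses into ``by \cref{claim:cheating}, $|\mathcal{Q}^\mathcal{C}|\equiv 0 \bmod 2$'' (one does need the observation that entry and exit times are forced by the footprint so that segment realisations decouple), and your remark about the empty independent set is a fair, if minor, point of bookkeeping that the paper absorbs into setting $F(\emptyset)=0$ in the subsequent dynamic program.
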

%\begin{proof}[Proof of claim]
\begin{claimproof}[Proof of claim]
Let $\mathcal{C}\subseteq [k]$ and let $\mathcal{P}^\mathcal{C}$ be the set of all temporal $(s,z)$-paths in $\mathcal{G}^\mathcal{C}$. 
We partition the set $\mathcal{P}^\mathcal{C}$ into two parts, $\mathcal{Q}^\mathcal{C}$ and $\mathcal{R}^\mathcal{C}$. Intuitively, the set $\mathcal{Q}^\mathcal{C}$ will contain ``cheating'' temporal $(s,z)$-paths, that do not correspond to a colourful independent set in instance $(I,c)$ that uses only colours in $\mathcal{C}$. We show that $|\mathcal{Q}^\mathcal{C}|\equiv 0\mod 2$. The set $\mathcal{R}^\mathcal{C}$ will contain temporal $(s,z)$-paths that have a one-to-one correspondence with colourful independent sets in $(I,c)$ that use colours in $\mathcal{C}$.
We show that $|\mathcal{R}^\mathcal{C}|=|\{X\mid X \text{ is a colourful independent set in } (I,c) \text{ and each element in } X \text{ has a colour from }\mathcal{C}\}|$. 
From this the claim then follows.

Let $P\in \mathcal{P}^\mathcal{C}$. If for some $i,j\in\mathcal{C}$ the temporal path $P$ contains a temporal subpath $P'$ from $w_i$ to $w_j$ or from $w_i$ to $z'$ such that $V(P')\cap V_I=\{v_a,v_{a+1},\ldots,v_b\}$ for some $[a,b]$ and for all $a',b'$ we have that $([a,b],[a',b'])\notin I$, then we define $P$ to be contained in $\mathcal{Q}^\mathcal{C}$. Otherwise, $P$ is contained in $\mathcal{R}^\mathcal{C}$. By \cref{claim:cheating} we have that $|\mathcal{Q}^\mathcal{C}|\equiv 0\mod 2$. It follows that $|\mathcal{P}^\mathcal{C}|\equiv |\mathcal{R}^\mathcal{C}|\mod 2$.

Let $P\in \mathcal{R}^\mathcal{C}$. Then we can construct a colourful independent set in $(I,c)$ that uses colours in $\mathcal{C}$ as follows. We identify the following temporal subpaths of $P$: 
\begin{itemize}
    \item $P_{i,j}$ is the subpath from $w_i$ to $w_j$, where we assume that no other vertex in $\{w_1,\ldots,w_k\}$ is visited by $P_{i,j}$, that is, $V(P_{i,j})\cap\{w_1,\ldots,w_k\}=\{w_{i},w_{j}\}$.
    \item $P_{i,z'}$ is the subpath from the ``last'' $w_i$ visited by $P$ to $z'$, that is, $V(P_{i,z'})\cap\{w_1,\ldots,w_k\}=\{w_i\}$.
\end{itemize}
For each of these subpaths $P_{i,j}$ we know that there exists an $x=([a,b],[a',b'])\in I$ such that $V(P_{i,j})\cap V_I=\{v_a,v_{a+1},\ldots,v_b\}$, otherwise we would have $P\in \mathcal{Q}^\mathcal{C}$. By construction of $\mathcal{G}^\mathcal{C}$ we also know that $c(x)=i$. Analogously we can make the same observation for  $P_{i,z'}$. Hence, we can identify each of the above defined subpaths with an element $x$ of $I$. Let $X$ denote the set of those elements. By construction we know that $X$ is colourful and uses colours from $\mathcal{C}$; what remains to show is that it is an independent set in $(I,c)$. Assume for contradiction that $X$ is not an independent set in $(I,c)$. Then there are two distinct interval pairs $x=([a,b],[a',b'])\in X$ and $x'=([c,d],[c',d'])\in X$ such that $[a,b]\cap[c,d]\neq\emptyset$ or $[a',b']\cap[c',d']\neq\emptyset$. Let $P_{i,j}$ and $P_{i',j'}$ be the two subpaths corresponding to $x$ and $x'$, respectively (here we allow $j=z'$ and $j'=z'$).
\begin{itemize}
    \item If $[a,b]\cap[c,d]\neq\emptyset$, then $V(P_{i,j})\cap V(P_{i',j'})\neq\emptyset$. This is a contradiction to the assumption that $P$ is a temporal \emph{path}, since vertices are visited multiple times.
    \item If $[a',b']\cap[c',d']\neq\emptyset$, then $P$ contains the time-arcs $(w_{c(x)}, u_x,a'), (u_x,v_a,b')$ consecutively and also time-arcs $(w_{c(x')}, u_{x'},c'), (u_{x'},v_c,d')$ consecutively (since $P$ contains $P_{i,j}$ and $P_{i',j'}$). This is a contradiction to the assumption that $P$ is a \emph{temporal} path, since the time labels on the time-arcs are not non-decreasing. 
\end{itemize}
Hence, we can conclude that $X$ is a colourful independent set in $(I,c)$ that uses colours from $\mathcal{C}$. Furthermore, by \cref{claim:unique} we have that for different temporal paths $P,P'\in \mathcal{R}^\mathcal{C}$, we obtain different colourful independent sets $X,X'$ in $(I,c)$ using colours from $\mathcal{C}$, that is, we have described an injective mapping from the temporal paths in $\mathcal{R}^\mathcal{C}$ to the colourful independent sets in $(I,c)$ that use colours from $\mathcal{C}$.

Now let $X$ be a colourful independent set in $(I,c)$ that uses colours from $\mathcal{C}$. Then we can construct a temporal $(s,z)$-path $P$ in $\mathcal{G}^\mathcal{C}$ with $P\in \mathcal{R}^\mathcal{C}$ as follows. Let $X=\{x_1, x_2,\ldots,x_{|X|}\}$ such that for all $x_i=([a,b],[a',b'])\in X$ and $x_j=([c,d],[c',d'])\in X$ we have $i<j$ if and only if $b'<c'$. Note that this indexing is well-defined since $X$ is an independent set in $(I,c)$. For each $x_i=([a,b],[a',b'])\in X$ with $i<|X|$ we define a path segment $P_i$ as follows. The path segment $P_i$ starts at $w_{c(x_i)}$, ends at $w_{c(x_{i+1})}$, and visits vertices $u_{x_i}$ and $v_a, v_{a+1},\ldots, v_b$ in $V_I$. It does so using the following time-arcs: $(w_{c(x_i)}, u_{x_i},a'), (u_{x_i},v_a,b'), (v_a,v_{a+1},b'), \ldots, (v_{b-1},v_b,b'), (v_b,w_{c(x_{i+1})},b')$. Note that all mentioned time-arcs are present in $\mathcal{G}^\mathcal{C}$ by construction and since $a'\le b'$ we have that $P_i$ is a temporal path segment. The path segment $P_{|X|}$ starts at $w_{c(x_{|X|})}$, ends at $z'$, and is defined analogously. Now the temporal path $P$ starts with the time-arc $(s,w_{x_1},1)$, then uses $P_1,P_2,\ldots,P_{|X|}$, and finally ends with time-arc $(z',z,2|I|+1)$. Assume for contradiction that $P$ is not a temporal path.
\begin{itemize}
    \item Note that by construction, $P$ visits each vertex from $\{u_{x_1},u_{x_2},\ldots,u_{x_|X|}\}$ at most once. If $P$ visits a vertex from $\{w_1, \ldots, w_k\}$ multiple times, then we have a contradiction to $X$ being colourful. If $P$ visits a vertex from $V_I$ multiple times, then we have a contradiction to $X$ being an independent set in $(I,c)$, since the ``first'' intervals of two interval pairs in $X$ are intersecting.
    \item If $P$ is not \emph{temporal}, that is, the time-arcs in $P$ are not non-decreasing, then we have a contradiction to $X$ being an independent set in $(I,c)$, since the ``second'' intervals of two interval pairs in $X$ are intersecting.
\end{itemize}
Note that by construction we also have that $P\in \mathcal{R}^\mathcal{C}$.  Furthermore, since we assume all interval pairs in $I$ have distinct endpoints, we have that for different colourful independent sets $X,X'$ in $(I,c)$ using colours from $\mathcal{C}$, we obtain different temporal paths $P,P'\in \mathcal{R}^\mathcal{C}$, that is, we have described an injective mapping from the colourful independent sets in $(I,c)$ that use colours from $\mathcal{C}$ to the temporal paths in $\mathcal{R}^\mathcal{C}$.

Overall, we now have shown that the set $\mathcal{R}^\mathcal{C}$ contains temporal $(s,z)$-paths that have a one-to-one correspondence with colourful independent sets in $(I,c)$ that use colours in $\mathcal{C}$, which implies that  $|\mathcal{R}^\mathcal{C}|=|\{X\mid X \text{ is a colourful independent set in } (I,c) \text{ and each vertex in } X \text{ has a}$ $\text{colour from }\mathcal{C}\}|$. This finishes the proof of the claim. %%%%%% contains artificial line break
%\renewcommand{\qedsymbol}{$\diamond$}
%\end{proof}
\end{claimproof}

Using an oracle $A$ for \TemporalPaths, we can solve the \textsc{$\oplus$Multicoloured Independent Set on 2-Track Interval Graphs} instance $(I,c)$ as follows. We use a dynamic programming table $F : 2^{[k]} \rightarrow \{0,1\}$ where $F(\mathcal{C})$ for some $\mathcal{C}\subseteq [k]$ equals the parity of independent sets in $(I,c)$ that have exactly one vertex of each colour in $\mathcal{C}$.
\begin{align*}
    F(\emptyset) &= 0,\\
    F(\mathcal{C}) &= (A(\mathcal{G}^{\mathcal{C}})+\sum_{\mathcal{C}'\subset \mathcal{C}} F(\mathcal{C}'))\bmod 2.
\end{align*}
The correctness follows directly from \cref{claim:count}. Computing $F([k])$ requires $\mathcal{O}(2^k)$ calls to the oracle $A$ and $\mathcal{O}(4^k)\cdot |(I,c)|^{\mathcal{O}(1)}$ time overall (including computing the temporal graphs $\mathcal{G}^\mathcal{C}$).

\smallskip
\noindent\emph{Replacing directed time-arcs with undirected time-edges.} Lastly, we show how to replace the directed time-arcs in the constructed temporal graphs with undirected time-edges such that the correctness of our reduction is preserved. The main idea is to use appropriate edge subdivisions. We assume these replacements are done before scaling all time labels to obtain integer time labels.
\begin{itemize}
\item For all $x=([a,b],[a',b'])\in I$ with $c(x)\in\mathcal{C}$ and all $i\in\mathcal{C}\setminus\{c(x)\}$ we add a new vertex $u_{(x,i)}$ replace each time-arc $(v_b,w_i,b')$ with the two time-edges $(\{v_b,u_{(x,i)}\},b')$, $(\{u_{(x,i)},w_i\},b'+\varepsilon)$.
\item For all $x=([a,b],[a',b'])\in I$ we add a new vertex $u_{(x,z')}$ replace each time-arc $(v_b,z',b')$ with the two time-edges $(\{v_b,u_{(x,z')}\},b')$, $(\{u_{(x,i)},z'\},b'+\varepsilon)$.
%\item For all $i\in\mathcal{C}$ we add a new vertex $w'_i$ replace each time-arc $(s,w_i,1)$ with the two time-edges $(\{s,w'_i\},1-\varepsilon)$, $(\{w'_i,w_i\},1)$.
%\item For all $x=([a,b],[a',b'])\in I$ with $c(x)\in\mathcal{C}$ we add a new vertex $u'_x$ replace each time-arc $(w_{c(x)},u_x,a')$ with the two time-edges $(\{w_{c(x)},u'_x\},a')$, $(\{u'_x,u_x\},a'+\varepsilon)$.
\item We replace all other time-arcs with undirected versions of themselves.
\end{itemize}

\begin{figure}[t]
\begin{center}
\begin{tikzpicture}[line width=1pt, scale=.8, xscale=1.2]

    \node[vert,label=below:$v_1$] (V1) at (0,0) {}; 
    \node[vert,label=below:$v_2$] (V2) at (2,0) {}; 
    \node[vert,label=below:$v_3$] (V3) at (4,0) {}; 
    \node[vert,label=below:$v_4$] (V4) at (6,0) {}; 
    \node[vert,label=below:$v_5$] (V5) at (8,0) {}; 
    \node[vert,label=below:$v_6$] (V6) at (10,0) {};
    \node (A) at (11.25,0) {$\ldots$};
    \node[vert,label=below:$v_{2|I|}$] (Vi) at (12.5,0) {};
    
    \node[vert,label=left:$u_{x}$] (U1) at (0,1.5) {}; 
    %\node[vert,label=left:$u_{x_2}$] (U2) at (4,1.5) {}; 

    \node[vert,label=left:$u_{(x,2)}$] (U11) at (4,2) {}; 
    \node[vert,label=left:$u_{(x,3)}$] (U12) at (6,2) {}; 
    \node[vert,label=left:$u_{(x,k)}$] (U13) at (8.5,2) {}; 
    \node[vert,label=right:$u_{(x,z')}$] (U14) at (10.5,2) {}; 
    
    \node[vert,label=left:$s$] (S) at (0,4) {}; 
    \node[vert,label=above:$z'$] (Z1) at (10.5,4) {};
    \node[vert,label=right:$z$] (Z2) at (12.5,4) {};
    \node[vert,label=above:$w_1$] (W1) at (2,4) {};
    \node[vert,label=above:$w_2$] (W2) at (4,4) {};
    \node[vert,label=above:$w_3$] (W3) at (6,4) {};
    \node (B) at (7.25,4) {$\ldots$};
    \node[vert,label=above:$w_k$] (Wk) at (8.5,4) {};

	\draw (S) --node[timelabel] {$1$} (W1);
	\draw (S) edge[bend left=50] node[timelabel] {$1$} (W2);
	\draw (S) edge[bend left=50] node[timelabel] {$1$} (W3);
	\draw (S) edge[bend left=50] node[timelabel] {$1$} (Wk);
	\draw (Z1) --node[timelabel] {$2|I|+1$} (Z2);

	\draw[red] (W1) --node[timelabel] {$2$} (U1);
	\draw[red] (U1) --node[timelabel] {$5,6-\varepsilon$} (V1);
	
	\draw[red] (V4) --node[timelabel] {$5$} (U14);
	\draw[red] (V4) --node[timelabel] {$5$} (U11);
	\draw[red] (V4) --node[timelabel] {$5$} (U12);
	\draw[red] (V4) --node[timelabel] {$5$} (U13);
	
	\draw[red] (U14) --node[timelabel] {$5+\varepsilon$} (Z1);
	\draw[red] (U11) --node[timelabel] {$5+\varepsilon$} (W2);
	\draw[red] (U12) --node[timelabel] {$5+\varepsilon$} (W3);
	\draw[red] (U13) --node[timelabel] {$5+\varepsilon$} (Wk);
	
	\draw[red] (V1) --node[timelabel] {$5,6-2\varepsilon$} (V2);
	\draw[red] (V2) --node[timelabel] {$5,6-3\varepsilon$} (V3);
	\draw[red] (V3) --node[timelabel] {$5,6-4\varepsilon$} (V4);
	%\draw[diredge,violet] (V3) --node[timelabel] {\begin{tabular}{@{}c@{}}\textcolor{red}{$5,6-4\varepsilon$}\\ \textcolor{blue}{$4,5-4\varepsilon$}\end{tabular}} (V4);
	
	%\draw[diredge,blue] (W3) --node[timelabel] {$1$} (U2);
	%\draw[diredge,blue] (U2) --node[timelabel] {$4,5-3\varepsilon$} (V3);
	%\draw[diredge,blue] (V6) --node[timelabel] {$4$} (Z1);
	%\draw[diredge,blue] (V6) edge[bend left=15] node[timelabel] {$4$} (W1);
	%\draw[diredge,blue] (V6) edge[bend left=5] node[timelabel] {$4$} (W2);
	%\draw[diredge,blue] (V6) --node[timelabel] {$4$} (Wk);
	%\draw[diredge] (V3) --node[timelabel] {$4,5-4\varepsilon$} (V4);
	%\draw[diredge,blue] (V4) --node[timelabel] {$4,5-5\varepsilon$} (V5);
	%\draw[diredge,blue] (V5) --node[timelabel] {$4,5-6\varepsilon$} (V6);
\end{tikzpicture}
    \end{center}
    \caption{Illustration of the modified undirected version of $\mathcal{G}^\mathcal{C}$ with $\mathcal{C}=\{1,3\}$ and one interval pair $x=([1,4],[2,5])\in I$ with $c(x)=1$. The edges added for $x$ are depicted in red.}\label{fig:hardness2}
\end{figure}

Consider the undirected temporal graph obtained from the modifications above, an illustration is given in \cref{fig:hardness2}. We can make the following observations.
\begin{itemize}
    \item Once a temporal $(s,z)$-path arrives at a vertex $w_i$ for some $i\in \mathcal{C}$, then it can only continue to a vertex $u_{x}$ for some $x\in I$ with $c(x)=i$. It cannot continue to a vertex $u_{(x,i)}$, since from there it cannot continue without revisiting vertex $w_i$. For the same reason it cannot return to vertex~$s$.
    \item Once a temporal $(s,z)$-path arrives at a vertex $u_x$ for some $x=([a,b],[a',b'])\in I$, then it can only continue to vertex $v_a$, since otherwise it would revisit vertex $w_{c(x)}$.
    \item Once a temporal $(s,z)$-path arrives at a vertex $u_{(x,i)}$ for some $x=([a,b],[a',b'])\in I$ with $c(x)\neq i$, then it can only continue to a vertex $w_i$, since otherwise it would revisit vertex $v_b$.    
    \item Once a temporal $(s,z)$-path arrives at a vertex $u_{(x,z')}$ for some $x=([a,b],[a',b'])\in I$, then it can only continue to a vertex $z'$, since otherwise it would revisit vertex $v_b$.
    \item Once a temporal $(s,z)$-path arrives at vertex $z'$, then it can only continue to vertex $z$, since otherwise it would revisit vertex $z'$.
\end{itemize}
Furthermore, we can observe that if a temporal $(s,z)$-path arrives at a vertex $v_a$ coming directly from a vertex $v_{a-1}$, it can only continue to $v_{a+1}$ or some vertex $u_{(x,i)}$ for some $x\in I$ with $c(x)\neq i$. It cannot continue to a vertex $u_{x}$ for some $x\in I$, since from there it cannot continue without revisiting vertices.

In all so far mentioned cases, the temporal $(s,z)$-paths follow edges along the directions they had in the directed temporal graph construction.

The only ``problematic'' case left is when a temporal $(s,z)$-path arrives at a vertex $v_a$ coming directly from vertex $u_x$ for $x=([a,b],[a',b'])\in I$. Then it can continue in the ``wrong direction'' to $v_{a-1}$. However, since we assume all interval pairs in $I$ have unique endpoint, the edge $\{v_{a-1},v_a\}$ does not have label $b'$ and the next larger label is at least $b'+1$, whereas the edge $\{u_x,v_a\}$ has labels $b'$ and $b'+1-a\varepsilon$. It follows that this has the same effect as ``leaving the early lane'', see proof of \cref{claim:cheating}, and hence we have that there is an even number of such temporal $(s,z)$-paths. Formally, this can be shown by an analogous proof to the one of \cref{claim:cheating}.

Lastly, it is easy to verify that the modifications do not increase the feedback vertex number of the underlying graph, because we only subdivide edges.
\end{proof}

\subsection{Approximation Hardness}\label{sec:approx-hard}

In this section, we prove the following result.
\begin{theorem}\label{cor:approxhardness}
There is no fully polynomial randomised approximation scheme (FPRAS) for \TemporalPaths unless randomised polynomial time (RP) equals NP.
\end{theorem}

It is straightforward to reduce from \Paths, the problem of counting $(s,z)$-paths in a static graph, to \TemporalPaths: we set every edge to be active at time one only.  Hardness of \Paths is proved easily by imitating the reduction used by \citet{jerrum86generation}
to demonstrate that there is no FPRAS to count directed cycles in a directed graph.\footnote{Indeed, the fact that this technique can be adapted to demonstrate the hardness of approximately counting $(s,z)$-paths is noted without proof by
\citet{sinclairThesis}. }  We note that the reduction also rules out the existence of any polynomial-time (randomised) approximation algorithm achieving any polynomial additive error.

\begin{theorem}
\label{thm:approxhardness}
There is no FPRAS for \Paths unless RP$=$NP.
\end{theorem}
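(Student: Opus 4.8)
The plan is to give a gap-producing reduction from undirected \textsc{Hamiltonian Path} between two specified vertices $s,z$, which is NP-complete. The key difficulty to keep in mind is that the \emph{decision} version of \Paths is trivial — there is an $(s,z)$-path if and only if $s$ and $z$ lie in the same component — so, unlike the usual template for inapproximability, I cannot simply invoke NP-hardness of the underlying decision problem. Instead I must engineer a reduction in which ``yes''-instances produce a path count that exceeds that of ``no''-instances by a factor larger than any constant, so that even a crude relative approximation separates the two cases.

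The weighting idea is as follows. Given a graph $G$ on $n$ vertices, I would build a reduced graph $G'$ whose number of $(s,z)$-paths equals $\sum_{P} B^{|E(P)|}$, where the sum ranges over simple $(s,z)$-paths $P$ of $G$ and $B$ is a large multiplier; this rewards \emph{long} paths. A Hamiltonian $(s,z)$-path uses exactly $n-1$ edges, whereas if none exists then every $(s,z)$-path uses at most $n-2$ edges. Since $G$ has at most $n!$ simple $(s,z)$-paths, the ``yes'' count is at least $B^{n-1}$ while the ``no'' count is at most $n!\cdot B^{n-2}$, so any $B$ exceeding $n!$ by more than a constant factor already gives $B^{n-1}>n!\cdot B^{n-2}$; taking $B=2^{n^2}$ makes the gap exponential.

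The main technical point is realising the multiplier $B$ using only polynomially many vertices, which I would do with a \emph{chain of diamonds}. I replace each edge $\{u,v\}$ of $G$ by a series of $p$ two-way parallel gadgets strung between $u$ and $v$, each offering an independent binary choice, so that the gadget contains exactly $2^p$ distinct simple $u$--$v$ paths while using only $O(p)$ vertices; setting $p=n^2$ yields $B=2^{n^2}$ and a graph $G'$ of size polynomial in $n$. I would then verify the counting identity by a bijection: because every non-endpoint vertex of a gadget is private to that gadget and a simple path cannot revisit $u$ or $v$, each simple $(s,z)$-path of $G'$ decomposes uniquely into a simple $(s,z)$-path $P$ of $G$ together with an independent choice, for each edge of $P$, of one of the $2^p$ internal routes, and conversely. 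Hence $G'$ has exactly $\sum_P (2^p)^{|E(P)|}$ paths.

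Finally I would invoke the definition of an FPRAS. Running a hypothetical FPRAS for \Paths on $G'$ with fixed constants, say $\varepsilon=\tfrac14$ and $\delta=\tfrac14$, returns in polynomial time a value within a factor $(1\pm\varepsilon)$ of the true count with probability at least $3/4$; since the ``yes'' and ``no'' counts differ by far more than the factor $\tfrac{1+\varepsilon}{1-\varepsilon}$, thresholding this value decides \textsc{Hamiltonian Path} with bounded two-sided error. This places an NP-complete problem in BPP, and since $\mathrm{NP}\subseteq\mathrm{BPP}$ implies $\mathrm{NP}=\mathrm{RP}$, we conclude $\mathrm{RP}=\mathrm{NP}$. (The stated consequence for \TemporalPaths then follows immediately, as observed before the theorem, by activating every edge at a single time in the non-strict case, or at all of $\{1,\dots,m\}$ in the strict case.)
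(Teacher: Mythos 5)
Your proposal is correct and follows essentially the same route as the paper: both replace each edge by a chain of diamond gadgets so that the path count becomes $\sum_P B^{|E(P)|}$, use the resulting exponential gap between Hamiltonian and non-Hamiltonian instances, and threshold the FPRAS output to decide \textsc{Hamiltonian Path}. The only differences are cosmetic — you reduce from the two-terminal variant of \textsc{Hamiltonian Path} where the paper iterates over all vertex pairs $(u,v)$ with pendant leaves, and you take $B=2^{n^2}$ where the paper takes $2^{\lceil 2n\log n\rceil}$ per edge.
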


\begin{proof}
We show that the existence of an FPRAS for \Paths would give a randomised polynomial-time algorithm (with one-sided error) to solve the NP-hard problem \textsc{Hamilton Path}.  We begin by observing that the existence of an FPRAS for \Paths implies the existence of an FPAUS to sample approximately uniformly at random from the set of $(s,z)$-paths.  This fact follows directly from a general result of \citet{jerrum86generation}, together with the fact that \Paths is downward self-reducible (the set of solutions can be partitioned according to the first edge of the path, with solutions including $sv$ as the first edge corresponding to $(v,z)$-paths in the graph obtained by deleting $s$).  Fron mow on, therefore, we assume the existence of an FPAUS to sample $(s,z)$-paths, and show that this gives rise to a randomised polynomial-time algorithm for \textsc{Hamilton Path}.

Let $G=(V,E)$ be an instance of \textsc{Hamilton Path} and suppose that $G$ has $n \ge 2$ vertices and $m$ edges; we will construct a collection $\{(G_{u,v},s,z):u,v \in V\}$, where $s$ and $z$ are distinguished vertices in each graph $G_{u,v}$, such that:
\begin{itemize}
    \item if $G$ is a yes-instance for \textsc{Hamilton Path}, then there is some graph $G_{u,v}$ in which at least $2/3$ of all $(s,z)$-paths contain at least $2(n+1)\lceil 2 n \log n \rceil$ edges, and
    \item if $G$ is a no-instance for \textsc{Hamilton Path}, then none of the graphs $G_{u,v}$ contains a $(s,z)$-path with more than $2n \lceil 2 n \log n \rceil$ edges.
\end{itemize}

% the maximum number of $(s,z)$-paths over all graphs $G_{u,v}$ is at least $2^{2(n+1)n \log n}$ if $G$ is a yes-instance for \textsc{Hamilton Path} and at most $2^{2n(n+1)\log n}/n^n$ if $G$ is a no-instance.

We now describe how to construct each graph $G_{u,v}$.  As an intermediate step, we construct an auxiliary graph $G_{u,v}'$ by adding two pendant leaves, $s$ and $z$, adjacent respectively to $u$ and $v$.  Observe that there is an $(s,z)$-path with $n$ internal vertices in $G_{u,v}'$ if and only if there is a Hamilton path in $G$ that starts at $u$ and ends at $v$.  We now define the crucial gadget of our reduction, which we call an $(x,y,\ell)$-diamond gadget.  This gadget, illustrated in \cref{fig:diamond-gadget}, consists of $\ell$ copies of $C_4$, each of which has a vertex in common with the next copy; $x$ and $y$ are vertices of the two end copies.  Formally, the $(x,y,\ell)$-diamond gadget has vertices $x = z_0,\dots,z_k = y$ and $w_i^j$ for $1 \le i \le k$ and $j \in \{1,2\}$; each vertex $w_i^j$ is adjacent to $z_{i-1}$ and $z_i$.  We then obtain $G_{u,v}$ from $G'_{u,v}$ by replacing each edge $\{x,y\}$ in $G_{u,v}'$ with an $(x,y,\ell)$-diamond gadget, where $\ell := \lceil 2 n \log n \rceil$.
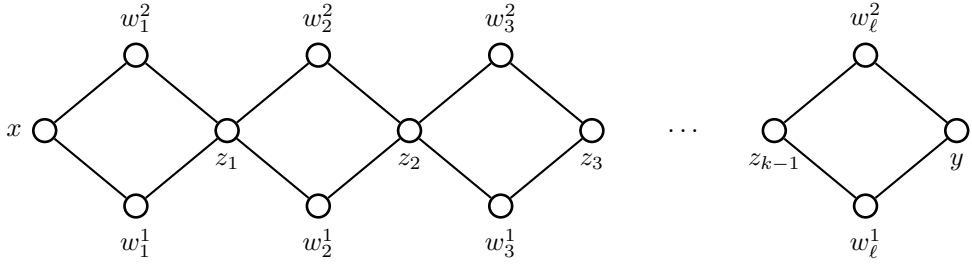
\begin{figure}[t]
\begin{center}
\begin{tikzpicture}[line width=1pt, scale=1, xscale=1.2]

    \node[vert,label=left:$x$] (x) at (0,0) {}; 
    \node[vert,label=below:$z_1$] (z1) at (2,0) {}; 
    \node[vert,label=below:$z_2$] (z2) at (4,0) {}; 
    \node[vert,label=below:$z_3$] (z3) at (6,0) {}; 
    \node (A) at (7,0) {$\ldots$};
    \node[vert,label=below:$z_{k-1}$] (zk-1) at (8,0) {}; 
    \node[vert,label=below:$y$] (y) at (10,0) {};
    \node[vert,label=below:$w_1^1$] (w11) at (1,-1) {};
    \node[vert,label=above:$w_1^2$] (w12) at (1,1) {};
    \node[vert,label=below:$w_2^1$] (w21) at (3,-1) {};
    \node[vert,label=above:$w_2^2$] (w22) at (3,1) {};
    \node[vert,label=below:$w_3^1$] (w31) at (5,-1) {};
    \node[vert,label=above:$w_3^2$] (w32) at (5,1) {};
    \node[vert,label=below:$w_\ell^1$] (wk1) at (9,-1) {};
    \node[vert,label=above:$w_\ell^2$] (wk2) at (9,1) {};
    
    \draw[edge] (x) -- (w11);
    \draw[edge] (x) -- (w12);
    \draw[edge] (z1) -- (w11);
    \draw[edge] (z1) -- (w12);
    \draw[edge] (z1) -- (w21);
    \draw[edge] (z1) -- (w22);
    \draw[edge] (z2) -- (w21);
    \draw[edge] (z2) -- (w22);
    \draw[edge] (z2) -- (w31);
    \draw[edge] (z2) -- (w32);
    \draw[edge] (z3) -- (w31);
    \draw[edge] (z3) -- (w32);
    \draw[edge] (zk-1) -- (wk1);
    \draw[edge] (zk-1) -- (wk2);
    \draw[edge] (y) -- (wk1);
    \draw[edge] (y) -- (wk2);
\end{tikzpicture}
    \end{center}
    \caption{The construction of an $(x,y,\ell)$-diamond gadget.}\label{fig:diamond-gadget}
\end{figure}

Suppose that there is an $(s,z)$-path in $G_{u,v}'$ which contains $n$ vertices of $G$.  In this case, the number of $(s,z)$-paths in $G_{u,v}$ with $2(n+1)\ell$ edges is at least $2^{(n+1)\ell} \ge 2^{2n(n+1)\log n} = n^{2n(n+1)}$, since this path corresponds to a sequence of $(n+1)\ell$ diamonds in $G_{u,v}$, each of which can be traversed in two distinct ways (via $w_i^1$ or $w_i^2$).  We now bound the number of shorter $(s,z)$-paths in $G_{u,v}$; note that any such path contains at most $2n\ell$ edges, and includes at most $n$ vertices of $V(G_{u,v}')$.  There are at most $n^n$ sequences in which vertices of $V(G_{u,v}')$ could appear on such a path, and each such sequence corresponds to $2^{n\ell}$ distinct $(s,z)$-paths (as it corresponds to a sequence of $n \ell$ diamonds).  Thus, the total number of shorter $(s,z)$-paths is at most $n^n2^{n\ell}$.  It follows that the proportion of $(s,z)$-paths in $G_{u,v}$ with at least $2(n+1)\ell$ edges is at least
\[
    \frac{2^{(n+1)\ell}}{2^{(n+1)\ell} + n^n2^{n \ell}} = \frac{2^{\ell}}{2^{\ell} + n^n} \ge \frac{n^{2n}}{n^{2n} + n^n} \ge 2/3,
\]
and hence a uniformly random path in $G_{u,v}$ has at least $2(n+1)\lceil 2n \log n \rceil$ edges with probability at least $2/3$.  Note that, if there is no $(s,z)$-path in $G_{u,v}'$ which contains $n$ vertices of $G$, then $G_{u,v}$ contains no $(s,z)$-path of length $2(n+1)\lceil 2n \log n \rceil$.

It now follows immediately that the following algorithm correctly determines the existence of a Hamilton path in $G$ with high probability.  First, construct the graphs $G_{u,v}$ as described above; it is clear that this can be done in polynomial time.  Next, apply the assumed FPAUS (with error parameter $\delta = 1/12$) to sample an approximately uniform $(s,z)$-path in each $G_{u,v}$.  If any of these sampled paths contains at least $2(n+1)\lceil 2n \log n \rceil$ edges, return YES; otherwise, return NO.  If $G$ does not contain a Hamilton path, then we are certain to return NO, whereas if there is Hamilton path in $G$ we will return YES with probability at least $7/12 > 1/2$.
\end{proof}

\section{Exact Algorithms for Temporal Path Counting}\label{sec:exactalgs}
In this section, we present several exact algorithms for \TemporalPaths. We start in \cref{sec:ptimealgs} with a polynomial-time algorithm for temporal graphs that have a forest as underlying graph.
%, and indicate how this approach could be extended to underlying series-parallel graphs. 
In \cref{sec:forestgeneralization} we show that our polynomial-time algorithm can be generalised in two ways, obtaining FPT-algorithms for the so-called timed feedback vertex number and the feedback edge number of the underlying graph. In \cref{sec:treewidth} we show that \TemporalPaths is in FPT when parameterised by the treewidth of the underlying graph and the lifetime combined. %We do this by defining an MSO-encoding of the problem and applying a counting version of Courcelle's Theorem. 
Lastly, in \cref{sec:intervalmembership}, we give an FPT algorithm for \TemporalPaths parameterised by the so-called vertex-interval-membership-width.

\subsection{A Polynomial Time Algorithms for Forests}\label{sec:ptimealgs}
As a warm-up, we note that \TemporalPaths can be solved in polynomial time with a simple dynamic program if the underlying graph is a forest. This is used as a subroutine for algorithms presented in \cref{sec:forestgeneralization}.

\begin{theorem}
\label{thm:ptimeforest}
\TemporalPaths is solvable in $\mathcal{O}(|V|\cdot T^2)$ time if the underlying graph of the input temporal graph is a forest.
\end{theorem}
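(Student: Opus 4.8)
The plan is to exploit the fact that in a forest there is at most one simple path in the underlying graph between $s$ and $z$, so that all combinatorial freedom in a temporal $(s,z)$-path lies in the choice of time labels along a single fixed sequence of edges. First I would compute, by one traversal of the underlying forest in $\mathcal{O}(|V|)$ time, the unique $(s,z)$-path $s = u_0, u_1, \ldots, u_k = z$; if $s$ and $z$ lie in different components then no such path exists and the answer is $0$. Writing $e_j = \{u_{j-1},u_j\}$ for the $j$-th edge and $L_j = \{t \in [T] \colon (e_j,t) \in \mathcal{E}\}$ for its set of active times, a temporal $(s,z)$-path is then in bijection with a non-decreasing sequence $(t_1,\ldots,t_k)$ with $t_j \in L_j$ for every $j$ and $t_1 \le t_2 \le \cdots \le t_k$. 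Hence it suffices to count such sequences.

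To count them I would run a dynamic program over the edges of the path. For $j \in [k]$ and $t \in [T]$, let $D(j,t)$ be the number of valid non-decreasing assignments to $e_1,\ldots,e_j$ whose last label equals $t$ (so $D(j,t) = 0$ whenever $t \notin L_j$). The base case is $D(1,t) = 1$ for $t \in L_1$ and $0$ otherwise, and for $j > 1$ the recurrence is
\[
D(j,t) = \sum_{t' \le t} D(j-1,t') \qquad (t \in L_j),
\]
since any assignment counted by $D(j,t)$ is obtained by appending the label $t$ to an assignment for $e_1,\ldots,e_{j-1}$ whose last label is at most $t$. The total number of temporal $(s,z)$-paths is then $\sum_{t \in [T]} D(k,t)$. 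Correctness follows by a straightforward induction on $j$, showing that $D(j,t)$ counts exactly the temporal subpaths from $s$ to $u_j$ that traverse $e_j$ at time $t$.

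For the running time, the inner sum is a prefix sum of $(D(j-1,t'))_{t'}$, which can be maintained in $\mathcal{O}(T)$ per edge, giving $\mathcal{O}(|V| \cdot T)$ overall; even evaluating each entry $D(j,t)$ directly as a sum over $t' \le t$ costs $\mathcal{O}(T)$ per entry and $\mathcal{O}(T^2)$ per edge, which already meets the claimed $\mathcal{O}(|V| \cdot T^2)$ bound since $k \le |V| - 1$. I expect no genuine conceptual obstacle: the uniqueness of the underlying path dissolves the combinatorial difficulty that makes \TemporalPaths hard in general. The only points requiring care are the degenerate cases ($s = z$, or $s,z$ in distinct trees), and verifying that the claimed bijection respects the definition of a temporal path --- in particular that, because $\mathcal{E}$ is a set, the labels within each $L_j$ are distinct and thus correspond to genuinely distinct time-edges.
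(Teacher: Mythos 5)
Your proposal is correct and follows essentially the same route as the paper: reduce to the unique static $(s,z)$-path in the forest and count non-decreasing label sequences by a dynamic program over the path's edges. The only cosmetic difference is that your table $D(j,t)$ records paths arriving at time exactly $t$ while the paper's table records paths arriving at time at most $t$ (i.e.\ the prefix sums of yours), which changes nothing of substance.
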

\begin{proof}
Let $\mathcal{G}=(V,\mathcal{E},T)$ together with two verices $s,z\in V$ %and a vertex $v\in V$ 
be an an instance of \TemporalPaths. We argue that this instance can be solved in polynomial time if there is a unique path between~$s$ and $z$ in the underlying graph of $\mathcal{G}$. Note that this is the case if the underlying graph of $\mathcal{G}$ is a forest.

First, observe that when counting $(s,z)$-paths starting at~$s$ and arriving at~$z$, if there is a unique static path between $s$ and $z$ in the underlying graph then we need only consider time-edges between vertices of that unique static path in our temporal graph when counting, as our temporal path may not repeat vertices and so corresponds to a path in the underlying graph. Edges not lying on the unique static path between $s$ and $z$ can therefore be deleted without changing the result, so we may w.l.o.g.\ consider an instance in which the underlying graph consists only of a static path $P = (v_0, v_1, ..., v_{|P|})$ with $s = v_0$ and $z = v_{|P|}$ as the leaf vertices. 

We will base our counting on a recording at each vertex $v_i$ in $P$ of how many temporal $(s,v_i)$-paths there are starting at $s$ and arriving at $v_i$ at time $t$ or earlier. Note that there are $\mathcal{O}(|P|\cdot T) = \mathcal{O}(|V|\cdot T)$ such vertex-time pairs.

We argue by induction on $i$ that we can correctly compute this number for every vertex-time pair by dynamic programming.  As a base case, note that there is one path from $s$ to $s$ for any arrival time. 
Then we assume that we have these numbers computed correctly for some $v_{i}$ with $i\ge0$ and show how we compute them for $v_{i+1}$. Formally, our dynamic program is defined as follows.
\begin{align*}
    F(v_0=s,t) &= 1 \\
    F(v_i,t) &= \sum_{(\{v_{i-1},v_i\},t')\in\mathcal{E} \text{ with } t'\le t} F(v_{i-1},t') \text{ for } i\ge 1. 
\end{align*}
%We can read off the solution in $F(z,T)$. 
It is straightforward to check that $F(z,T)$ can be computed in the claimed running time. We now formally prove correctness by induction on $i$. That is, we prove that $F(v_i,t)$ equals the number of temporal $(s,v_i)$-paths that start at $s$ and arrive at $v_i$ at time $t$ or earlier; it will follow immediately that $F(z,T)$ is the number of $(s,z)$-paths, so it suffices to compute $F(v_i,t)$ for all $0 \le i \le |P|$.

The base case $i=0$ is trivial. Assume that $i>0$. We sum over the last time-edge of the temporal $(s,v_i)$-paths starting at $s$ and arriving at $v_i$ at time $t$ or earlier.  Let $\mathcal{P}$ be the set of all temporal $(s,v_i)$-paths starting at $s$ and arriving at $v_i$ at time $t$ or earlier that use $(\{v_{i-1},v_i\},t')\in\mathcal{E}$ as the last time-edge. All these temporal paths need to arrive at $v_{i-1}$ at time $t'$ or earlier, otherwise they cannot use time-edge $(\{v_{i-1},v_i\},t')$. Since all temporal paths in $\mathcal{P}$ do not differ in the last time-edge, the cardinality of $\mathcal{P}$ equals the number of temporal $(s,v_{i-1})$-paths starting at $s$ and arriving at $v_{i-1}$ at time $t'$ or earlier. By the induction hypothesis this number equals $F(v_{i-1},t')$.  Clearly, if the last time-edge of two temporal $(s,v_i)$-paths starting at $s$ and arriving at $v_i$ at time $t$ or earlier is different, then the two temporal paths are different, so we do not double count. 

Hence, we have shown that $F(v_i,t)$ equals the number of temporal $(s,v_i)$-paths that start at $s$ and arriving at $v_i$ at time $t$ or earlier. 
\end{proof}

\subsection{Generalisations of the Forest Algorithm}\label{sec:forestgeneralization}
In this subsection, we present two generalisations of \cref{thm:ptimeforest}. The first one results in an FPT-algorithm for the timed-feedback vertex number as a parameter and the second one in an FPT-algorithm for the feedback edge number of the underlying graph as a parameter. 
We remark that both parameters are larger than the feedback vertex number of the underlying graph, for which \cref{thm:parameterizedhardness} refutes the existence of FPT-algorithms.
Both algorithms are inspired by algorithms presented by \citet{CHMZ21} for the so-called \textsc{Restless Temporal Path} problem.

The timed feedback vertex number was introduced by \citet{CHMZ21} and, intuitively, counts the minimum number of \emph{vertex appearances} that need to be removed from a temporal graph to make its underlying graph cycle-free. Formally, it is defined as follows.

\begin{definition}[\cite{CHMZ21}]
Let $\mathcal{G}=(V,\mathcal{E},T)$ be a temporal graph. A \emph{timed feedback vertex set} of $\mathcal{G}$ is a set $X\subseteq V\times[T]$ of vertex appearances such that the underlying graph of $\mathcal{G}'=(V,\mathcal{E}',T)$ is a forest, where $\mathcal{E}':=\mathcal{E}\setminus \{(\{v,w\},t)\in \mathcal{E}\mid (v,t)\in X \vee (w,t)\in X\}$. 
The \emph{timed feedback vertex number} of a temporal graph~$\mathcal{G}$ is the minimum cardinality of a timed feedback vertex set of $\mathcal{G}$.
\end{definition}

Roughly speaking, our FPT-algorithm for the timed feedback vertex number as a parameter performs the following steps. We give a detailed proof in \cref{proof:thm:fpttfvs}.
\begin{enumerate}
    \item Compute a minimum cardinality timed feedback vertex set of the input temporal graph.
    \item Iterate over all possibilities for how a temporal path can traverse the vertex appearances in the timed feedback vertex set.
    \item For each possibility, create an instance of the so-called \textsc{\#Weighted Multicoloured Independent Set on Chordal Graphs} problem to compute the number of possibilities for connecting the vertex appearances of the timed feedback vertex set that are supposed to be traversed.
    \item Using this, compute the total number of temporal $(s,z)$-paths in the temporal input graph.
\end{enumerate}
The intuition here is that the possibilities for connecting the vertex appearances of the timed feedback vertex set that are supposed to be traversed correspond to path segments in the underlying graph of the temporal graph without the timed feedback vertex set, which is a forest. It is well-known that chordal graphs are intersection graphs of subtrees in forest \cite{GAVRIL197447}. This means that an independent set in a chordal graph corresponds to a selection of non-intersecting subtrees (which here will all be paths). The colours can be used to make sure that, for each pair of vertex appearances of the timed feedback vertex set that are supposed to be traversed directly after one another, exactly one path segment connecting them can be in the independent set. The weights can be used to model how many temporal paths follow the corresponding path segment of the underlying graph

Our algorithm follows similar ideas as the one by \citet{CHMZ21} for the \textsc{Restless Temporal Path} problem. The main difference is that we have to solve \textsc{\#Weighted Multicoloured Independent Set on Chordal Graphs} as a subroutine instead of the unweighted decision version of the problem. In the following we give a formal definition.
\problemdef{\textsc{\#Weighted Multicoloured Independent Set on Chordal Graphs}}{A chordal graph $G=(V,E)$, a colouring function $c:V\rightarrow [k]$, and a weight function $w:V\rightarrow \mathbb{N}$.}{Compute $\sum_{X\subseteq V \mid X \text{ is a multicoloured independent set in } G} \ \prod_{v\in X}w(v)$.}

We can observe that \textsc{\#Weighted Multicoloured Independent Set on Chordal Graphs} presumably cannot be solved in polynomial time. This follows directly from the NP-hardness of \textsc{Multicoloured Independent Set on Chordal Graphs}~\cite[Lemma~2]{van2015interval}.
Hence, we have the following.
\begin{observation}
\textsc{\#Weighted Multicoloured Independent Set on Chordal Graphs} cannot be solved in polynomial time unless P$=$NP.
\end{observation}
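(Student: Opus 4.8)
The plan is to give a trivial Turing reduction (indeed a many-one reduction) from the decision problem \textsc{Multicoloured Independent Set on Chordal Graphs}, whose NP-hardness is established in \cite[Lemma~2]{van2015interval}, to \textsc{\#Weighted Multicoloured Independent Set on Chordal Graphs}. First I would take an arbitrary instance of the decision problem, consisting of a chordal graph $G=(V,E)$ together with a colouring $c:V\rightarrow[k]$, and build an instance of the counting problem on the very same graph $G$ and colouring $c$, equipping it with the constant weight function $w(v):=1$ for every $v\in V$.

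Under this choice of weights, every multicoloured independent set $X$ of $G$ contributes $\prod_{v\in X} w(v)=1$ to the sum defining the counting problem, so the value returned by \textsc{\#Weighted Multicoloured Independent Set on Chordal Graphs} is exactly the number of multicoloured independent sets in $G$. This number is strictly positive if and only if $G$ admits at least one multicoloured independent set, that is, if and only if the original decision instance is a yes-instance. Consequently, a polynomial-time algorithm for the counting problem would let us decide \textsc{Multicoloured Independent Set on Chordal Graphs} in polynomial time by simply testing whether the returned count exceeds $0$.

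Since the latter decision problem is NP-hard, this would imply P$=$NP, establishing the claim. There is essentially no technical obstacle here: the only points to verify are that the unit weights make the weighted sum coincide with the raw count of multicoloured independent sets (immediate from the definition of the problem) and that comparing the returned value against $0$ can be carried out in time polynomial in the size of the instance. The observation is therefore a direct corollary of the cited NP-hardness result, which is why we state it without a separate detailed proof.
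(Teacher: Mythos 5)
Your proposal is correct and matches the paper's approach: the paper simply notes that the observation "follows directly from the NP-hardness of \textsc{Multicoloured Independent Set on Chordal Graphs}," and the unit-weight reduction you spell out is exactly the implicit argument behind that remark. Nothing further is needed.
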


However, we can obtain an FPT-algorithm for \textsc{\#Weighted Multicoloured Independent Set on Chordal Graphs} parameterised by the number of colours. This will be sufficient for our purposes. 

To show this result, we adapt an algorithm by~\citet[Proposition~5.6]{bentert2019indu} to solve \textsc{Maximum Weight Multicoloured Independent Set on Chordal Graphs}, where given a chordal graph $G=(V,E)$, a colouring function $c:V\rightarrow [k]$, and a weight function $w:V\rightarrow \mathbb{N}$, one is asked to compute a multicoloured independent set of maximum weight in~$G$. Here, the weight of an independent set is the \emph{sum} of the weights of its vertices. Note that in our problem \textsc{\#Weighted Multicoloured Independent Set on Chordal Graphs} the weight of an independent set is the \emph{product} of the weights of its vertices.

\begin{proposition}
\label{prop:iscount}
\textsc{\#Weighted Multicoloured Independent Set on Chordal Graphs} is fixed-parameter tractable when parameterised by the number $k$ of colours.
\end{proposition}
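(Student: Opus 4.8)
The plan is to adapt the chordal-graph dynamic program of \citet[Proposition~5.6]{bentert2019indu} referenced above, replacing its optimisation semantics by counting semantics. First I would compute, in polynomial time, a \emph{clique tree} of the input chordal graph $G$: a tree decomposition in which every bag is a clique of $G$ and the nodes containing any fixed vertex induce a connected subtree. I would then refine it into a \emph{nice} tree decomposition with the usual leaf, introduce, forget, and join nodes; since every subset of a clique is again a clique, all bags remain cliques under this refinement. The single structural fact driving the whole argument is that, because each bag is a clique, any independent set of $G$ meets each bag in \emph{at most one} vertex.

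The dynamic-programming table would be indexed, at each node $t$ with bag $X_t$, by a pair $(v,S)$ where $v\in X_t\cup\{\bot\}$ records which vertex of the bag (if any) the partial independent set selects, and $S\subseteq[k]$ records exactly the set of colours used by the partial solution in the subtree rooted at $t$. The stored value $D_t[v,S]$ is the sum of $\prod_u w(u)$ over all multicoloured independent sets of the subgraph induced by the union of bags below $t$ whose intersection with $X_t$ is $\{v\}$ (respectively $\emptyset$ when $v=\bot$) and whose colour set is exactly $S$. Since there are at most $(|X_t|+1)\cdot 2^{k}$ states per node and only polynomially many nodes, the table has total size $\poly(n)\cdot 2^{k}$, which is the source of fixed-parameter tractability.

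The transitions follow the four node types, using the convention that $D_t[v,S]$ does \emph{not} yet include the weight of the currently selected bag vertex $v$; that weight is charged only when $v$ is forgotten, which avoids any double-counting of weights at join nodes. At an introduce node we either leave the new vertex $x$ out, copying the child entry, or take $x$: because the bag is a clique this is only allowed when no other bag vertex is selected and $c(x)\notin S$, in which case mass flows into $D_t[x,S\cup\{c(x)\}]$. At a forget node for $x$ we merge the entries in which $x$ was selected into the $\bot$-entry, multiplying them by $w(x)$ as $x$ leaves the bag. The only delicate step is the join node, where both children share the bag $X_t$: here I would combine $D_{t_1}[v,S_1]$ and $D_{t_2}[v,S_2]$ by multiplying their values, but only for colour sets that overlap exactly where they must, namely $S_1\cap S_2=\{c(v)\}$ when $v\neq\bot$ and $S_1\cap S_2=\emptyset$ when $v=\bot$, writing the result into $D_t[v,S_1\cup S_2]$. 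This disjointness condition is precisely what prevents a colour (equivalently, a vertex) from being counted in both subtrees; it is a colour-restricted subset convolution evaluable in $2^{k}\poly(k)$ time (or $O(3^{k})$ naively). The final answer is $\sum_{S}D_r[\bot,S]$ at the root $r$ (whose bag is empty), which includes the empty independent set contributing $1$, matching the problem definition.

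The main obstacle, and the essential departure from the maximisation algorithm of \citet[Proposition~5.6]{bentert2019indu}, is the join step: in the optimisation version one simply takes a maximum and overlap between the two subtrees is harmless, whereas in the counting-with-products version each multicoloured independent set and each weight factor must be accounted for \emph{exactly once}. I expect the bulk of the correctness argument to consist of verifying, by induction over the decomposition, that the disjointness constraint $S_1\cap S_2=\{c(v)\}$ together with the convention of charging $w(x)$ only at the forget node of $x$ yields this once-only counting; the running-time analysis is then routine.
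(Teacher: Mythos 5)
Your proposal is essentially the paper's own proof: the same dynamic program over a tree decomposition whose bags are cliques, with states indexed by (selected bag vertex or none, exact colour set used so far), and the same colour-set convolution at join nodes with the intersection of the two children's colour sets forced to be $\{c(v)\}$ (resp.\ $\emptyset$ when nothing is selected). Two of your choices differ cosmetically and are arguably cleaner: you use a standard nice decomposition with introduce/forget nodes where the paper uses nodes with one child or two identical-bag children, and you charge $w(x)$ only when $x$ is forgotten. The paper instead stores the selected vertex's weight in the table entry and divides by $w(v)$ at join nodes to undo the double-counting, which forces it to assume $w(v)>0$ for all $v$; your deferred-weight convention avoids both the division and that assumption.

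One slip in the final read-off: $\sum_{S} D_r[\bot,S]$ counts every \emph{colourful} independent set (at most one vertex per colour, including the empty set), but ``multicoloured'' in this problem means exactly one vertex of \emph{each} of the $k$ colours --- the paper's answer is $F[R,[k],\emptyset]+\sum_{v\in R}F[R,[k],\{v\}]$. With your empty root bag you should return $D_r[\bot,[k]]$. Everything needed is already in your table; only this last line needs correcting.
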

\begin{proof}
\citet{bentert2019indu} provide a dynamic program on a tree decomposition of the input graph. Chordal graphs are known to admit a tree decomposition, where every bag is a clique, that can be computed in linear time~\cite{blair1993introduction}. Let $G=(V,E)$ be a chordal graph and $(\mathcal{B},\mathcal{T})$ a rooted tree decomposition of $G$ such that for each bag $B\in\mathcal{B}$ we have that $G[B]$ is a complete graph; note that an independent set can therefore contain at most one vertex from each bag. Furthermore, we assume that each bag is either a leaf, has one descendant, or has exactly two descendants that contain exactly the same vertices. A tree decomposition with this property can be obtained in linear time~\cite{bentert2019indu}. 
Let $R$ denote the root of $\mathcal{T}$ and for each $B\in \mathcal{B}$ let $V_B$ denote the set of all vertices in $B$ and all descendants of $B$ in $\mathcal{T}$.
Let $c:V\rightarrow [k]$ be a colouring function and $w:V\rightarrow \mathbb{N}$ be a weight function where w.l.o.g.\ we have for all $v\in V$ that $w(v)>0$ (since we may delete vertices with weight zero without changing the answer). We define the following dynamic programming table $F:\mathcal{B}\times 2^{[k]}\times \binom{V}{1}\cup\{\emptyset\}\rightarrow \mathbb{N}$. Intuitively, we want that for each $B\in\mathcal{B}$, each $C\subseteq [k]$, and each $v\in B$ the table entry $F[B,C,\{v\}]$ is the sum of weights of all independent sets $X$ in $G[V_B]$ such that $v\in X$, $|C|=|X|$, and $C=\{c(w)\mid w\in X\}$. For table entries $F[B,C,\emptyset]$ we want that the independent set does not contain any vertex of $B$.
We distinguish the following cases (where an empty sum equals 0):
\begin{itemize}
    \item Bag $B$ is a leaf in $\mathcal{T}$: For all $v\in B$ and $C\subseteq[k]$ we set $F[B,C,\{v\}]=w(v)$ if $C=\{c(v)\}$ and $F[B,C,\{v\}]=0$ otherwise. We set $F[B,C,\emptyset]=0$.
    \item Bag $B$ has one descendant $B'$ in $\mathcal{T}$: 
    For all $v\in B$ and $C\subseteq[k]$ if $c(v)\notin C$, then set $F[B,C,\{v\}]=0$. If this is not the case, then for all $v\in B\cap B'$ we set $F[B,C,\{v\}]=F[B',C,\{v\}]$, otherwise we set $F[B,C,\{v\}]=w(v)\cdot(F[B',C\setminus\{c(v)\},\emptyset] + \sum_{v'\in B'\setminus B}F[B',C\setminus\{c(v)\},\{v'\}])$. We set $F[B,C,\emptyset]=F[B',C,\emptyset]+\sum_{v'\in B'\setminus B}F[B',C,\{v'\}]$.
    \item Bag $B$ has two descendants $B_1$ and $B_2$ in $\mathcal{T}$ with $B=B_1=B_2$: We set 
    \[
    F[B,C,\{v\}]=\frac{1}{w(v)}\cdot\sum_{
    \substack{C_1,C_2\mid \\ C_1\cup C_2 \\ = C \wedge C_1\cap C_2 \\ =\{c(v)\}}}F[B_1,C_1,\{v\}]
    \cdot F[B_2,C_2,\{v\}], \text{ and}
    \]
    \[
    F[B,C,\emptyset]=\sum_{
    \substack{C_1,C_2\mid \\ 
    C_1\cup C_2
    \\ =C \wedge C_1\cap C_2 \\=\emptyset}}F[B_1,C_1,\emptyset]\cdot F[B_2,C_2,\emptyset].
    \]
\end{itemize}
The answer to our problem is $F[R,[k],\emptyset]+\sum_{v\in R} F[R,[k],\{v\}]$, where $R$ denotes the root bag of the tree decomposition.

We show the following by induction, which then implies correctness of our algorithm. Here, we call an independent set \emph{$C$-multicoloured} if it contains exactly one vertex of each colour in $C$.
\begin{equation}\label{eq1}
F[B,C,V']=\sum_{
\substack{X\subseteq V_B \mid X \text{ is a } \\ C\text{-multicoloured independent set in }\\ G \text{ such that  } X\cap B=V'}} \ \prod_{v'\in X}w(v')\tag{*}
\end{equation}
If $B$ is a leaf in $\mathcal{T}$, then \cref{eq1} is clearly fulfilled.

Assume that $B$ has one descendant $B'$ in $\mathcal{T}$. For vertices $v\in B$ that are also contained $B'$, no vertex in $B'\setminus B$ can be contained in a $C$-multicoloured independent set in $G$ that contains $v$, since $B'$ is a clique. Hence, in this case we have that setting $F[B,C,\{v\}]=F[B',C,\{v\}]$ fulfills \cref{eq1}. If a vertex $v\in B$ is not contained in $B'$, then we can add it to any independent set containing vertices from $V_{B'}\setminus B$ (noting that the properties of a tree decomposition ensure that there cannot be any edge from a vertex in $B \setminus B'$ to one in $B' \setminus B$). In order to obtain $C$-multicoloured independent sets, we sum up all weights of $(C\setminus\{c(v)\})$-multicoloured independent sets, yielding $F[B',C\setminus\{c(v)\},\emptyset] + \sum_{v'\in B'\setminus B}F[B',C\setminus\{c(v)\},\{v'\}]$. By distributivity of multiplication, we can multiply this sum with $w(v)$ to obtain the weighted sum of $C$-multicoloured independent sets that additionally contain vertex $v$. It follows that in this case, \cref{eq1} is fulfilled. The correctness for $F[B,C,\emptyset]$ is analogous.

Lastly, assume that $B$ has two descendants $B_1$ and $B_2$ in $\mathcal{T}$ with $B=B_1=B_2$. For vertices $v\in B$ with $c(v)\in C$ we sum up all possibilities of combining a $C_1$-multicoloured independent set $X_1$ in $G[V_{B_1}]$ with a $C_2$-multicoloured independent set $X_2$ in $G[V_{B_2}]$ such that we obtain a $C$-multicoloured independent set $X=X_1\cup X_2$ in $G[V_{B}]$. To this end, $C_1$ and $C_2$ must obey $C_1\cup C_2=C$ and $C_1\cap C_2=\{v\}$. The set $X$ is clearly an independent set since $V_{B_1}\cap V_{B_2}= B_1=B_2=B$ and hence all edges between vertices in $V_{B_1}$ and $V_{B_2}$ have both their endpoints in $B$ and $X$ contains exactly one vertex from $B$, namely $\{v\}=X_1\cap X_2$. Multiplying the weights the independent sets results in having the weight of $v$ appear twice in the product, hence we divide the sum of all products by $w(v)$. It follows that in this case \cref{eq1} is fulfilled for all $F[B,C,v]$. The correctness for $F[B,C,\emptyset]$ is analogous.

To obtain the claimed running time bound, note that the size of the dynamic programming table $F$ is in $\mathcal{O}(2^k\cdot |V|^2)$ and each entry can be computed in $\mathcal{O}(2^k+|V|)$ time.
\end{proof}

Using \cref{prop:iscount}, we are ready to give our FPT-algorithm for \TemporalPaths parameterised by the timed feedback vertex number.

\begin{theorem}
\label{thm:fpttfvs}
\TemporalPaths is fixed-parameter tractable when parameterised by the timed feedback vertex number of the input temporal graph.
\end{theorem}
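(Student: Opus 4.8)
The plan is to realise the four-step outline that precedes the statement, with the combinatorial core handed off to \cref{prop:iscount}. First I would compute a minimum timed feedback vertex set $X$ of the input temporal graph $\mathcal{G}$; since the timed feedback vertex number is the parameter $k$, I would invoke the fixed-parameter algorithm of \citet{CHMZ21} to obtain such an $X$ with $|X| = k$ in FPT time. Deleting all time-edges incident to the appearances in $X$ leaves a temporal graph whose underlying graph is a forest $G'$, on which \cref{thm:ptimeforest}, suitably adapted so that it counts temporal path segments whose start and arrival times lie in a prescribed window, can count segments efficiently.

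Next I would enumerate all \emph{skeletons} describing how a temporal $(s,z)$-path interacts with $X$. Since $\mathcal{G}$ is simple, a temporal path visits each vertex at most once, hence uses at most one appearance of each vertex, and the appearances of $X$ it visits occur in a well-defined temporal order (the time labels of a temporal path are non-decreasing); a skeleton records this ordered subsequence $a_1,\dots,a_r$ of $X$ with $r\le k$, together with the roles of $s$ and $z$. There are at most $2^{\mathcal{O}(k\log k)}$ skeletons, and since temporal $(s,z)$-paths with different skeletons are distinct, it suffices to count, for each skeleton, the temporal $(s,z)$-paths realising it and sum the results.

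For a fixed skeleton the path decomposes into at most $r+1 = \mathcal{O}(k)$ consecutive segments lying in $G'$, the $j$-th segment joining two consecutive anchors within the time window fixed by their labels in $X$ (the first and last segments being incident to $s$ and $z$). Here I would build an instance of \textsc{\#Weighted Multicoloured Independent Set on Chordal Graphs}: the vertices are the candidate forest segments that could fill some gap, coloured by the gap they fill (so $\mathcal{O}(k)$ colours); two segments are adjacent exactly when the corresponding paths in $G'$ share a vertex, which makes the graph an intersection graph of paths in a forest and hence chordal; and the weight of a segment is the number of its temporal realisations respecting the prescribed window, computed by the adapted \cref{thm:ptimeforest}. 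A multicoloured independent set then selects exactly one pairwise vertex-disjoint segment per gap, and the product of the selected weights counts precisely the temporal paths realising the skeleton; applying \cref{prop:iscount} with $\mathcal{O}(k)$ colours evaluates this weighted sum in FPT time, and summing over all skeletons yields the total number of temporal $(s,z)$-paths.

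The main obstacle will be the correctness of this correspondence rather than the running time. I must verify that the weight of a segment genuinely factors out, so that the product of per-segment realisation counts equals the number of temporal paths with the given skeleton; this relies on the fixed anchor times decoupling the time constraints of distinct segments. I must also show that the adjacency relation of the chordal graph exactly captures the requirement that the assembled segments form a \emph{simple} temporal path visiting precisely the appearances of $X$ recorded in the skeleton, in particular that segments meet only at the intended anchor vertices and that no segment passes through an anchor vertex belonging to another gap. Handling the glue time-edges incident to appearances of $X$ at the segment endpoints, and ensuring the intersection model remains chordal once these endpoints are incorporated, is the delicate bookkeeping I expect to occupy most of the formal argument.
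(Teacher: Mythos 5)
Your proposal follows essentially the same route as the paper's proof: compute a minimum timed feedback vertex set via the FPT algorithm of \citet{CHMZ21}, enumerate the ways a temporal $(s,z)$-path can traverse the selected vertex appearances, encode the choice of connecting forest segments as a \textsc{\#Weighted Multicoloured Independent Set on Chordal Graphs} instance with one colour per gap and weights computed by the forest algorithm of \cref{thm:ptimeforest}, and sum over configurations. The only caveat is in your skeleton definition: the paper's enumeration additionally records, for each visited appearance, whether it is incoming, outgoing, or both (and permits a path to use \emph{two} distinct appearances $(v,t_1),(v,t_2)$ of the same vertex, one incoming and one outgoing with $t_1\le t_2$), so your claim that a path uses at most one appearance per vertex needs this refinement, which falls under the glue-edge bookkeeping you already flag as the delicate part.
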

\begin{proof}
Let $(\mathcal{G},s,z)$ be an instance of \TemporalPaths. We adapt an algorithm by~\citet[Theorem~8]{CHMZ21} for the so-called \textsc{Restless Temporal Path} problem. 

Intuitively, the algorithm iterates over all possibilities for how a temporal $(s,z)$-path can traverse the vertex appearances in the timed feedback vertex set. Then, to calculate the number of ways to select path segments to connect the timed feedback vertex set elements visited by the path, we create an instance of \textsc{\#Weighted Multicoloured Independent Set on Chordal Graphs}. Note that, informally speaking, the path segments connecting the timed feedback vertex set elements are subpaths of the forest that remains after the timed feedback vertex set is removed from the input graph, hence they form an intersection model for a chordal graph. Since a temporal path cannot revisit vertices, the timed feedback vertex set elements cannot be connected by intersecting path segments, hence we are interested in independent sets in the chordal graph representing the path segments. Furthermore, for each pair of feedback vertex set elements visited consecutively by the temporal path, we need exactly one path segment to connect them; this is modelled by giving the path segments colours.

We assume w.l.o.g.\ that in $\mathcal{G}$ there is only one time-edge incident to $s$ and that time-edge has label 1, and there is only one time-edge incident to $z$ and that time-edge has label $T$. If this is not the case, we can add two new vertices $s',z'$ to $\mathcal{G}$ and connect $s'$ to $s$ at time 1 and $z'$ to $z$ at time $T$ and the switch the names of $s,s'$ and $z,z'$, respectively.
Formally, the algorithm performs the following steps, we give a visualization in \cref{fig:tfvs}. 
\begin{enumerate}
    \item Compute a minimum timed feedback vertex set $X$ of $\mathcal{G}$. This can be done in $2^{\mathcal{O}(|X|)}\cdot|\mathcal{G}|^{\mathcal{O}(1)}$ time~\cite[Theorem~9]{CHMZ21}.
    \item Iterate over all possibilities for sets $O$, $I$ and $U$ such that $(O\cup I)\uplus U = X\cup\{(s,1),(z,T)\}$ (where $O$ and $I$ can intersect, but must both be disjoint from $U$). Slightly abusing terminology, we will refer to all triples $O$, $I$, $U$ with the mentioned property as partitions of $X\cup\{(s,1),(z,T)\}$. 
    
    Intuitively, the sets $O$, $I$, and $U$ contain elements of the timed feedback vertex set $X$ that are outgoing, incoming, or ``unused'' (that is, neither incoming nor outgoing) vertex appearances, respectively, in the temporal paths currently under considerations. The ordering $<_{O\cup I}$ specifies in which order the vertex appearances shall be visited by the temporal paths.
    \item For each partition and each ordering $<_{O\cup I}$ over $O\cup I$, create an instance of \textsc{\#Weighted Multicoloured Independent Set on Chordal Graphs} and solve it using \cref{prop:iscount}. Let $W_{<_{O\cup I}}$ denote the computed value.
    \item Output $\sum_{(O\cup I)\uplus U=X,<_{O\cup I}}W_{<_{O\cup I}}$.
\end{enumerate}

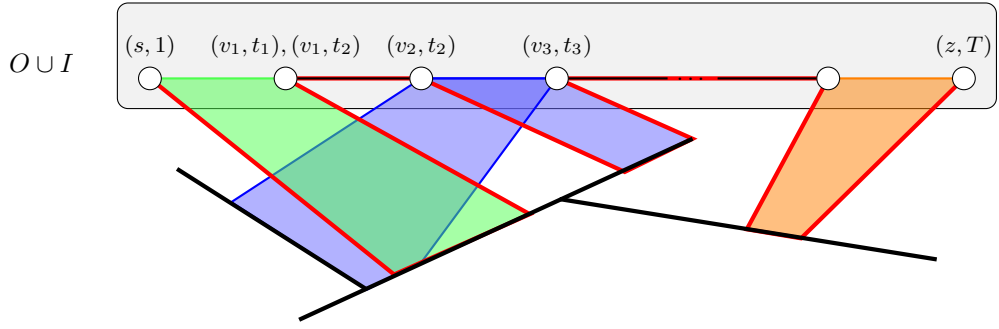
\begin{figure}[t]
\centering
	\begin{tikzpicture}[xscale=0.42,yscale=0.4,xscale=0.85]

			\draw [rounded corners,fill=gray!10] (0.8,-1) -- (0.8,2.5) -- (33.2,2.5) -- (33.2,-1) -- cycle;

			\fill [opacity=0.5,fill=blue!60] (12,0) -- (19.5,-3.1) -- (22,-2) --
			(17,0) -- cycle; \draw [blue,thick] (12,0) -- (19.5,-3.1) -- (22,-2) -- (17,0) -- cycle;

			\fill [opacity=0.5,fill=blue!60] (12,0) -- (5,-4.1) -- (10,-7) -- (12,-6.1) -- (17,0) -- cycle;
			\draw [blue,thick] (12,0) -- (5,-4.1) -- (10,-7) -- (12,-6.1) -- (17,0) -- cycle;

			\fill [opacity=0.5,fill=green!70] (2,0) -- (11,-6.5) -- (16,-4.5) -- (7,0) -- cycle;
			%\fill [pattern=lines,pattern color=gray] (2,0) -- (11,-6.5) -- (16,-4.5) -- (7,0) -- cycle;
			\draw [green!70,thick] (2,0) -- (11,-6.5) -- (16,-4.5) -- (7,0) -- cycle;

			\fill [opacity=0.5,fill=orange] (27,0) -- (24,-5) -- (26,-5.3) -- (32,0)   -- cycle;
			\draw [draw=orange,thick] (27,0) -- (24,-5) -- (26,-5.3) -- (32,0)   -- cycle;

			\draw [ultra thick,red] (2,0) -- (11,-6.5) -- (16,-4.5) -- (7,0) -- (12,0) -- (19.5,-3.1)
					-- (22,-2) -- (17,0) -- (27,0)
			 -- (24,-5) -- (26,-5.3) -- (32,0) 
					;
			\node[vert,label={\footnotesize $(s,1)$}] (s) at (2,0) {};
			\node[vert,label={\footnotesize $(v_1,t_1),(v_1,t_2)$}] (x) at (7,0) {};
			\node[vert,label={\footnotesize $(v_2,t_2)$}] (v) at (12,0) {};
			\node[vert,label={\footnotesize $(v_3,t_3)$}] (w) at (17,0) {};
			\node (dots) at (22,0) {\footnotesize $\dots$};
			\node[vert] (y) at (27,0) {};
			\node[vert,label={\footnotesize $(z,T)$}] (z) at (32,0) {};
			\draw[thick] (x) to (v);
			\draw[thick] (w) to (dots);
			\draw[thick] (y) to (dots);

			\draw[opacity=0,thick] (22,-2) to (7.5,-8);
			\draw[ultra thick] (22,-2) to (7.5,-8);
			\draw[ultra thick] (3,-3) to (10,-7);
			\draw[ultra thick] (17.1,-4) to (31,-6);

			\node at (-2,0.5) {$O\cup I$};
			%\node at (-0.5,0.5) {(a) };%\footnotesize $(\{s,z\} \times \{\bot\}) \cup I \cup O$:};
			%\node at (-0.5,-4) {(b)};%$\mathcal T$:};

	\end{tikzpicture}
	\caption{Illustration of one iteration of the algorithm described  \cref{thm:fpttfvs}. The upper part shows $O\cup I$ from the current partition ordered by $<_{O\cup I}$. The lower part sketches the underlying graph of the input temporal graph $\mathcal{G}$ without the timed feedback vertex set, 
			which is a forest.
The coloured areas correspond to vertices in the constructed \textsc{\#Weighted Multicoloured Independent Set on Chordal Graphs} instance with the respective colours.
	The red thick path illustrates a temporal $(s,z)$-path which
	corresponds to a multicoloured independent set in the constructed \textsc{\#Weighted Multicoloured Independent Set on Chordal Graphs} instance.}
	\label{fig:tfvs}
\end{figure}

In the following, we describe how we construct an instance of \textsc{\#Weighted Multicoloured Independent Set on Chordal Graphs} given a partition $(O\cup I)\uplus U$ of $X\cup\{(s,1),(z,T)\}$ and an ordering $<_{O\cup I}$ over $O\cup I$. If there are two vertex appearances $(v_1,t_1),(v_2,t_2)\in O\cup I$ with $(v_1,t_1) <_{O\cup I} (v_2,t_2)$ and $t_1 >t_2$, then we set $W_{<_{O\cup I}}=0$. 
If there are two vertex appearances $(v_1,t_1),(v_2,t_2)\in O\cup I$ with $v_1=v_2$, then we set $W_{<_{O\cup I}}=0$ unless $(v_1,t_1)\in I\setminus O$ and $(v_2,t_2)\in O\setminus I$, $t_1 \le t_2$, $s\neq v_1\neq z$, and $(v_1,t_1),(v_2,t_2)$ are adjacent in $<_{O\cup I}$.
Let $(v,t)\in O\cup I$ be the smallest element in $O\cup I$ according to $<_{O\cup I}$. If $v\neq s$ or $(v,t)\in O$, then we set $W_{<_{O\cup I}}=0$. Let $(v,t)\in O\cup I$ be the largest element in $O\cup I$ according to $<_{O\cup I}$. If $v\neq z$ or $(v,t)\in I$, then we set $W_{<_{O\cup I}}=0$. Formally, whenever we set a weight $W_{<_{O\cup I}}$ to zero, we create a trivial instance of \textsc{\#Weighted Multicoloured Independent Set on Chordal Graphs} containing only one vertex with weight zero.
If none of the above is the case, let $\{(v_1,t_1), (v_2,t_2), \ldots, (v_x,t_x)\} = O\cup I$ such that $(v_i,t_i) <_{O\cup I} (v_j,t_j)$ if and only if $i<j$ and denote $V_{O\cup I}=\{v_1,\dots,v_x\}$.

Let $F$ be the underlying (static) graph of the temporal graph $\mathcal{G}'$, where $\mathcal{G}'$ is obtained from $\mathcal{G}$ by removing the timed feedback vertex set $X$. In particular, this means that $F$ is a forest. We now define a chordal graph~$G_{<_{O\cup I}}$ using an intersection model of paths in $F$. We also define a colouring function $c_{<_{O\cup I}}$ and a weight function $w_{<_{O\cup I}}$ for the vertices in $G_{<_{O\cup I}}$.
Let $(v_i,t_i), (v_{i+1},t_{i+1}) \in O\cup I$. We create the following collection $\mathcal{P}^{(i)}_{<_{O\cup I}}$ of paths in $F$.
\begin{itemize}
    \item If $(v_i,t_i)\in O$ and $(v_{i+1},t_{i+1}) \in I$, then for each $(\{v_i,w_1\},t_i)\in \mathcal{E}$ with $(w_1,t_i)\notin X$ and each $(\{v_{i+1},w_2\},t_{i+1})\in \mathcal{E}$ with $(w_2,t_{i+1})\notin X$, let $P$ be the $(w_1,w_2)$-path in $F$ (if it exists). If $P$ visits no vertex in $V_{O\cup I}$, we add $P$ to $\mathcal{P}^{(i)}_{<_{O\cup I}}$. We count the number of temporal $(w_1,w_2)$-path in $\mathcal{G}'$ which start at $w_1$ at time $t_i$ or later and arrive at $w_2$ at time $t_{i+1}$ or earlier using \cref{thm:ptimeforest} and use this number as the weight for the vertex in $G_{<_{O\cup I}}$ corresponding to $P$. If $(\{v_i,v_{i+1}\},t_i)\in \mathcal{E}$ and $t_i=t_{i+1}$, then we add an empty path to $\mathcal{P}^{(i)}_{<_{O\cup I}}$ (corresponding to an isolated vertex in $G_{<_{O\cup I}}$) with weight one for the corresponding vertex in $G_{<_{O\cup I}}$.
    \item If $(v_i,t_i)\in O$ and $(v_{i+1},t_{i+1}) \in O\setminus I$, then for each $(\{v_i,w_1\},t_i)\in \mathcal{E}$ with $(w_1,t_i)\notin X$, let $P$ be the $(w_1,v_{i+1})$-path in $F$ (if it exists). If $P$ visits no vertex in $V_{O\cup I}\setminus\{v_{i+1}\}$, we add $P$ to $\mathcal{P}^{(i)}_{<_{O\cup I}}$. We count the number of temporal $(w_1,v_{i+1})$-paths in $\mathcal{G}'$ which start at $w_1$ at time $t_i$ or later and arrive at $v_{i+1}$ at time $t_{i+1}$ or earlier using \cref{thm:ptimeforest} and use this number as the weight for the vertex in $G_{<_{O\cup I}}$ corresponding to $P$. If $(\{v_i,v_{i+1}\},t_i)\in \mathcal{E}$, then we add an empty path to $\mathcal{P}^{(i)}_{<_{O\cup I}}$ (corresponding to an isolated vertex in $G_{<_{O\cup I}}$) with weight one for the corresponding vertex in $G_{<_{O\cup I}}$.
    \item If $(v_i,t_i)\in I\setminus O$ and $(v_{i+1},t_{i+1}) \in I$, then for each $(\{v_{i+1},w_2\},t_{i+1})\in \mathcal{E}$ with $(w_2,t_{i+1})\notin X$, let $P$ be the $(v_i,w_2)$-path in $F$ (if it exists). If $P$ visits no vertex in $V_{O\cup I}\setminus\{v_{i}\}$, we add $P$ to $\mathcal{P}^{(i)}_{<_{O\cup I}}$. We count the number of temporal $(v_i,w_2)$-path in $\mathcal{G}'$ which start at $v_i$ at time $t_i$ or later and arrive at $w_2$ at time $t_{i+1}$ or earlier using \cref{thm:ptimeforest} and use this number as the weight for the vertex in $G_{<_{O\cup I}}$ corresponding to $P$. If $(\{v_i,v_{i+1}\},t_{i+1})\in \mathcal{E}$, then we add an empty path to $\mathcal{P}^{(i)}_{<_{O\cup I}}$ (corresponding to an isolated vertex in $G_{<_{O\cup I}}$) with weight one for the corresponding vertex in $G_{<_{O\cup I}}$.
    \item If $(v_i,t_i)\in I\setminus O$ and $(v_{i+1},t_{i+1}) \in O\setminus I$, then let $P$ be the $(v_i,v_{i+1})$-path in $F$ (if it exists). If $P$ visits no vertex in $V_{O\cup I}\setminus\{v_i, v_{i+1}\}$, we add $P$ to $\mathcal{P}^{(i)}_{<_{O\cup I}}$. We count the number of temporal $(v_i,v_{i+1})$-path in $\mathcal{G}'$ which start at $v_i$ at time $t_i$ or later and arrive at $v_{i+1}$ at time $t_{i+1}$ or earlier using \cref{thm:ptimeforest} and use this number as the weight for the vertex in $G_{<_{O\cup I}}$ corresponding to $P$. 
    If $v_i=v_{i+1}$, then we add an empty path to $\mathcal{P}^{(i)}_{<_{O\cup I}}$ (corresponding to an isolated vertex in $G_{<_{O\cup I}}$) with weight one for the corresponding vertex in $G_{<_{O\cup I}}$.
\end{itemize}
We give all vertices in $G_{<_{O\cup I}}$ corresponding to paths in $\mathcal{P}^{(i)}_{<_{O\cup I}}$ colour $i$. This completes the description of the \textsc{\#Weighted Multicoloured Independent Set on Chordal Graphs} instance for partition $(O\cup I)\uplus U$ of $X$ and ordering $<_{O\cup I}$ over $O\cup I$.

Note that given a partition $(O\cup I)\uplus U$ of $X\cup\{(s,1),(z,T)\}$ and an ordering $<_{O\cup I}$ over $O\cup I$, the \textsc{\#Weighted Multicoloured Independent Set on Chordal Graphs} instance can be constructed in polynomial time and that the number of colours is in $\mathcal{O}(|X|)$. By \cref{prop:iscount}, the instance can therefore be solved in FPT-time with respect to $|X|$. Since the number of instances is in $\mathcal{O}(4^{|X|}\cdot (|X|+2)!)$, we overall obtain fixed-parameter tractability for parameter $|X|$, the timed feedback vertex number.

In the remainder we prove correctness of our algorithm. 
The correctness proof is similar to the correctness proof of \cite[Theorem~8]{CHMZ21}.
We first prove that every temporal $(s,z)$-path in $\mathcal{G}$ is counted at least once. Then we prove that every temporal $(s,z)$-path in $\mathcal{G}$ is counted at most once.

Let $P$ be a temporal $(s,z)$-path in $\mathcal{G}$. For each vertex appearance in the timed feedback vertex set $X$, we have that it is incoming, outgoing, both, or neither for $P$. Hence there is exactly one partition $(O\cup I)\uplus U$ of $X\cup\{(s,1),(z,T)\}$ that correctly distributes the timed vertex appearances into the sets $O,I,U$ for $P$. Furthermore, there is exactly one ordering $<_{O\cup I}$ over $O\cup I$ that correctly reflects in which order the vertex appearances in $O\cup I$ are visited by $P$. It is straightforward to check that when constructing the \textsc{\#Weighted Multicoloured Independent Set on Chordal Graphs}, every possibility to connect two adjacent (with respect to $<_{O\cup I}$) vertex appearances in $O\cup I$ is considered and accounts to one weight unit of a vertex in the constructed chordal graph. In particular, the connections used in $P$ are considered and account for one weight unit of exactly one multicoloured independent set in the constructed chordal graph. It follows that $P$ accounts for one weight unit in the output of the constructed \textsc{\#Weighted Multicoloured Independent Set on Chordal Graphs} instances and hence is counted at least once.

Furthermore, it is easy to see that every temporal $(s,z)$-path in $\mathcal{G}$ is counted at most once. Changing the partition $(O\cup I)\uplus U$ of $X\cup\{(s,1),(z,T)\}$ or the ordering $<_{O\cup I}$ over $O\cup I$ clearly results in considering different temporal paths. For a fixed partition $(O\cup I)\uplus U$ of $X\cup\{(s,1),(z,T)\}$ and ordering $<_{O\cup I}$ over $O\cup I$, connecting two adjacent (with respect to $<_{O\cup I}$) vertex appearances in $O\cup I$ differently also clearly results in considering different temporal paths. Lastly, no non-path temporal walks (that visit vertices multiple times) are counted. This follows by the same reasoning given in the correctness proof of \cite[Theorem~8]{CHMZ21}. 
The main idea is that if a temporal $(s,z)$-\emph{walk} visits a vertex $v$ multiple times, then this vertex either is in $V_{O\cup I}$, then these temporal $(s,z)$-walks are explicitly excluded when constructing the \textsc{\#Weighted Multicoloured Independent Set on Chordal Graphs} instances. If $v$ is in $V\setminus V_{O\cup I}$, then the vertices in the constructed \textsc{\#Weighted Multicoloured Independent Set on Chordal Graphs} instance corresponding any two path segments that contain $v$ are connected by an edge. It follows that no temporal $(s,z)$-walk visiting $v$ multiple times is considered which implies that only temporal $(s,z)$-\emph{paths} are considered.
Hence, every temporal $(s,z)$-path in $\mathcal{G}$ is counted at most once and the correctness follows.
\end{proof}

Now we consider the feedback edge number of the input temporal graph as our parameter, and show the following fixed-parameter tractability result. 
It is very similar to an algorithm by~\citet{CHMZ21} for the so-called \textsc{Restless Temporal Path} problem parameterised by the feedback edge number. 
\begin{theorem}
\label{thm:fptfen}
\TemporalPaths is fixed-parameter tractable when parameterised by the feedback edge number of the underlying graph of the input temporal graph.
\end{theorem}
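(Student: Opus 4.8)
The plan is to adapt the dynamic program behind \cref{thm:ptimeforest} together with a structural decomposition of graphs of small feedback edge number, mirroring the feedback-edge-number algorithm of \citet{CHMZ21} for \textsc{Restless Temporal Path} but in the counting setting. First I would compute a minimum feedback edge set $F$ of the underlying graph $G$ (a spanning forest together with its $k := |F|$ non-tree edges) and restrict attention to the connected component of $G$ containing $s$, returning $0$ if $z$ lies elsewhere. Let $B \subseteq V$ consist of $s$, $z$, every vertex of degree at least $3$ in $G$, and both endpoints of every edge of $F$. A standard handshake argument on the $2$-core shows that $G$ has $O(k)$ vertices of degree at least $3$, so $|B| = O(k)$. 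Contracting each maximal path whose internal vertices have degree $2$ and do not lie in $B$ (call these the \emph{superedges}) yields a multigraph $H$ on vertex set $B$ whose edges are exactly the superedges; since suppressing degree-$2$ vertices preserves the cyclomatic number, $H$ is connected with feedback edge number $k$ and hence has $|E(H)| = |B| - 1 + k = O(k)$ superedges.

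The key structural observation is that a vertex of degree $2$ forces any simple path through it to use \emph{both} of its incident edges. Consequently, the underlying static path $\pi(P)$ of any temporal $(s,z)$-path $P$ meets $B$ in a set of distinct vertices $s = b_0, b_1, \dots, b_\ell = z$, and between consecutive $b_{i-1}, b_i$ it traverses an \emph{entire} superedge; that is, $\pi(P)$ corresponds to a simple $(s,z)$-path in $H$, i.e.\ a sequence of pairwise distinct superedges. There are at most $(|E(H)|+1)! = 2^{O(k \log k)}$ such sequences, so the number of \emph{distinct} underlying simple paths in $G$ that can carry a temporal $(s,z)$-path is bounded by a function of $k$ alone.

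This suggests the following algorithm. Enumerate all simple $(s,z)$-paths $\sigma$ of $H$; each $\sigma$ determines a single simple path $P^\star_\sigma$ in $G$ (the concatenation of its superedges), whose underlying graph is just a static path. For each $\sigma$ I would count the temporal $(s,z)$-paths whose underlying path equals $P^\star_\sigma$ by running the dynamic program of \cref{thm:ptimeforest} on $P^\star_\sigma$ (after discarding all time-edges not lying on it), which takes $O(|V| \cdot T^2)$ time, and finally output the sum of these counts over all $\sigma$. Correctness follows because the map $P \mapsto \pi(P)$ partitions the set of temporal $(s,z)$-paths by underlying simple path: distinct $\sigma$ give disjoint path-sets, the $\sigma$-term counts exactly those $P$ with $\pi(P) = P^\star_\sigma$, and every temporal path arises for exactly one $\sigma$ by the decomposition above. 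No temporal walk revisiting a vertex is counted, since $P^\star_\sigma$ is a genuine simple path and \cref{thm:ptimeforest} only counts temporal paths along it. The total running time is $2^{O(k \log k)} \cdot |\mathcal{G}|^{O(1)}$, giving fixed-parameter tractability in the feedback edge number $k$.

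The main obstacle I expect lies in the structural part rather than the counting: one must carefully justify both the bounds $|B| = O(k)$ and $|E(H)| = O(k)$, and the rigidity claim that a simple $(s,z)$-path decomposes \emph{uniquely} into full superedges and hence into a simple path of $H$. Once this dictionary between underlying simple paths in $G$ and simple paths of the bounded-size multigraph $H$ is established, the temporal multiplicities within each fixed path shape are handled verbatim by \cref{thm:ptimeforest}, and the enumeration over $\sigma$ is the only place where the parameter enters the running time.
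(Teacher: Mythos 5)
Your overall strategy is the same as the paper's: contract the maximal degree-two paths of the underlying graph into superedges between a small set of branch/terminal/feedback vertices, enumerate the boundedly many underlying simple $(s,z)$-path shapes, and count the temporal paths realising each shape with the dynamic program of \cref{thm:ptimeforest}. The paper's proof (itself only a sketch) follows exactly this outline.

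There is, however, a concrete gap in your structural bound. You define $B$ to contain \emph{every} vertex of degree at least $3$ in $G$ and claim $|B| = O(k)$ by a handshake argument ``on the $2$-core''. The handshake argument bounds the number of vertices of degree at least $3$ \emph{in the $2$-core}, not in $G$: a graph with feedback edge number $0$ can have arbitrarily many degree-$3$ vertices. For instance, take a path from $s$ to $z$ and attach a pendant leaf to each internal vertex; then $k=0$ but $|B| = \Omega(n)$, so your bounds $|B| = O(k)$ and $|E(H)| = |B| - 1 + k = O(k)$ both fail, and with them the $2^{O(k\log k)}$ bound on the enumeration. The missing step --- which the paper performs as its first step --- is to iteratively delete all vertices of degree at most $1$ other than $s$ and $z$ \emph{before} defining $B$; such vertices cannot lie on any $(s,z)$-path, so the count is unaffected, and in the pruned graph the handshake argument does yield $O(k)$ vertices of degree at least $3$ and hence $O(k)$ superedges. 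With this reduction inserted, the rest of your argument (the unique decomposition of an underlying simple path into full superedges, the disjointness of the path-sets over distinct $\sigma$, and the per-shape count via \cref{thm:ptimeforest}) goes through and coincides with the paper's proof.
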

\begin{proof}[Proof Sketch]
Let $(\mathcal{G},s,z)$ be an instance of \TemporalPaths. We adapt an algorithm by~\citet[Theorem~7]{CHMZ21} for the so-called \textsc{Restless Temporal Path} problem. The algorithm consist of four steps (only the last step needs adaptation to our problem):
\begin{enumerate}
	\item Exhaustively remove vertices with degree $\le 1$ from the underlying graph of $\mathcal{G}$ (except $s$ and $z$). Let~$G'$ be the resulting (static) graph.
	\item Compute a minimum feedback edge set $F$ of $G'$. Let $f := |F|$.
	\item Let $V^{\ge 3}$ denote all vertices of $G'$ with degree at least three. 
		Partition the forest $G' - F$ into a set of maximal paths $\mathcal{P}$ with endpoints in $\bigcup_{e\in F} e \cup V^{\ge 3} \cup \{s,z\}$, and intermediate vertices all of degree 2.
		It holds that $|\mathcal{P}| \in \mathcal{O}(f)$~\cite[Lemma~2]{BentertDKNN20}.
	\item Any temporal $(s,z)$-path in $\mathcal{G}$ can be formed with time-edges whose underlying edges are feedback edges from $F$ or form paths from $\mathcal{P}$. 
		Enumerate all $2^{\mathcal{O}(f)}$ sequences of underlying edges that a temporal $(s,z)$-path in $\mathcal{G}$ can follow and for each one count the temporal $(s,z)$-paths following these underlying edges using \cref{thm:ptimeforest}. Add up all path counts.
\end{enumerate}
The correctness follows from the correctness of~\cite[Theorem~7]{CHMZ21} and the observation that due to the exhaustive search, all temporal $(s,z)$-paths in $\mathcal{G}$ are considered and correctly counted. 
\end{proof}

\subsection{Parameterisation by Treewidth and Lifetime}\label{sec:treewidth}

Our goal in this subsection is to demonstrate that \TemporalPaths is in FPT when parameterised simultaneously by the treewidth of the underlying graph and the lifetime; to do this we give an MSO-encoding of the problem and make use of the counting version of Courcelle's theorem for model-checking on relational structures~\cite{courcelleBook}. 

We begin by recalling some basic notation and terminology for relational structures.
A \emph{relational vocabulary} is a finite set $\tau$ of relation symbols, each of which is associated with a natural number, known as its \emph{arity}.  Given any relational vocabulary $\tau$, a \emph{$\tau$-structure} is a pair $\mathcal{A} = (A, \{R^{\mathcal{A}}\mid R \in \tau\})$; $A$ is said to be the \emph{universe} of $\mathcal{A}$ while, for each $R \in \tau$, the \emph{interpretation} $R^{\mathcal{A}}$ of $R$ in $\mathcal{A}$ is a subset of $A^r$, where $r$ is the arity of $R$.  Here we are interested only in relational structures with finite universe, and where the maximum arity of any relation is two.  

For any vocabulary $\tau$, the set of \emph{first-order formulas} is built up from a countably infinite set of individual variables $x_1,x_2,\ldots$, the relation symbols $R \in \tau$, the connectives $\wedge, \vee, \neg$ and the quantifiers $\forall x, \exists x$ ranging over elements of the universe of the structure (for notational convenience we will also use standard shorthand such as $\implies$ and $\in$).  \emph{Monadic second-order logic} additionally allows quantification over subsets of the universe via unary relation variables (which we will call \emph{set variables}); we call a formula in monadic second-order logic an \emph{MSO-formula}.  Given an MSO-formula $\psi$, an individual variable $x$ (respectively a unary relation variable $X$) appearing in $\phi$ is said to be a \emph{free variable} if $x$ (respectively $X$) is not in the scope of a quantifier $\exists x$ or $\forall x$ (respectively $\exists X$ or $\forall X$).  We write $\psi(X_1,\dots,X_j,x_1,\ldots,x_{\ell})$ for a formula $\psi$ with with free relation variables $X_1,\ldots,X_j$ and individual variables $x_1,\ldots,x_{\ell}$.  Given subsets $A_1,\dots,A_j \subseteq A$ (formally these define interpretations of unary relation variables over $A$) and elements $a_1,\dots,a_{\ell} \in A$, we write $\mathcal{A} \models \psi(A_1,\ldots,A_j,a_1,\ldots,a_{\ell})$ to mean that $\mathcal{A}$ satisfies $\psi$ if the variables $X_1,\ldots,X_j,x_1,\ldots,x_{\ell}$ are interpreted as $A_1,\ldots,A_j,a_1,\ldots,a_{\ell}$ respectively.  We further define the set of satisfying assignments of a formula $\psi$ by
\[\psi(\mathcal{A}) := \{(A_1,\ldots,A_j,a_1,\ldots,a_{\ell}): A_1,\ldots,A_j \subseteq A, a_1,\ldots,a_{\ell} \in A, \mathcal{A} \models \psi(A_1,\ldots,A_j,a_1,\ldots,a_{\ell})\}.\] 
This definition can also be extended to formulas with no free variables: in this case $\phi(\mathcal{A})$ is a set containing only the empty tuple if $\mathcal{A} \models \phi$, and the empty set otherwise.

A \emph{tree decomposition} of a $\tau$-structure $\mathcal{A}$ with universe $A$ is a pair $(T,\mathcal{B})$, where $T = (V_T,E_T)$ is a tree and $\mathcal{B} = (B_v)_{v \in V_T}$ is a collection of subsets of $A$ such that:
\begin{enumerate}
    \item for all $a \in A$, the set $\{v \in V_T: a \in B_v\}$ is non-empty and induces a connected subtree in $T$, and
    \item for every relation symbol $R \in \tau$ and every tuple $(a_1,\dots,a_r) \in R^{\mathcal{A}}$, there exists $B_v \in \mathcal{B}$ such that $a_1,\dots,a_r \in B_V$.
\end{enumerate}
As for graphs, the \emph{width} of the tree decomposition $(T,\mathcal{B})$ is $\max_{B_v \in \mathcal{B}} |B_v| - 1$, and the \emph{treewidth} of $\mathcal{A}$ is the minimum width over all tree decompositions of $\mathcal{A}$.

We can now define the monadic second-order counting problem.

\problemdef{\textsc{\#MSO}}{A relational structure $\mathcal{A}$ and an MSO-formula $\psi$.}{Compute $|\psi(\mathcal{A})|$.}

Our strategy is to demonstrate that \TemporalPaths is a special case of this general problem.  To prove our main result we will then apply the following meta-theorem, which can be deduced immediately from \cite[Theorems~6.56 and 9.21]{courcelleBook}, together with the discussion immediately after \cite[Theorem~9.21]{courcelleBook}.

\begin{theorem}[\cite{courcelleBook}]\label{thm:courcelle-count}
Let $\psi(X_1,\dots,X_j,x_1,\dots,x_{\ell})$ be an MSO-formula with free set variables $X_1,\dots,X_j$ and free individual variables $x_1,\dots,x_{\ell}$, and let $\mathcal{A}$ be a relational structure on universe $A$.  Given a width-$w$ tree decomposition of $\mathcal{A}$, the cardinality of the set $\psi(\mathcal{A})$ can be computed in time $f(w,k) \cdot \| \mathcal{A} \|$, where $k$ is the length of the formula $\psi$ and $\| \mathcal{A} \|$ denotes the size of the structure~$\mathcal{A}$.
\end{theorem}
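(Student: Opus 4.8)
The plan is to reduce the counting of satisfying assignments to counting the accepting inputs of a bottom-up tree automaton that reads an encoding of the tree decomposition, and then to evaluate that count by a single dynamic-programming sweep over the decomposition tree. First I would fix a width-$w$ tree decomposition $(T,\mathcal{B})$ of $\mathcal{A}$ and, after the linear-time conversion to a \emph{nice} decomposition, encode the pair $(\mathcal{A},(T,\mathcal{B}))$ as a labelled binary tree over a finite alphabet $\Sigma_{w,\tau}$: each node carries the isomorphism type of the substructure induced on its bag (there are only finitely many such types once $|B_v|\le w+1$ and $\tau$ is fixed), together with the data of which bag elements are shared with the parent. The free variables of $\psi$ are handled by enlarging the alphabet: a choice of $A_1,\dots,A_j\subseteq A$ and $a_1,\dots,a_\ell\in A$ is recorded by annotating, at the node where each element is forgotten, the $j$ bits saying which sets it belongs to and whether it realises one of the individual variables. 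Encoding the individual variables by singleton set variables (the constraint ``$|X|=1$'' is MSO-expressible) lets me treat all free variables uniformly. In this way every element of $\psi(\mathcal{A})$ corresponds to exactly one labelling of $T$ over the enlarged alphabet, so that $|\psi(\mathcal{A})|$ equals the number of labellings encoding a satisfying assignment.

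The technical heart is to build, from $\psi$ and $w$, a deterministic bottom-up tree automaton $\mathcal{M}_{\psi,w}$ that accepts precisely the labelled trees encoding satisfying assignments. I would construct it by the Feferman--Vaught composition method: the state reached at a node $v$ is the MSO-\emph{type} of quantifier rank $q=\mathrm{qr}(\psi)$ of the substructure induced on the elements below $v$, relativised to the partial assignment encoded so far, together with the adhesion data describing how that substructure attaches to the rest along $B_v$. The composition lemma guarantees that the rank-$q$ type of a structure glued from two pieces along a shared bag of size $\le w+1$ is determined by the types of the pieces and the gluing data; this furnishes the transition function, and the accepting states are exactly those types entailing $\psi$. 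Since the number of rank-$q$ types over a signature of arity at most two and over a bag of bounded size is a computable function of $q$, $|\tau|$ and $w$ alone, this bounds $|\mathcal{M}_{\psi,w}|$ by some $g(w,k)$ with $k=|\psi|$.

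With $\mathcal{M}_{\psi,w}$ in hand, I would count accepting labellings by a standard bottom-up dynamic program: for every node $v$ and every automaton state $q$, maintain the number $N(v,q)$ of labellings of the subtree at $v$ that drive the automaton into state $q$. Leaves are initialised directly, and at an internal node $N(v,\cdot)$ is obtained by summing over the constantly many local label choices and over the child states that the transition function sends to each target; the answer is $\sum_{q\ \text{accepting}} N(\mathrm{root},q)$. Because the transition table and accepting set have size depending only on $g(w,k)$, each node is processed in time $f(w,k)$, giving total running time $f(w,k)\cdot|V_T| = f(w,k)\cdot\|\mathcal{A}\|$. The main obstacle is the automaton construction of the second paragraph: making the composition method rigorous for arbitrary relational structures rather than just graphs, proving that the encoding is a genuine bijection between $\psi(\mathcal{A})$ and the accepted labellings (so that no assignment is double-counted and none is missed), and verifying that the state count is a \emph{computable} function of $w$ and $k$. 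Once the automaton is available, the counting dynamic program itself is routine.
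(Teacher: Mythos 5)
This statement is not proved in the paper at all: it is imported verbatim from the Courcelle--Engelfriet monograph (the paper notes it ``can be deduced immediately from Theorems~6.56 and~9.21'' of that book), so there is no in-paper argument to measure you against. Your sketch is, however, a correct outline of the standard proof of this counting meta-theorem, and it differs from the cited source mainly in framing: the book works through its recognizability machinery, parsing a bounded-treewidth structure as an algebraic term and running a finite (counting) automaton on that term, whereas you construct the bottom-up tree automaton directly on an encoded nice tree decomposition via Feferman--Vaught composition of bounded-rank MSO types, in the style of Arnborg--Lagergren--Seese and Flum--Grohe. The two routes are essentially the same algorithm in the end (a state-indexed counting dynamic program over the decomposition tree); the book's framework buys uniformity and extensions (e.g.\ cliquewidth, optimization variants), while your type-based construction is more self-contained. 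You also correctly isolate and resolve the points where a careless version would fail: the automaton must be \emph{deterministic} (your type automaton is), since counting accepting runs of a nondeterministic automaton would not count labellings; each universe element must carry its free-variable annotation at exactly one canonical node (your choice of the forget node in a nice decomposition gives the needed bijection between assignments in $\psi(\mathcal{A})$ and accepted labellings); and the finiteness \emph{and computability} of the set of rank-$q$ types, via Hintikka-style normal forms, is what makes the state bound a computable function $g(w,k)$. The only cosmetic caveat is that the claimed $f(w,k)\cdot\|\mathcal{A}\|$ running time treats arithmetic on the (possibly exponentially large) counts as unit cost, but that convention matches the statement being proved.
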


We now have all the ingredients to prove the main result of this section. We remark that \cref{thm:parameterizedhardness} implies that we cannot hope to obtain fixed-parameter tractability by the treewidth of the underlying graph as the only parameter. The observation that \TemporalPaths is \#P-hard for lifetime one implies that we also cannot remove the treewidth from this parameterisation.

\begin{theorem}
\label{thm:treewidth}
\TemporalPaths is in FPT when parameterised by the combination of the treewidth of the underlying graph and the lifetime.
\end{theorem}
\begin{proof}
It suffices to demonstrate that, given an arbitrary instance $(\mathcal{G},s,z)$ of \TemporalPaths, where $\mathcal{G} = (V,\mathcal{E},T)$, we can efficiently construct a relational structure $\mathcal{A}$ and a collection of at most $T$ MSO-formulas $\psi_1,\dots,\psi_T$ such that the number of temporal $(s,z)$-paths in $\mathcal{G}$ that use exactly edges active at exactly $\ell$ distinct timesteps is equal to $|\psi_{\ell}(\mathcal{A})|$; it follows that the total number of temporal $(s,z)$-paths in $\mathcal{G}$ is $\sum_{1 \le \ell \le T} |\psi_{\ell}(\mathcal{A})|$.  Provided that both the treewidth of $\mathcal{A}$ and the length of each formula $\psi_{\ell}$ can be bounded by functions of the treewidth of the underlying input graph and the lifetime of $\mathcal{G}$, the result will then follow immediately from \cref{thm:courcelle-count}.

We begin by defining the relational structure $\mathcal{A}$ that will encode our instance of \TemporalPaths.  Suppose that $\mathcal{E} = (E_1,\dots,E_T)$ and the underlying graph of $\mathcal{G}$ is $G = (V,E)$ (so $E = \bigcup_{i \in [T]} E_i$).  The universe of $\mathcal{A}$ is
\[
    A = V \cup E \cup [T].
\]
The structure $\mathcal{A}$ has four relation symbols, with the following interpretations:
\begin{itemize}
    \item $\appears$: for $e \in E$ and $i \in [T]$, we have $\appears(e,i)$ if and only if $e \in E_i$;
    \item $\inc$: for $v \in V$ and $e \in E$, we have $\inc(v,e)$ if and only if $v$ is an endpoint of $e$;
    \item $\equal$: for $x,y \in V \cup E$, we have $\equal(x,y)$ if and only if $x$ and $y$ are the same element of the universe;
    \item $\lessthan$: for $t_1,t_2 \in [T]$, we have $\lessthan(t_1,t_2)$ if and only if $t_1 \le t_2$.
\end{itemize}

We now bound the treewidth of $\mathcal{A}$.  Let $(T,\mathcal{B})$ be a tree decomposition for $G$ of width $w$; we will describe a strategy for constructing a tree decomposition $(T,\mathcal{B}')$ for $\mathcal{A}$, indexed by the same tree $T = (V_T,E_T)$.  Fix a vertex $v \in V_T$, and let $B_v$ be the corresponding element of $\mathcal{B}$.  We define the corresponding element of $\mathcal{B}'$ to be 
\[
    B_v' := B_v \cup \{e=vw \in E: v,w \in B_v\} \cup [T].
\]
It is straightforwad to verify that, with this definition, $(T,\mathcal{B}')$ is indeed a tree decomposition for $\mathcal{A}$; moreover, it is immediate that $|B_v'| \le |B_v| + \binom{|B_v|}{2} + T \le w + 1 + \binom{w+1}{2} + T < (w+1)^2 + T$ and hence that the treewidth of $\mathcal{A}$ is at most $(w+1)^2 + T$.  

We now proceed to define the formula $\psi$.  We begin by introducing several subformulas.  We first define two formulas which encode the fact that a vertex is incident with exactly one or two edges from a given set respectively:
\begin{align*}
    \degone(v,E') := & \Big(\exists e \big(e \in E' \wedge \inc(v,e)\big)\Big)\\
    & \qquad \wedge \Big(\forall e_1 \forall e_2 \big((e_1 \in E' \wedge e_2 \in E' \wedge \inc(v,e_1) \wedge \inc(v,e_2) \implies \equal(e_1,e_2) \big)\Big)
\end{align*}
is true if and only if vertex $v$ is incident with exactly one edge in $E'$, whereas
\begin{align*}
    \degtwo(v,E') := & \bigg(\exists e_1 \exists e_2  \Big(e_1 \in E' \wedge e_2 \in E' \wedge \inc(v,e_1) \wedge \inc(v,e_2) \wedge \neg \big(\equal(e_1,e_2)\big)\Big)\bigg) \\
    & \qquad \wedge \bigg(\forall e_3 \forall e_4 \forall e_5 \Big( \big(e_3 \in E' \wedge e_4 \in E' \wedge e_5 \in E' \wedge \inc(v,e_3) \wedge \inc(v,e_4) \wedge \inc(v,e_5) \big) \\
    & \qquad \qquad \implies \big(\equal(e_3,e_4) \vee \equal(e_4,e_5) \vee \equal(e_3,e_5) \big) \Big) \bigg)
\end{align*}
is true if and only if vertex $v$ is incident with exactly two edges in $E'$.  Our next subformula is true if and only if the edges in the set $E'$ form a connected subgraph:
\begin{align*}
    \conn(E') &:= \forall F \bigg(\Big(\exists e_1 \exists e_2 \big(e_1 \in E' \wedge e_1 \in F \wedge e_2 \in E' \wedge \neg(e_2 \in F) \big)\Big) \\
    & \qquad \implies \Big(\exists e_3 \exists e_4 \exists v \big( e_3 \in E' \wedge e_3 \in F \wedge e_4 \in E' \wedge \neg(e_4 \in F) \wedge \inc(v,e_3) \wedge \inc(v,e_4) \big)\Big)\bigg).
\end{align*}
For our final subformula, we use $\degone$, $\degtwo$ and $\conn$ to define a formula that is true if and only if the set $E'$ of edges forms a path in $G$ with endpoints $x$ and $y$:
\begin{align*}
    \ispath(x,y,E') := & \conn(E') \wedge \degone(x,E') \wedge \degone(y,E') \\
    & \qquad \wedge \Bigg(\forall v \in V \Big(\neg\big(\equal(v,x) \vee \equal(v,y)\big) \\ & \qquad \qquad \implies \Big(\big(\forall e \in E' \neg \inc(v,e)\big) \vee \degtwo(v,E')\Big)\bigg)\Bigg).     
\end{align*}
We can now define the formula $\psi_{\ell}$ as
\begin{align*}
    \psi_{\ell}(E_0,\dots,E_{\ell},t_0,\dots,t_{\ell}) := & \\
    \exists v_0 \dots \exists v_{\ell+1} 
    \Bigg( &(v_0 = s) \wedge (v_{\ell + 1} = z) \wedge \bigwedge_{0 \le i \le \ell} \ispath(v_i,v_{i+1},E_i)\\ 
    & \wedge \bigwedge_{0 \le i \le \ell - 1} \lessthan(t_i,t_{i+1}) \wedge \bigwedge_{0 \le i \le \ell} \bigwedge_{e \in E_i} \appears(e,t_i) \\
    & \wedge \exists E' \bigg( \Big(e \in E' \Longleftrightarrow \bigvee_{0 \le i \le \ell} e \in E_i\Big) \wedge \ispath(s,z,E') \bigg) \Bigg).    
\end{align*}
In this formula, the times $t_0,\dots,t_{\ell}$ are the times at which at least one edge in the temporal path is active; each set $E_i$ is the set of edges in the temporal walk active at time $t_i$, and $v_i$ is the vertex at which the path segment active at time $t_i$ joins the path segment active at time $t_{i+1}$.  The first part of the formula,
\[
(v_0 = s) \wedge (v_{\ell + 1} = z) \wedge \bigwedge_{0 \le i \le \ell} \ispath(v_i,v_{i+1},E_i),
\]
ensures that the first and last vertices on our path are $s$ and $z$ respectively, and that the edges in each set $E_i$ do indeed form a path in the underlying graph from $v_i$ to $v_{i+1}$ for each $i$.  The second part of the formula enforces temporal constraints: we verify that the times at which consecutive path segments are traversed are weakly increasing, and that every edge in $E_i$ is active at the appropriate time $t_i$.

The parts of the formula described so far describe a temporal walk from $s$ to $z$ (with appropriate departure and arrival times), but it is possible that this walk revisits vertices.  The purpose of the final part of the formula, 
\[
\exists E' \bigg( \Big(e \in E' \Longleftrightarrow \bigvee_{0 \le i \le \ell} e \in E_i\Big) \wedge \ispath(s,z,E') \bigg),
\]
is to avoid this: $E'$ is the union of all edges used in the temporal walk, and this part of the formula verifies that this set of edges does indeed form a path from $s$ to $z$ in the underlying graph.

It is clear, therefore, that $(E_0,\dots,E_{\ell},t_0,\dots,t_{\ell}) \in \psi_{\ell}(\mathcal{A})$ if and only if there is a temporal path from $s$ to $z$ in $\mathcal{G}$ in which, for $0 \le i \le \ell$, the edges of $E_i$ are traversed at time $t_i$.  Moreover, there is a one-to-one correspondence between tuples $(E_0,\dots,E_{\ell},t_0,\dots,t_{\ell})$ and $(s,z)$-paths in which each set of edges $E_i$ is traversed at time $t_i$.  It follows that $|\psi_{\ell}(\mathcal{A})|$ is the number of temporal $(s,z)$-paths in $\mathcal{G}$ which use edges active at precisely $\ell$ distinct timesteps.  Summing over all permitted choices of $\ell$, we see that the number of temporal $(s,z)$-paths in $\mathcal{G}$ that start at time $t_s$ or later and arrive at time $t_z$ or earlier is precisely equal to
\[
    \sum_{0 \le \ell \le t_z - t_s} |\psi_{\ell}(\mathcal{A})|,
\]
as required.  It remains only to bound the length of each formula $\psi_{\ell}$.  Note first that each of the subformulas $\degone$, $\degtwo$, $\conn$ and $\ispath$ has constant length.  It follows that the length of $\psi_{\ell}$ is $\mathcal{O}(\ell) = \mathcal{O}(t_z - t_s) = \mathcal{O}(T)$, as required.
\end{proof}

\subsection{Parameterisation by Vertex-Interval-Membership-Width}\label{sec:intervalmembership}

In this subsection, we present an FPT algorithm for \TemporalPaths parameterised by the so-called vertex-interval-membership-width of the input temporal graph. The vertex-interval-membership-width is a temporal graph parameter recently introduced by \citet{BumpusM21} which, like the timed feedback vertex number, depends not only on the structure of the underlying graph but also on the assignment of times to edges.  Intuitively, the vertex-interval-membership-width counts the maximum number of vertices that are ``relevant'' at any timestep, where a vertex is considered relevant if it has an incident edge both (weakly) before and after the current timestep (so, for example, a vertex $v$ is relevant only at times when a temporal path could have entered but not yet left $v$). We remark that the vertex-interval-membership-width is unrelated to the feedback vertex number of the underlying graph.

\begin{definition}[\cite{BumpusM21}]
The \emph{vertex interval membership sequence} of a temporal graph $(G,\mathcal{E},T)$ is the sequence $(F_t)_{t \in [T]}$ of vertex-subsets of $G$ where 
\[
    F_t := \{v \in V(G)\mid \exists i \le t \le j \text{ and } u,w \in V(G) \text{ such that } \{u,v\} \in E_i \text{ and } \{w,v\} \in E_j \}.
\]
Note that we allow $u=w$.  The \emph{vertex-interval-membership-width} of $(G,\mathcal{E},T)$ is the integer $\vimw(G,\mathcal{E},T) := \max_{t \in [T]} |F_t|$.  
\end{definition}

Note that every vertex incident with an edge in $E_i$ must belong to $F_i$, and so $|E_i| \le \binom{|F_i|}{2} \le |F_i|^2$.  The vertex interval membership sequence gives us a structure we can use for dynamic programming, which we exploit to obtain the following result.

\begin{theorem}
\label{thm:vertexintervalmembershipwidth}
\TemporalPaths can be solved in time $\mathcal{O}(w^{2w^2 + w}\cdot T)$ where $T$ and $w$ are the lifetime and vertex-interval-membership-width respectively of the input graph.
\end{theorem}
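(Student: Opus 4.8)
The plan is to process the timesteps $1,\dots,T$ in order, using the vertex interval membership sequence $(F_t)_{t\in[T]}$ to keep the dynamic-programming state small. The key structural observation is that, since the time labels along any temporal path are non-decreasing, the time-edges used at each fixed label form a \emph{contiguous} sub-path of the path; consequently, after we have processed every time-edge with label at most $t$, the portion of the path already traversed is a single fragment starting at $s$ and ending at some ``current'' vertex $p$. Moreover, if the path continues after time $t$ then $p$ has an incident edge both at some time $\le t$ and at some time $>t$, so $p\in F_t$; and any already-visited vertex $u\notin F_{t+1}$ has no incident edge at any time $>t$ and can therefore never be revisited. Hence it suffices to remember, among the vertices of $F_t$, which ones have already been used, together with the identity of the current endpoint.

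Concretely, I would define a table $D_t(S,p)$, indexed by a set $S\subseteq F_t$ of already-visited relevant vertices and a current endpoint $p\in S\cup\{s\}$ (with an extra marker recording whether $z$ has already been reached), whose value is the number of partial temporal $(s,z)$-paths that have used exactly the time-edges up to label $t$ and are consistent with this state. To pass from $t-1$ to $t$ I first restrict $S$ to $F_t$, discarding the newly irrelevant visited vertices (which is sound precisely because they can never be revisited), and then extend the fragment from $p$ by a simple sub-path of the static graph $(F_t,E_t)$; note that every edge of $E_t$ has both endpoints in $F_t$, so such a segment stays inside $F_t$, and the empty segment models the path ``waiting'' at $p$. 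For each timestep the numbers of such segments, classified by their new endpoint and by the set of vertices they use that remain in $F_{t+1}$, can be precomputed by an auxiliary subset dynamic program over the at most $w$ vertices of $F_t$ (in the spirit of counting simple paths in a bounded-size graph), avoiding the vertices of $S$.

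The number of temporal $(s,z)$-paths is then read off by summing $D_T$ over all states in which $z$ has been reached. For the running time, the state space at each timestep has size at most $2^{|F_t|}\cdot(|F_t|+1)\le 2^w(w+1)$, and each transition can be carried out within the claimed per-step budget by iterating over the possible segments inside $(F_t,E_t)$ (there are at most $w^2$ candidate edges, and the auxiliary subset dynamic program over $F_t$ costs $\mathcal{O}(2^w w^2)$, comfortably below $w^{2w^2+w}$); multiplying by the $T$ timesteps gives the stated $\mathcal{O}(w^{2w^2+w}\cdot T)$ bound, which is a safe over-estimate. As a preprocessing step one computes the sequence $(F_t)_t$ in polynomial time, and one may normalise (as in \cref{thm:fpttfvs}) so that $s$ and $z$ each carry a single incident time-edge, at times $1$ and $T$ respectively, which makes the start and finish of the path unambiguous.

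The main obstacle will be the correctness argument, i.e.\ establishing a bijection between temporal $(s,z)$-paths in $\mathcal{G}$ and sequences of DP transitions. The delicate points are the treatment of a whole sub-path being traversed within a single timestep---so that the within-step segment counts must neither double-count nor omit any simple segment---and the justification that projecting $S$ onto $F_t$ at each step preserves the count, which rests on the observation that a forgotten vertex is provably unreachable in the remainder of the path. Once the invariant ``the current endpoint and every vertex that the path still needs to leave lie in $F_t$'' is shown to be maintained by every transition, both correctness and the running-time bound follow.
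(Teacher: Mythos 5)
Your proposal is correct and follows essentially the same route as the paper: the state $(S,p)$ with $S\subseteq F_t$ and current endpoint $p$ is exactly the paper's state $(v,X)$ with $X=V(Q)\cap(F_t\setminus\{v\})$, the transition (restrict to $F_t$, then append a simple segment inside $(F_t,E_t)$, with the empty segment modelling waiting) is the paper's recurrence, and your two structural observations (the current endpoint lies in $F_t$; a visited vertex dropped from $F_{t+1}$ can never be revisited) are precisely the facts the paper uses to justify forgetting. The only cosmetic differences are your extra ``$z$ reached'' marker and the subset DP for within-step segments where the paper simply enumerates all paths in $E_t$ in time $\mathcal{O}(w^{2w^2})$; both fit the stated bound.
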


In our dynamic programming algorithm, a state of the bag $F_t$ is a pair $(v,X)$, where $v \in F_t$ and $X \subseteq F_t \setminus \{v\}$.  For any state $(v,X)$ of $F_t$, we compute the number $P_t(v,X)$ of temporal paths $Q$ from $s$ to $v$, arriving by time $t$, such that $V(Q) \cap (F_t \setminus \{v\}) = X$.  Computing all such values $P_t(v,X)$ is clearly sufficient, since the total number of temporal $(s,z)$-paths is $\sum_{Y \subseteq F_T \setminus \{z\}} P_T(z,Y)$.  We compute the values for each bag $F_t$ in turn, assuming for $t \ge 1$ that we have already computed all counts corresponding to $F_{t-1}$.

\begin{proof}%[Sketch]
Note first that we may assume w.l.o.g.\ that $z$ is incident with at least one edge in $E_T$: if not, we can discard all edges in $E_T$ and decrease the lifetime by one without changing the number of $(s,z)$-paths.  It follows that $z \in F_T$.

We proceed by dynamic programming.  For each $t \in [T]$, define the set of states of $F_t$ to be 
\[
    \mathcal{S}_t = \{(v,X): v \in F_t, X \subseteq F_t \setminus \{v\} \}.
\]
For any state $(v,X) \in \mathcal{S}_t$, we define $P_t(v,X)$ to be the number of temporal paths $Q$ from $s$ to $v$, arriving by time $t$, such that $V(Q) \cap (F_t \setminus \{v\}) = X$; for notational convenience, we adopt the convention that $P_t(v,X) = 0$ if $(v,X) \notin \mathcal{S}_t$.  It is clear from this definition that the total number of paths from $s$ to $z$ is
\[
    \sum_{Y \subseteq F_T \setminus \{z\}} P_T(z,Y).
\]
It therefore suffices to compute $P_T(x,V)$ for every state $(x,V) \in \mathcal{S}_T$.  We will in fact compute the path counts for every state of each bag $F_t$ in turn, assuming that we have already computed these counts for $F_{t-1}$.

For $F_1$, given $(v,X) \in \mathcal{S}_1$, the value of $P_1(v,X)$ is the number of paths from $s$ to $v$ using only edges of $E_1$ and precisely the vertices $X$ in $F_1$.  By definition of the vertex interval membership sequence, it is clear that every endpoint of an edge in $E_1$ must belong to $F_1$, so in fact $P_1(v,X)$ counts only $(s,v)$-paths whose vertices are precisely $X \cup \{v\}$ and whose edges belong to $E_1$.  We can consider all possibilities for such a path, and hence compute $P_1(v,X)$ in time $(|X| + 1)! = \mathcal{O}(w^w)$.

Continuing inductively, suppose we have access to $P_{i-1}(v,X)$ for all $(v,X) \in S_{i-1}$.  Fix $(v',X') \in \mathcal{S}_i$.  We claim that
\begin{equation}\label{eqn:vimw-induction}
    P_i(v',X') = \sum_{Y \cap F_i = X'}P_{i-1}(v',Y) + \sum_{\substack{Q \text{ is a path from $u$ to $v$ in } (V,E_i) \\ (Y \cup V(Q))\cap F_i = X'}} P_{i-1}(u,Y).    
\end{equation}
Here, the first term counts the temporal paths from $s$ to $v$ which arrive at time at most $i-1$, and the second counts the temporal paths from $s$ to $v$ arriving at time exactly $i$.  

To see that the first term is correct, note that any vertex in $F_i$ which belongs to a path $P$ arriving at $v'$ by time $i-1$ must also belong to $F_{i-1}$ (since it must be incident with at least one edge active at time at most $i-1$), so every vertex of $(F_i \setminus \{v'\}) \cap V(P)$ must belong to $F_{i-1}$; the path $P$ may contain additional vertices of $F_{i-1} \setminus F_i$.  It follows that the number of temporal $(s,v')$-paths arriving at $v'$ by time $i-1$ and whose intersection with $F_i$ is precisely $X'$ is exactly $\sum_{Y \cap F_i = X'}P_{i-1}(v',Y)$.

For the second term, note that any temporal path from $s$ to $v$ arriving at time exactly $i$ consists of an initial segment that reaches some vertex $u$ at time at most $i$ followed by a terminal segment which must be a path from $u$ to $v$ consisting of edges in $E_i$.  Moreover, observe that every vertex of $F_i$ that is visited on such a path must either be visited by the terminal segment of edges active at time $i$ or must belong to $F_{i-1}$: if it is not visited on the terminal segment, it must be incident with at least one edge active at a time less than or equal to $i-1$, and by assumption (by membership of $F_i$) it is also incident with at least one edge active at a time greater than or equal to $i$, and hence it must belong to $F_{i-1} \cap F_i$.  Finally, we observe that there is a one-to-one correspondence between pairs consisting of an initial and terminal path segment meeting the aforementioned conditions and temporal paths from $s$ to $v$ arriving at time exactly $i$.  Correctness of \eqref{eqn:vimw-induction} now follows immediately.

It remains to bound the time required to compute the path counts for every state of each bag.  Note first that $|\mathcal{S}_i| \le w2^{w-1}$ for each $i$.  Recall that for $\mathcal{S}_1$, we can compute $P_1(v,X)$ for each state $(v,X) \in \mathcal{S}_1$ in time $\mathcal{O}(w^w)$.  
For $i > 1$, we consider the time needed to compute the two terms of \eqref{eqn:vimw-induction} separately.  For the first term, we sum at most $2^w$ previously computed values, requiring time $\mathcal{O}(2^w)$.  For the second term, we need to find all paths $Q$ that use only vertices in $F_i$ (since all endpoints of edges in $E_i$ belong to $F_i$ by definition) and have $v$ as an endpoint.  Since $|F_i| \le w$, we find all such paths in time $\mathcal{O}(w!) = \mathcal{O}(w^w)$.  Given each such path, the corresponding term in the sum can be computed in constant time.  Overall, therefore, the time to compute $P(v,X)$ is again $\mathcal{O}(w^w)$.  Summing over all states in $\mathcal{S}_i$, we see that the time required to compute the path counts for all elements of $\mathcal{S}_i$ is $\mathcal{O}(w^{w + 1}\cdot 2^{w-1}) = \mathcal{O}(w^{2w})$.  Summing over all sets $F_t$, we obtain an overall running time of $\mathcal{O}(w^{2w}\cdot T)$, as required.
\end{proof}

\section{Approximation Algorithms for Temporal Path Counting}\label{sec:approx}

In this section we consider the problems of approximating \TemporalPaths and approximating the temporal betweenness centrality.  For \TemporalPaths, recall from \cref{sec:approx-hard} that there is unlikely to be an FPRAS for \TemporalPaths in general; in \cref{sec:approx-short}, we show that there is however an FPTRAS for \TemporalPaths when the maximum permitted path length is taken as the parameter. This in turn implies the existence of an FPTRAS for \TemporalPaths when restrictions are placed on the structure of the underlying graph that limit the length of the longest path.
We remark that \cref{thm:parameterizedhardness} and \cref{cor:approxhardness} do not rule out exact FPT-algorithms for these parameterisations. We leave open whether stronger hardness results or exact algorithms for this case can be obtained.

In \cref{sec:approx-between} we apply this approximation result to the problem of approximating temporal betweenness: we demonstrate that, whenever we can efficiently approximate \TemporalPaths, we can efficiently estimate the maximum temporal betweenness centrality over all vertices of the input graph.

\subsection{Approximately Counting Short Temporal Paths}\label{sec:approx-short}

In this subsection we consider the complexity of approximately counting $(s,z)$-paths parameterised by the length of the path.

\problemdef{\STemporalPaths}{A temporal graph $\mathcal{G}=(V,\mathcal{E},T)$, two vertices $s,z \in V$, and an integer~$k$.}{Count the temporal $(s,z)$-paths in $\mathcal{G}$ that contain exactly $k$ edges.}

We prove the following result.

\begin{theorem}\label{thm:approx-short-paths}
There is a randomised algorithm which, given as input an instance $(\mathcal{G},s,z)$ of \STemporalPaths together with error parameters $\varepsilon > 0$ and $0 < \delta < 1$, outputs an estimate $\hat{N}$ of the number of temporal $(s,z)$-paths in $\mathcal{G}$ containing exactly $k$ edges; with probability at least $1 - \delta$, $\hat{N}$ is an $\varepsilon$-approximation to the number of $(s,z)$-paths in $\mathcal{G}$ containing exactly $k$ edges.  The running time of the algorithm is $\mathcal{O}(k! e^k \log(1/\delta) \varepsilon^{-2} n^2 T^2)$.
\end{theorem}

The key ingredient in the proof is an efficient algorithm for the \emph{multicoloured} version of this problem, in which the input graph is equipped with a vertex-colouring (not necessarily proper) and we wish to count paths containing exactly one vertex of each colour.

\problemdef{\MTemporalPaths}{A temporal graph $\mathcal{G}=(V,\mathcal{E},T)$, two vertices $s,z \in V$, and a partition of $V \setminus \{s,z\}$ into colour sets $V_1 \uplus \dots \uplus V_{\ell}$.}{Count the number of temporal $(s,z)$-paths that contain exactly one vertex from each colour-set $V_1,\dots,V_{\ell}$.}

\begin{lemma}
\label{lma:partitioned-path}
\MTemporalPaths is solvable in time $\mathcal{O}((\ell+1)!n^2T^2)$.
\end{lemma}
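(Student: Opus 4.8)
The plan is to reduce the problem to counting, separately for each of the $\ell!$ orderings of the colour classes, the temporal walks that visit the colours in that prescribed order, and then to sum the results. The crucial observation that makes this approach correct is that in the partitioned setting a colourful \emph{walk} is automatically a colourful \emph{path}: since $V_1 \uplus \dots \uplus V_{\ell}$ partitions $V \setminus \{s,z\}$ and we insist on visiting exactly one vertex of each colour, the $\ell$ internal vertices lie in pairwise distinct colour classes and hence are pairwise distinct, and they are also distinct from $s$ and $z$ (which belong to no colour class). Thus every such walk has exactly $\ell + 2$ distinct vertices and $\ell + 1$ edges, and no bookkeeping of previously visited vertices is needed. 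Moreover, each colourful $(s,z)$-path visits its colours in a unique order, so summing over all $\ell!$ orderings counts every path exactly once. This is precisely the point where the partitioning buys tractability of \emph{exact} counting, in contrast to the uncoloured case.

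First I would fix an ordering $\sigma = (c_1,\dots,c_{\ell})$ of $[\ell]$ and, writing $V_{c_0} := \{s\}$ and $V_{c_{\ell+1}} := \{z\}$, count the temporal walks $s = v_0, v_1, \dots, v_{\ell}, v_{\ell+1} = z$ with $v_i \in V_{c_i}$ whose time-edges have non-decreasing labels. For this I would build a dynamic programming table $g_i(v,t)$ recording the number of such walk-prefixes that reach $v \in V_{c_i}$ using a last time-edge with label exactly $t$. With the base case $g_0(s,0) = 1$, the recurrence is
\[
  g_i(v,t) = \sum_{u \in V_{c_{i-1}}} \big[(\{u,v\},t) \in \mathcal{E}\big] \cdot \sum_{t' \le t} g_{i-1}(u,t'),
\]
since any walk reaching $v$ via the time-edge $(\{u,v\},t)$ must have arrived at $u$ at some time $t' \le t$. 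The number of colourful walks for the ordering $\sigma$ is then $\sum_{t} g_{\ell+1}(z,t)$.

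Correctness would follow by a routine induction on $i$: the value $g_i(v,t)$ equals the number of temporal walks from $s$ visiting exactly one vertex of each of the colours $c_1,\dots,c_i$ in order and arriving at $v$ at time exactly $t$, and by the observation of the first paragraph every such walk is in fact a path. Summing $\sum_t g_{\ell+1}(z,t)$ over all $\ell!$ orderings therefore gives the total number of colourful $(s,z)$-paths with no double-counting. I do not expect a genuine obstacle here; the only conceptual care required is the walk-equals-path argument, which is what allows the DP state to be just a (vertex, time) pair rather than an exponential-sized record of visited vertices.

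For the running time, each table $g_i$ has $O(nT)$ entries, and evaluating the recurrence naively at a single entry $(v,t)$ costs $O(nT)$ (summing over $u \in V_{c_{i-1}}$ and, for each, over $t' \le t$), so a single ordering is processed in $O\big((\ell+1)\,n^2 T^2\big)$ time across its $\ell+1$ steps. Multiplying by the $\ell!$ orderings yields the claimed bound $\mathcal{O}\big((\ell+1)!\,n^2 T^2\big)$. (A cumulative-sum precomputation over the time coordinate for each $u$ would save a factor of $T$, but this refinement is unnecessary to meet the stated bound.)
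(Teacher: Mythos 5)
Your proposal is correct and follows essentially the same approach as the paper: enumerate the $\ell!$ orderings of the colour classes and, for each, run a dynamic program indexed by (vertex, time) pairs along the colour sequence, with the partition guaranteeing that no visited-vertex bookkeeping is needed. The only cosmetic difference is that you propagate counts forward from $s$ with exact arrival times, whereas the paper computes completion counts backward from $z$; the correctness argument and the $\mathcal{O}((\ell+1)!\,n^2T^2)$ bound are the same.
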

\begin{proof}
Given a permutation $\pi \colon [\ell] \rightarrow [\ell]$, let $\paths(\pi)$ be the number of temporal $(s,z)$-paths on vertices $s,v_1,\dots,v_{\ell},z$ such that $v_i \in V_{\pi(i)}$ for each $i$.  It is clear that the total number of multicolour $(s,z)$-paths is $\sum_{\pi} \paths(\pi)$. Intuitively, this permutation orders the parts and we consider paths that traverse the parts in order.  Since the number of such permutations is $\ell!$, it suffices to demonstrate that we can compute $\paths(\pi)$ for any fixed permutation $\pi$ in time $\mathcal{O}(\ell n^2T^2)$.

To compute $\paths(\pi)$, we use a dynamic programming strategy.  For a vertex $v \in V_{\pi(i)}$ and a time $t \in [T]$, we define $\completions_i(v,t)$ to be the number of $(v,z)$-paths on vertices $v,w_{i+1},\dots,w_{\ell},z$, starting at or later than time $t$, such that $w_j \in V_{\pi(j)}$ for $i+1 \le j \le \ell$.  We claim that it suffices to compute $\completions_1(v,t)$ for each vertex $v \in V_{\pi(1)}$ and $t \in [T]$ in time $\mathcal{O}(\ell n^2T^2)$.  To see this, observe that every $(s,z)$-path respecting the partition and permutation consists of an edge from $s$ to some vertex $v \in V_{\pi(1)}$, followed by a temporal path counted in $\completions_1(v,t)$.  Specifically,
\[
    \paths(\pi) = \sum_{t \in [T]} \sum_{\substack{v \in V_{\pi(1)} \\ sv \in E_t}} \completions_1(v,t).
\]
To compute $\paths(\pi)$ it therefore suffices to sum at most $Tn$ values of $\completions_1(v,t)$, after checking the existence of a single edge corresponding to each value.

We will in fact calculate $\completions_i(u,t)$ for all $u \in V_{\pi(i)}$ and $t \in [T]$ for each $i$ in turn, starting with $\ell$.  It is easy to verify that
\[
    \completions_{\ell}(u,t) = |\{t' \in \{t,\dots,T\}: uz \in E_{t'}\}|,
\]
and hence we can compute $\completions_{\ell}(u,t)$ for all pairs $(u,t)$ in time $\mathcal{O}(\sum_{t \in [T]}|E_t|)$.  Suppose now that we have computed all values $\completions_{i+1}(u,t)$ for $u \in V_{\pi(i+1)}$ and $t \in [T]$; we explain how to compute $\completions_i(w,t')$ for $w \in V_{\pi(i)}$ and $t' \in [T]$.  Again, it is clear that
\[
    \completions_{i}(w,t') = \sum_{t' \le r \le T} \sum_{\substack{u \in V_{\pi(i+1)}\\ wu \in E_r}} \completions_{i+1}(u,r).
\]
Hence, given all values $\completions_{i+1}(u,t)$, we can compute $\completions_i(w,t')$ for any given pair $(w,t')$ in time $\mathcal{O}(Tn)$, and for all such pairs in time $\mathcal{O}(T^2n^2)$.  The total time to compute $\completions_i(u,t)$ for all $1 \le i \le \ell$, $u \in V_{\pi(i)}$ and $t \in T$ is therefore 
\[
    \mathcal{O}\left(\sum_{t' \in [T]}|E_{t'}| + (\ell - 1)n^2T^2\right) = \mathcal{O}\left(\ell n^2T^2\right),
\]
as required.
\end{proof}

Equipped with this algorithm for \MTemporalPaths, we use a standard colour-coding technique to obtain an FPTRAS for \STemporalPaths.  This involves repeatedly generating random colourings (not necessarily proper) of the vertices of $V \setminus \{s,z\}$ using $k-1$ colours; note that a single colouring can clearly be generated in time $\mathcal{O}(nk)$.  For each colouring, we solve the corresponding instance of \MTemporalPaths using the algorithm of \cref{lma:partitioned-path}.  Setting $N$ to be the sum of counts over all colourings, we return $Nk^k/k!$.  Following the argument by \citet[Section 2.1]{alon-motif}, we see that the number of colourings we must generate to obtain an $\varepsilon$-approximation to \STemporalPaths with probability at least $1 - \delta$ is $\mathcal{O}(e^k \log (1/\delta) \varepsilon^{-2})$, giving the result.

Since the maximum possible path length is bounded by a function of either the vertex cover number or the treedepth\footnote{We refer to the book of \citet{sparsitybook} for the definition of treedepth, and a proof that the maximum length of a path in a graph is bounded by a function of its treedepth.} of the underlying input graph, we immediately obtain the following corollary to \cref{thm:approx-short-paths}.

\begin{corollary}\label{cor:paths-short}
\TemporalPaths admits an FPTRAS parameterised by either vertex cover number or treedepth of the underlying input graph.
\end{corollary}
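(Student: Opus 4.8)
The plan is to reduce the general (unbounded-length) counting problem to the bounded-length case already resolved by the preceding corollary. First I would invoke the structural fact (see \cite{sparsitybook}) that there is a computable function $h$ such that every path in a static graph $G$ has length at most $h(p)$, where $p$ denotes either the vertex cover number or the treedepth of $G$. Since every temporal $(s,z)$-path corresponds to a path in the underlying graph, it follows that every temporal $(s,z)$-path in $\mathcal{G}$ uses at most $K := h(p)$ edges. This lets me partition the temporal $(s,z)$-paths by length: writing $N_k$ for the number of temporal $(s,z)$-paths with exactly $k$ edges, the total count is $N = \sum_{k=1}^{K} N_k$, a sum of at most $K = h(p)$ nonnegative terms.

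Next I would apply the FPTRAS for \STemporalPaths from the previous corollary to each value $k \in \{1,\dots,K\}$, with accuracy $\varepsilon$ and boosted confidence parameter $\delta' := \delta/K$, obtaining estimates $\hat{N}_k$. By a union bound, with probability at least $1 - K\delta' = 1 - \delta$ every estimate simultaneously satisfies $(1-\varepsilon)N_k \le \hat{N}_k \le (1+\varepsilon)N_k$. The algorithm outputs $\hat{N} := \sum_{k=1}^{K}\hat{N}_k$.

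The key (and essentially only nontrivial) observation is that summing multiplicative approximations of \emph{nonnegative} quantities preserves the multiplicative guarantee: on the event that all $\hat{N}_k$ are $\varepsilon$-approximations, we have $(1-\varepsilon)\sum_k N_k \le \sum_k \hat{N}_k \le (1+\varepsilon)\sum_k N_k$, so $\hat{N}$ is an $\varepsilon$-approximation to $N$ with probability at least $1-\delta$. This step works precisely because all $N_k \ge 0$; the analogous manipulation fails for \emph{differences}, which is exactly the difficulty flagged in \cref{sec:countvsbetweenness} when relating approximate path counting to betweenness.

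Finally I would verify that the running time has the required FPT form. Each of the $K \le h(p)$ calls runs in time $f(k)\cdot\poly(n,1/\varepsilon,\log(1/\delta'))$ for the computable function $f(k) = \mathcal{O}(k!\,e^k)$ coming from the \STemporalPaths corollary; since $k \le h(p)$ and $\log(1/\delta') = \log(K/\delta) = \log(1/\delta) + \log h(p)$, each call costs at most $f(h(p))\cdot\poly(n,1/\varepsilon,\log(1/\delta))$ once the additive $\log h(p)$ term is absorbed into the function of the parameter. Summing over the at most $h(p)$ values of $k$ yields total running time $g(p)\cdot\poly(n,1/\varepsilon,\log(1/\delta))$ for a computable $g$, as required for an FPTRAS parameterised by $p$. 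I do not anticipate any genuine obstacle here; the whole argument is routine given the prior corollary, and the only point demanding a moment of care is the error-combination step, which is valid solely because the length-$k$ counts are nonnegative.
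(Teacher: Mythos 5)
Your proposal is correct and matches the paper's (implicit) argument: the paper derives this corollary in one line from the bound on path length in terms of vertex cover number or treedepth together with the FPTRAS for \STemporalPaths, which is exactly the decomposition-by-length argument you spell out. Your additional care with the union bound over the $h(p)$ length classes and the observation that summing nonnegative multiplicative approximations preserves the guarantee are the right details to fill in.
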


\subsection{Approximating Temporal Betweenness}\label{sec:approx-between}

%Recall from the discussion in \cref{sec:countvsbetweenness} that 
Observe that it is not clear how to use an approximation algorithm for \TemporalPaths to approximate the temporal betweenness centrality for every vertex in the input graph (we give a detailed discussion in \cref{sec:countvsbetweenness}).  In this section, we address the simpler problem of determining the maximal temporal betweenness centrality of any vertex in the graph: we show that we can efficiently approximate this quantity whenever there is an FPRAS (or FPTRAS) for \TemporalPaths.

\begin{theorem}
\label{thm:between-approx}
Let $\mathcal{C}$ be a class of temporal graphs on which \TemporalPaths admits an FPRAS.  Then $\mathcal{C}$ admits an FPRAS to estimate, given an input temporal graph $\mathcal{G} = (V,\mathcal{E},T) \in \mathcal{C}$, 
$\max_{v \in V} C_B^{(\star)}(v)$,
for $\star \in \{ \text{fastest, foremost}\}$. 
Similarly, if $\mathcal{C}$ is a class of graphs on which there exists an FPTRAS for \TemporalPaths with respect to some parameterisation $\kappa$ then, with respect to the same parameterisation, $\mathcal{C}$ admits an FPTRAS to estimate, given an input temporal graph $\mathcal{G} = (V,\mathcal{E},T) \in \mathcal{C}$, 
$\max_{v \in V} C_B^{(\star)}(v)$, 
for $\star \in \{ \text{fastest, foremost}\}$.
\end{theorem}

The proof relies on the fact that we may assume that at least one vertex has temporal betweenness centrality at least $\frac{1}{n(T+1)}$, where $n$ is the number of vertices; we begin by arguing that we can efficiently identify the inputs for which this lower bound does not hold, and that in these cases the correct answer is in fact $0$.  Using this assumption, we show that the following procedure is likely to produce a good approximation to $\max_{v \in V} C_B^{(\star)}(v)$: for each vertex pair $(s,z)$, sample a large (polynomial) number of $\star$-optimal temporal $(s,z)$-paths, and record the number that contain each vertex $v$ as an internal vertex; after considering all pairs $(s,z)$, we assume that the vertex $v_{\max}$ we have seen most frequently has the maximum betweenness centrality, and return as our estimate the proportion of sampled paths that contain $v_{\max}$.  We note that, applying a general result of \citet{jerrum86generation}, we can assume the existence of an efficient algorithm to sample $\star$-optimal temporal $(s,z)$-paths almost uniformly whenever there is an FPRAS (or FPTRAS).

For the proof of \cref{thm:between-approx}, we need not only to be able to approximately count $\star$-optimal temporal $(s,z)$-paths, but to sample approximately uniformly at random from the set of all $\star$-optimal temporal $(s,z)$-paths in the input graph.  To see that we can do this, we make use of a general result of 
%Jerrum, Valiant and Vazirani \cite{jerrum86generation}:
\citet{jerrum86generation}:
for problems that are \emph{downward self-reducible} (see \cite{jerrum86generation} for the formal definition of this property), the existence of an FPRAS implies the existence of an FPAUS (and vice versa).  We note that an analogous argument implies the same relationship for the existence of an FPTRAS and FPTAUS.  It is easy to verify that the problem \TemporalPaths has this crucial property of downward self-reducibility, since the solution space can be partitioned according to the first time-edge on the path (with solutions including a fixed first time-edge $(\{s,v\},t)$ corresponding to temporal $(v,z)$-paths in the temporal graph obtained by deleting $s$ and all time-edges appearing earlier than $t$).  It follows (using the reasoning in \cref{sec:countvsbetweenness}) that, in this setting, we also have an FPAUS (or FPTAUS) to sample fastest or foremost temporal $(s,z)$-paths.  

\begin{lemma}\label{lma:count-to-sample}
Let $\mathcal{C}$ be a class of temporal graphs on which \TemporalPaths admits an FPRAS.  Then $\mathcal{C}$ also admits FPAUS to sample almost uniformly from the set of all foremost or fastest temporal $(s,z)$-paths.  Similarly, if $\mathcal{C}$ admits an FPTRAS with respect to some parameterisation $\kappa$, then, with respect to the same parameterisation, $\mathcal{C}$ also admits an FPTAUS to sample almost uniformly from the set of all foremost or fastest temporal $(s,z)$-paths.
\end{lemma}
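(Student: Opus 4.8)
The plan is to combine the standard equivalence between approximate counting and almost-uniform sampling for self-reducible problems \cite{jerrum86generation} with the reductions already established in \cref{sec:countvsbetweenness}. The overall structure is: (i) establish an FPAUS/FPTAUS for \emph{all} temporal $(s,z)$-paths directly from the assumed approximate counter, and then (ii) push this through the restrictions used in \cref{obs:basic2} to obtain samplers for foremost and fastest paths.

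First I would verify that \TemporalPaths is \emph{downward self-reducible}, so that the result of \citet{jerrum86generation} applies and an FPRAS (respectively FPTRAS) for \TemporalPaths yields an FPAUS (respectively FPTAUS) for sampling temporal $(s,z)$-paths almost uniformly. As noted in the discussion preceding the statement, the set of temporal $(s,z)$-paths partitions according to the choice of first time-edge $(\{s,v\},t)$: fixing this edge, the remaining solutions are in bijection with temporal $(v,z)$-paths in the temporal graph $\mathcal{G}'$ obtained by deleting $s$ together with all time-edges of label strictly less than $t$. Since $\mathcal{G}'$ has fewer time-edges and its underlying graph is an induced subgraph, membership in $\mathcal{C}$ and any monotone parameter $\kappa$ (such as those we care about) are preserved along the self-reduction, and the number of reduction steps is polynomial; this is exactly what the JVV-style argument requires.

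Next I would reduce sampling foremost and fastest temporal paths to sampling all temporal paths. For \emph{foremost} paths this is immediate: compute the foremost arrival time at $z$ in polynomial time, delete every time-edge with a later label, and observe that in the resulting instance every temporal $(s,z)$-path is foremost. Hence the FPAUS (respectively FPTAUS) for all temporal paths on this restricted instance is already an almost-uniform sampler for the foremost paths of the original instance.

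The more delicate case, which I expect to be the main obstacle, is \emph{fastest} paths. Following \cref{obs:basic2}, I would create one restricted instance $\mathcal{G}_{t_0}$ for each admissible start time $t_0$ (keeping only time-edges in the window $[t_0, t_0 + t_f]$, where $t_f$ is the minimum fastest duration), so that each fastest path survives in exactly one instance and every surviving path is fastest. To sample almost uniformly from this disjoint union I would proceed in two stages: use the FPRAS (respectively FPTRAS) to obtain approximate counts $\widehat{N}_{t_0}$ for each $\mathcal{G}_{t_0}$, select an instance with probability proportional to $\widehat{N}_{t_0}$, and then draw a path from the chosen instance with the FPAUS (respectively FPTAUS). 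The technical work is to show this composition is itself almost uniform: the relative error from selecting instances in proportion to approximate rather than exact counts must be absorbed into the target sampling tolerance, which is achieved by running the counting and sampling subroutines with error parameters polynomially related to the target tolerance, with the failure probabilities over the polynomially many instances controlled by a union bound. Combining the three cases yields the claimed FPAUS and FPTAUS.
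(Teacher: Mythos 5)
Your proposal matches the paper's argument: the paper likewise establishes downward self-reducibility of \TemporalPaths via the choice of first time-edge, invokes the Jerrum--Valiant--Vazirani equivalence to turn the assumed FPRAS (resp.\ FPTRAS) into an FPAUS (resp.\ FPTAUS) for all temporal $(s,z)$-paths, and then appeals to the reductions of \cref{sec:countvsbetweenness} (truncation at the foremost arrival time; the family of windowed instances for fastest paths) to transfer the sampler to $\star$-optimal paths. Your explicit two-stage sampler for the fastest case --- choosing a window with probability proportional to its approximate count and then sampling within it, with the error parameters tuned to the target tolerance --- is precisely the detail the paper leaves implicit, so the proposal is correct and essentially identical in approach.
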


We will also need two standard Chernoff bounds.  

\begin{lemma}[{\cite[Theorem~4.4 and Corollary~4.6]{ProbComp}}]\label{lem:chernoff}
	Suppose $X$ is a binomial random variable with mean $\mu$. Then:
	\begin{enumerate}
	    \item for all $0 < \varepsilon \le 1$, 
	\[ \mathbb{P}(|X - \mu| \ge \varepsilon\mu) \le 2e^{-\varepsilon^2\mu/3};\]
	    \item for all $R \ge 6 \mu$, 
	    \[ \mathbb{P}(X \ge R) \le 2^{-R}. \]
	\end{enumerate}
\end{lemma}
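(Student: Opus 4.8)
The plan is to prove both bounds by the standard exponential-moment (Chernoff) method. Since $X$ is binomial, I would write $X = \sum_{i=1}^n X_i$, where the $X_i$ are independent Bernoulli variables with $\mathbb{P}(X_i = 1) = p_i$ and $\sum_i p_i = \mu$. For any $\lambda > 0$, Markov's inequality applied to $e^{\lambda X}$ gives $\mathbb{P}(X \ge a) \le e^{-\lambda a}\,\mathbb{E}[e^{\lambda X}]$. By independence, $\mathbb{E}[e^{\lambda X}] = \prod_i \big(1 + p_i(e^\lambda - 1)\big)$, and using $1 + x \le e^x$ termwise this is at most $\exp\big((e^\lambda - 1)\mu\big)$. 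This single moment-generating-function bound is the engine behind both parts.

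For the first bound I would handle the two tails separately. For the upper tail, substituting $a = (1+\varepsilon)\mu$ and optimising over $\lambda$ (the minimiser is $\lambda = \ln(1+\varepsilon)$) yields the classical estimate
\[
\mathbb{P}(X \ge (1+\varepsilon)\mu) \le \left(\frac{e^\varepsilon}{(1+\varepsilon)^{1+\varepsilon}}\right)^{\mu}.
\]
An analogous computation with $\lambda < 0$ gives $\mathbb{P}(X \le (1-\varepsilon)\mu) \le \big(e^{-\varepsilon}/(1-\varepsilon)^{1-\varepsilon}\big)^\mu$. The remaining work is purely analytic: one checks that $f(\varepsilon) := \varepsilon - (1+\varepsilon)\ln(1+\varepsilon) \le -\varepsilon^2/3$ and $g(\varepsilon) := -\varepsilon - (1-\varepsilon)\ln(1-\varepsilon) \le -\varepsilon^2/2$ for $0 < \varepsilon \le 1$, which follows by comparing Taylor expansions (or by differentiating and checking the sign). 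These give upper- and lower-tail bounds of $e^{-\varepsilon^2\mu/3}$ and $e^{-\varepsilon^2\mu/2}$ respectively; since $e^{-\varepsilon^2\mu/2} \le e^{-\varepsilon^2\mu/3}$, a union bound over the two tails yields $\mathbb{P}(|X - \mu| \ge \varepsilon\mu) \le 2e^{-\varepsilon^2\mu/3}$, as claimed.

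For the second bound I would start from the same upper-tail estimate with $R = (1+\delta)\mu$, where the hypothesis $R \ge 6\mu$ forces $1 + \delta \ge 6$. Bounding $e^\delta \le e^{1+\delta}$ in the numerator gives
\[
\mathbb{P}(X \ge R) \le \left(\frac{e}{1+\delta}\right)^{(1+\delta)\mu} = \left(\frac{e}{1+\delta}\right)^{R}.
\]
Since $1 + \delta \ge 6 > 2e$, we have $e/(1+\delta) \le 1/2$, and therefore $\mathbb{P}(X \ge R) \le 2^{-R}$.

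I expect the only real obstacle to be the elementary-but-fiddly analytic inequalities for $f$ and $g$ needed to pass from the ratio form $e^\varepsilon/(1+\varepsilon)^{1+\varepsilon}$ to the clean bound $e^{-\varepsilon^2\mu/3}$; everything else is a mechanical application of Markov's inequality together with the factorisation of the moment generating function. As this is a standard textbook result, in the paper I would simply cite it, but the above is the route I would follow to verify it from first principles.
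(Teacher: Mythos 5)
Your proposal is correct and is essentially the argument the paper relies on: the paper gives no proof of its own but cites Theorem~4.4 and Corollary~4.6 of Mitzenmacher and Upfal, and your route---the moment-generating-function bound $\mathbb{E}[e^{\lambda X}]\le\exp((e^\lambda-1)\mu)$, optimisation at $\lambda=\ln(1+\varepsilon)$, the analytic inequalities $\varepsilon-(1+\varepsilon)\ln(1+\varepsilon)\le-\varepsilon^2/3$ and $-\varepsilon-(1-\varepsilon)\ln(1-\varepsilon)\le-\varepsilon^2/2$, and for the second part the observation that $R\ge 6\mu$ gives $e/(1+\delta)\le e/6<1/2$---is exactly the proof found there. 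Nothing is missing; all steps check out, including the constant comparison $6>2e$.
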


We now have all the ingredients to prove \cref{thm:between-approx}.

\begin{proof}[Proof of \cref{thm:between-approx}]
We only prove the result for an FPRAS, as the argument for the FPTRAS is identical (the only change is that we additionally allow the running times of all algorithms to be of the form $f(k)n^{\mathcal{O}(1)}$).  We fix an optimality measure $\star \in \{\text{fastest, foremost}\}$ on which the temporal betweenness will be based. 

Assume that $V = \{v_1,\dots,v_n\}$, and fix values $\varepsilon,\delta > 0$; we may assume w.l.o.g.\ that our error parameter $\varepsilon$ satisfies $\varepsilon < 1$, and that $n > 12$.  We shall describe a randomised algorithm which, with probability at least $2/3$, returns an $\varepsilon$-approximation to $b^*:= \max_{v \in V} C_B^{(\star)}(v)$.  Using standard probability amplification techniques (running the procedure  $\mathcal{O}(\log(\delta^{-1}))$ times and returning the median output) we can increase the success probability to $1 - \delta$ without violating the running time bounds.

We begin by describing one special case which we handle differently; this will allow us to assume a lower bound on $\max_{v \in V} C_B^{(\star)}(v)$ for the rest of the proof.  Our algorithm starts by considering each vertex pair $(s,z)$ in turn, and determining in polynomial time whether there is any $\star$-optimal temporal $(s,z)$-path that contains at least one internal vertex: we do this by first using a polynomial-time algorithm to find a single $\star$-optimal temporal $(s,z)$-path, then deleting all appearances of the edge $\{s,z\}$ and running the algorithm again to determine whether there is an equally good path with at least one internal vertex.  If we find that there is no pair $(s,z)$ such that some $\star$-optimal temporal $(s,z)$-path contains at least one internal vertex, it is immediate from the definition of temporal betweenness centrality that $\max_{v \in V}C_B^{(\star)}(v) = 0$, so in this case we return zero and terminate.

If we have not terminated at this point, we know that there exists some pair $(s,z) \in V^2$ such that $N_{sz} := \sum_{v \in V}\sigmageneric{sz}(v) \ge 1$.  Note that the number of $\star$-optimal temporal $(s,z)$-paths with no internal vertex is at most the number of appearances of the edge $\{s,z\}$ in the input graph, and thus is upper bounded by $T$.  It follows that $\sigmageneric{sz} \le N_{sz} + T$.  Summing the temporal betweenness centrality over all vertices, we see that
\[
    \sum_{v \in V} C_B^{(\star)}(v) \ge \frac{N_{sz}}{N_{sz} + T} \ge \frac{1}{1+T}.
\]
It therefore follows by the pigeonhole principle that there exists some vertex $v$ such that $C_B^{(\star)}(v) \ge \frac{1}{n(T+1)}$.  We shall exploit this fact later in the proof.

The remainder of our algorithm proceeds as follows; note that, by Lemma \ref{lma:count-to-sample}, there exists an FPAUS to sample  $\star$-optimal temporal $(s,z)$-paths in elements of $\mathcal{C}$, and we shall denote the output of a single run of this FPAUS on input graph $\mathcal{G}$ and vertices $s$ and $z$ with error parameter $\delta'$ by $\sample(\mathcal{G},s,z,\delta')$.  
\begin{enumerate}
    \item Initialise an $n$-element array $C$ with zeros.
    \item For each pair $(s,z) \in V^2$:
    \begin{itemize}
        \item If there exists at least one temporal $(s,z)$-path in $\mathcal{G}$, repeat $\ell := 300 000 \varepsilon^{-3} (T+1) n^3 \ln n$ times: 
            \begin{itemize}
                \item $P \leftarrow \sample(\mathcal{G},s,z,\varepsilon/20)$;
                \item For $1 \le i \le n$, if $v_i \notin \{s,z\}$ and $v_i$ belongs to $P$ then increment $C[i]$.
            \end{itemize}
    \end{itemize}
    \item Select $i \in [n]$ such that $C[i] = \max_{1 \le j \le n} C[j]$.  Return $C[i]/\ell$.
\end{enumerate}
It is clear that this procedure runs in time polynomial in $n$ and $\varepsilon^{-1}$ (note that, in the interests of a simple proof, we have made no attempt to optimise constants in the running time).  We claim that in fact, with probability at least $2/3$, $C[i]/\ell$ is an $\varepsilon$-approximation to $b^* := \max_{v \in V} C_B^{(\star)}(v)$. 

We begin by introducing some notation.  Given $s,z,v \in V$, we will write 
\[
b_{(s,z)}(v) = \begin{cases}
                \sigmageneric{sz}(v)/\sigmageneric{sz}    & \text{if $s \neq v \neq z$ and there exists at least one temporal $(s,z)$-path;} \\
                   0    & \text{otherwise};
                \end{cases}
\] 
thus $C_B^{(\star)}(v) = \sum_{(s,z) \in V^2} b_{(s,z)}(v)$.  For any vertex $v_i \in V$, let the random variable $X_i^{(s,z)}$ be the number of temporal $(s,z)$-paths we sample that contain $v_i$ as an internal vertex.  Set $X_i = \sum_{(s,z) \in V^2} X_i^{(s,z)}$ to be the total number of temporal paths we sample containing $v_i$ as an internal vertex.

We will consider the contribution to our path count $X_i$ from each of the random variables $X_i^{(s,z)}$. If none of these random variables has an expectation that is too small, we can argue that it is unlikely that any of them will take a value far from its expectation, so when we take the sum we will obtain a good approximation to $X_i$.  However, some of the random variables may have very low expectation, in which case they are likely to take values that differ from their expectations by a large multiplicative factor.  We treat such random variables separately, arguing that they are unlikely to take a large absolute value, and therefore that the total contribution from all such random variables is likely to have a negligible impact on the overall sum.

We define two events that correspond to the random variables behaving well in this way.  Let $\mathcal{F}_1$ be the event that, for every pair $(s,z) \in V^2$ and $i \in [n]$ with $\mathbb{E}(X_i^{(s,z)}) \ge \varepsilon \ell/ 60(T+1)n^3$, we have $|X_i^{(s,z)} - \mathbb{E}(X_i^{(s,z)})| \le (\varepsilon/20)\mathbb{E}(X_i^{(s,z)})$, and let $\mathcal{F}_2$ be the event that, for every pair $(s,z) \in V^2$ and $i \in [n]$ with $\mathbb{E}(X_i^{(s,z)}) < \varepsilon \ell/ 60(T+1)n^3$, we have $X_i^{(s,z)} \le \varepsilon \ell / 10 (T+1) n^3$.  Later, we will derive a lower bound on the probability that both $\mathcal{F}_1$ and $\mathcal{F}_2$ hold simultaneously.  First, we show that, conditioned on this event $\mathcal{F}_1 \cap \mathcal{F}_2$, the value returned by the algorithm will be an $\varepsilon$-approximation to $b^*$.

This argument proceeds in two stages.  First, we show that the vertex $v_i$ such that we return $C[i]$ at Step~3 of the algorithm satisfies $C_B^{(\star)}(v_i) \ge (1 - \varepsilon/2)b^*$.  Second, we show that $C[i]/\ell$ is an $(\varepsilon/2)$-approximation to $C_B^{(\star)}(v_i)$.  Together, these two facts imply that we do indeed return a $\varepsilon$-approximation to $b^*$, as required.  

For both stages, we will make use of bounds on $X_i$. Before deriving these bounds, we note that the probability $p_i^{(s,z)}$ that a single temporal $(s,z)$-path sampled using our FPAUS contains $v_i$ as an internal vertex satisfies $(1 - \varepsilon/20)b_{(s,z)}(v_i) \le p_i^{(s,z)} \le (1+\varepsilon/20)b_{(s,z)}(v_i)$.  Since each invocation of $\sample(\mathcal{G},s,z,\varepsilon/20)$ uses independent randomness, we see that $X_i^{(s,z)}$ has binomial distribution $Bin(\ell,p_i^{(s,z)})$.

Now, for an upper bound on $X_i$, note that (assuming $\mathcal{F}_1 \cap \mathcal{F}_2$ holds) we have
\begin{align}\label{eqn:X_i-upper}
    X_i &= \sum_{\substack{(s,v) \in V^2 \\ \mathbb{E}(X_i^{(s,z)}) \ge \varepsilon \ell/ 60(T+1)n^3}} X_i^{(s,z)} + \sum_{\substack{(s,v) \in V^2 \\ \mathbb{E}(X_i^{(s,z)}) < \varepsilon \ell/ 60(T+1)n^3}} X_i^{(s,z)} \nonumber \\
        &\le \sum_{(s,v) \in V^2} (1 + \varepsilon/20) \mathbb{E}(X_i^{(s,z)}) + \sum_{(s,v) \in V^2} \frac{\varepsilon \ell}{10 (T+1) n^3} \nonumber \\
        & \le \sum_{(s,v) \in V^2} (1 + \varepsilon/20)^2 \ell b_{(s,z)}(v_i) + \frac{\varepsilon \ell}{10 (T+1) n} \nonumber \\
        & \le (1 + 3 \varepsilon/20) \ell \sum_{(s,v) \in V^2} b_{(s,z)}(v_i) + \frac{\varepsilon \ell}{10 (T+1) n} \nonumber \\
        & = (1 + 3 \varepsilon/20) \ell C_B^{(\star)}(v_i) + \frac{\varepsilon \ell}{10 (T+1) n}.
\end{align}
On the other hand, for a lower bound, we observe that
\begin{align}\label{eqn:X_i-lower}
    X_i &\ge \sum_{\substack{(s,v) \in V^2 \\ \mathbb{E}(X_i^{(s,z)}) \ge \varepsilon \ell/ 60(T+1)n^3}} X_i^{(s,z)} \nonumber \\
        &\ge \sum_{\substack{(s,v) \in V^2 \\ \mathbb{E}(X_i^{(s,z)}) \ge \varepsilon \ell/ 60(T+1)n^3}} (1 - \varepsilon/20) \mathbb{E}(X_i^{(s,z)}) \nonumber \\
        &= \sum_{(s,z) \in V^2} (1 - \varepsilon/20) \mathbb{E}(X_i^{(s,z)}) \quad - \sum_{\substack{(s,v) \in V^2 \\ \mathbb{E}(X_i^{(s,z)}) < \varepsilon \ell/ 60(T+1)n^3}} (1 - \varepsilon/20) \mathbb{E}(X_i^{(s,z)}) \nonumber \\
        &\ge \sum_{(s,z) \in V^2} (1 - \varepsilon/20) \mathbb{E}(X_i^{(s,z)}) \quad - \sum_{(s,v) \in V^2} \frac{(1 - \varepsilon/20) \varepsilon \ell}{60(T+1)n^3} \nonumber \\
        & \ge \sum_{(s,v) \in V^2} (1 - \varepsilon/20)^2 \ell b_{(s,z)}(v_i) - \frac{\varepsilon \ell}{60 (T+1) n} \nonumber \\
        & \ge (1 - \varepsilon/10) \ell \sum_{(s,v) \in V^2} b_{(s,z)}(v_i) \nonumber - \frac{\varepsilon \ell}{60 (T+1) n} \\
        & = (1 - \varepsilon/10) \ell C_B^{(\star)}(v_i) - \frac{\varepsilon \ell}{60 (T+1) n}.
\end{align}

We now prove that, conditioned on the event $\mathcal{F}_1 \cap \mathcal{F}_2$, the vertex $v_i$ such that we return $C[i]/\ell$ at Step~3 of the algorithm satisfies $C_B^{(\star)}(v_i) \ge (1 - \varepsilon/2)b^*$.  To see this, suppose that we have $C_B^{(\star)}(v_i) < (1 - \varepsilon/2)b^*$, and fix a vertex $v_j$ such that $C_B^{(\star)}(v_j) = b^*$.  It suffices to demonstrate that $X_j > X_i$, as in this case we will not return a value corresponding to $v_i$.  Using the bounds in \eqref{eqn:X_i-upper} and \eqref{eqn:X_i-lower}, we see that
\begin{align*}
    X_j - X_i &\ge (1 - \varepsilon/10) \ell C_B^{(\star)}(v_j) - \frac{\varepsilon \ell}{60(T+1)n} - (1 + 3 \varepsilon/20) \ell C_B^{(\star)}(v_i) - \frac{\varepsilon \ell}{10 (T+1) n} \\
        &\ge (1 - \varepsilon/10) \ell b^* - (1 + 3 \varepsilon/20)(1 - \varepsilon/2)\ell b^* - \frac{7 \varepsilon \ell}{60 (T+1) n} \\
        &\ge (1 - \varepsilon/10) \ell b^* - (1 - 7 \varepsilon/20)\ell b^* - \frac{7 \varepsilon \ell}{60(T+1)n}\\
        &= \frac{\varepsilon}{4}\ell b^* - \frac{7 \varepsilon \ell}{60 (T+1) n}.
\end{align*}
Recall that we know $b^* \ge \frac{1}{(T+1)n}$; it follows that 
\[
    X_j - X_i \ge \varepsilon \ell b^* \left(\frac{1}{4} - \frac{7}{60} \right) > 0,
\]
as required.

We now proceed to the second stage of the argument, showing that, conditioned on $\mathcal{F}_1 \cap \mathcal{F}_2$, if $v_i$ is the vertex corresponding to the value $C[i]/\ell$ returned in Step 3 of the algorithm, $X_i/\ell$ is a $\varepsilon/2$-approximation to $C_B^{(\star)}(v_i)$.  From the bounds \eqref{eqn:X_i-upper} and \eqref{eqn:X_i-lower} on $X_i$, it follows immediately that
\[
    \left| X_i - \ell C_B^{(\star)}(v_i) \right| \le \frac{3 \varepsilon \ell C_B^{(\star)}(v_i)}{20} + \frac{\varepsilon \ell}{10(T+1)n}.
\]
Recall that, from the previous stage, we know that $C_B^{(\star)}(v_i) \ge (1 - \varepsilon/2)b^* \ge b^*/2$, using the assumption that $\varepsilon < 1$.  Since we know that $b^* \ge \frac{1}{(T+1)n}$, it follows that $C_B^{(\star)}(v_i) \ge \frac{1}{2(T+1)n}$, and so
\[
    \left| X_i - \ell C_B^{(\star)}(v_i) \right| \le \frac{3 \varepsilon \ell C_B^{(\star)}(v_i)}{20} + \frac{\varepsilon \ell C_B^{(\star)}(v_i)}{5} = \frac{7}{20} \varepsilon \ell C_B^{(\star)}(v_i) < \frac{1}{2} \varepsilon \ell C_B^{(\star)}(v_i).
\]
We therefore see that $|X_i/\ell - C_B^{(\star)}(v_i)| < (\varepsilon/2)C_B^{(\star)}(v_i)$, and hence $X_i/\ell$ is an $(\varepsilon/2)$-approximation to $C_B^{(\star)}(v_i)$, as required.

We have therefore demonstrated that our algorithm returns an $\varepsilon$-approximation to $b^*$ whenever the event $\mathcal{F}_1 \cap \mathcal{F}_2$ holds; it remains to prove that this event holds with probability at least $2/3$.  We shall in fact prove that each of $\mathcal{F}_1$ and $\mathcal{F}_2$ holds with probability at least $5/6$, so that the claimed result follows immediately by a union bound.  

We start by considering $\mathcal{F}_1$.  Let $Y$ be a binomial random variable whose expectation is at least $\varepsilon \ell / 60 (T+1) n^3$.  By Lemma~\ref{lem:chernoff}(i), we have
\begin{align*}
    \mathbb{P}\left(|Y - \mathbb{E}(Y)| \ge (\varepsilon/20) \mathbb{E}(Y) \right) &\le 2 \exp \left( - \frac{\varepsilon^2}{400} \frac{\mathbb{E}(Y)}{3} \right) \\
    &\le  2 \exp \left( - \frac{\varepsilon^3 \ell}{72000 (T+1) n^3}\right)\\
    &< 2 \exp \left( - 4 \ln n \right) \\
    &< 1/6n^3.
\end{align*}
Taking a union bound over all possible choices of $s$, $z$ and $i$, we therefore see that the probability that any random variable $X_i^{(s,z)}$ whose expectation is at least $\varepsilon \ell / 60(T+1)n^3$ takes a value that differs from its expectation by more than $(\varepsilon/20)\mathbb{E}(X_i^{(s,z)})$ is at most $1/6$.  It follows immediately that $\mathcal{F}_1$ holds with probability at least $5/6$.

Finally, we consider $\mathcal{F}_2$.  Let $Y$ be any binomial random variable whose expectation is at most $\varepsilon \ell/ 60 (T+1) n^3$.  By Lemma~\ref{lem:chernoff}(ii), we have
\[ \mathbb{P}\left(Y \ge \frac{\varepsilon \ell}{10 (T+1) n^3}\right) \le 2^{- \varepsilon \ell / 10 (T+1) n^3} = 2^{- 30 000 \varepsilon^{-2} \ln n} < 1/6n^3. \]
Again taking a union bound over all possible choices of $s$, $z$ and $i$, we therefore see that the probability that any random variable $X_i^{(s,z)}$ whose expectation is at most $\varepsilon \ell/ 60 (T+1) n^3$ takes a value greater than $\varepsilon \ell/ 10 (T+1) n^3$ is at most $1/6$, so $\mathcal{F}_2$ holds with probability at least $5/6$.  This completes the proof.
\end{proof}

Combining \cref{thm:between-approx} with \cref{cor:paths-short} gives the following immediate corollary.

\begin{corollary}
There is an FPTRAS which, given as input a temporal graph $\mathcal{G} = (V,\mathcal{E},T)$, computes an approximation to $\max_{v \in V} C_B^{(\star)}(v)$ (for $\star \in \{\text{fastest, foremost}\}$), parameterised by either the vertex cover number or treedepth of the underlying input graph.
\end{corollary}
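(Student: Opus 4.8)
The plan is to chain together the two approximation results already in hand, since the corollary is an immediate composition. First I would fix an optimality measure $\star \in \{\text{fastest}, \text{foremost}\}$ and let $\kappa$ denote either the vertex cover number or the treedepth of the underlying input graph. By \cref{cor:paths-short}, \TemporalPaths admits an FPTRAS with respect to this parameterisation $\kappa$; equivalently, the class of all temporal graphs admits such an FPTRAS under $\kappa$.

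Next I would feed this FPTRAS into the FPTRAS half of \cref{thm:between-approx}, instantiating the abstract parameterisation appearing there as precisely this $\kappa$. That theorem then immediately supplies an FPTRAS, with respect to the same $\kappa$, for estimating $\max_{v \in V} C_B^{(\star)}(v)$. Since this argument goes through verbatim for both admissible choices of $\kappa$ (vertex cover number and treedepth) and both choices of $\star$, the corollary follows.

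The one point requiring a check — and it is genuinely routine — is that the parameterisations line up: \cref{thm:between-approx} is phrased for an arbitrary parameterisation $\kappa$ and returns an FPTRAS under that same $\kappa$, while \cref{cor:paths-short} delivers FPTRASes exactly for the vertex-cover-number and treedepth parameterisations of the underlying graph. As these are structural parameters of the underlying graph in both statements, the two results compose without modification, and there is no real obstacle here; all of the substantive work was already discharged in the proofs of \cref{thm:between-approx} and \cref{cor:paths-short}.
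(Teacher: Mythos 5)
Your proposal is correct and matches the paper exactly: the paper derives this corollary as an immediate combination of \cref{thm:between-approx} (the FPTRAS half) with \cref{cor:paths-short}, which is precisely the composition you describe. The check that the parameterisations line up is indeed routine, and nothing further is needed.
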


\section{Conclusion}
 In this work, we initiate the systematic study of the parameterised and approximation complexity of \TemporalPaths. We present parameterised and approximation hardness results and complement them with several parameterised exact and approximation algorithms. 
%Many natural questions remain open.

In terms of improving our results, we conjecture that it is possible to prove \#W[1]-hardness instead of $\oplus$W[1]-hardness for \TemporalPaths parameterised by the feedback vertex number of the underlying graph. Furthermore, we leave open whether our parameterised approximation results for vertex cover number or treedepth of the underlying graph can be improved from a classification standpoint by obtaining exact algorithms, or whether we can also show parameterised hardness for those cases.

We leave open to what extent our results transfer to the problem of counting \emph{strict} temporal $(s,z)$-paths, where the labels on the time-edges have to be strictly increasing. We conjecture that most 
of our results hold for the strict case. In fact, we believe that the MSO formulation used to obtain fixed-parameter tractability for the treewidth of the underlying graph combined with the lifetime can be simplified: in the strict case, a first-order formula should suffice, which would lead to fixed-parameter tractability in terms of the lifetime on any class of nowhere-dense graphs~\cite{grohe18}, or for the combined parameter of cliquewidth and lifetime~\cite{courcelle01}.  

We conjecture that our polynomial-time algorithm for the case where the underlying graph is a forest (\cref{sec:ptimealgs}) can be extended to the case where the underlying graph is series-parallel~\cite{eppstein1992parallel}. Recall that a forest can be seen as a series-parallel graph where only series compositions are used. The idea would be to extend the dynamic program to parallel compositions, exploiting the observation that any temporal path can visit the terminal vertices of a series-parallel graph at most once.

Finally, we believe that our FPT-algorithms presented in \cref{sec:forestgeneralization} for the timed feedback vertex number and feedback edge number of the underlying graph, respectively, can be adapted to count other types of temporal $(s,z)$-paths for which counting is in general \#P-hard.
This leads us to believe that the algorithms can be modified to count restless temporal $(s,z)$-paths~\cite{CHMZ21} and possibly also to count delay-robust $(s,z)$-routes~\cite{FMNR22}, since both of these path types can be found in polynomial time when the underlying graph of the input temporal graph is a forest.

\bibliographystyle{abbrvnat}
\bibliography{betweenrefs}

\end{document}